\def\dOi{13(1:11)2017}
\theoremstyle{plain}
\newtheorem{theorem}[thm]{Theorem}
\newtheorem{lemma}[thm]{Lemma}
\newtheorem{proposition}[thm]{Proposition}
\newtheorem{corollary}[thm]{Corollary}
\theoremstyle{definition}
\newtheorem{definition}[thm]{Definition}
\newtheorem{example}[thm]{Example}
\newtheorem{remark}[thm]{Remark}
\newenvironment{proof*}{\proof}{}
\newcommand{\BEGIN}{\sym{B}}
\newcommand{\END}{\sym{E}}
\newcommand{\REWRITE}{\sym{W}}
\newcommand{\GORIGHT}{\sym{R}}
\newcommand{\GUESS}{\sym{G}}
\newcommand{\ROTATE}{\sym{O}}
\newcommand{\CUT}{\sym{C}}
\newcommand{\MOVELEFT}{\sym{L}}
\newcommand{\WAIT}{\sym{S}}
\newcommand{\FINISH}{\sym{F}}
\newcommand{\FINISHTWO}{\sym{f}}
\newcommand{\Types}{{\mathbb{T}}}
\newcommand{\typset}[1]{{\mathcal{#1}}}
\newcommand{\typeA}{{\typset{A}}}
\newcommand{\typeAB}{{\typset{A\!B}}}
\newcommand{\typeB}{{\typset{B}}}
\newcommand{\typeC}{{\typset{C}}}
\newcommand{\typeD}{{\typset{D}}}
\newcommand{\typeE}{{\typset{E}}}
\newcommand{\typeM}{{\typset{M}}}
\newcommand{\typeCopyA}{{\overline{\typeA}}}
\newcommand{\typeConst}{{\typset{K}}}
\newcommand{\typeT}{{\typset{T}}}
\newcommand{\sym}[1]{{\ensuremath{{\mathsf{#1}}}}}
\newcommand{\Redex}[1]{{\setlength{\fboxsep}{0pt}\colorbox{black!10!white}{$\vphantom{\overline{X}\underline{X}X_1^1}$#1}}}
\newcommand\TTTT{{T\kern-0.2em\raisebox{-0.3em}T\kern-0.2emT\kern-0.2em\raisebox{-0.3em}2}}
\newcommand{\aprove}{\sym{AProVE}}
\newcommand{\nat}{\mbox{\sf I\hspace{-0.6mm}N}}
\newcommand{\desda}{\; \Longleftrightarrow \;}
\newcommand{\tr}{\mbox{\sf tr}}
\newcommand{\spl}{\sym{split}}
\newcommand{\li}{\langle}
\newcommand{\ri}{\rangle}
\newcommand{\trans}[2]{{\llparenthesis #1 \rrparenthesis}_{_{\!\!\,\tiny #2\!}}}
\newcommand{\transA}[1]{\trans{#1}{A}}
\newcommand{\toweak}{\mathrel{{\to}\hspace*{-0.5pt}{\texttt{\raisebox{-0.5pt}{=}}}}}
\newcommand{\rotatesrs}{{\mathrm{rot}}(\Sigma_A)}
\newcommand{\torot}{\to_{\rotatesrs}}
\newcommand{\cycto}{\mathrel{{\circ}\!{\rightarrow}}}
\newcommand{\rto}[1]{\cycto_{#1}}
\newcommand{\tosplit}{\to_{\sym{split}(R)}}
\newcommand{\torotate}[1][R]{\to_{\sym{rotate}(#1)}}
\newcommand{\toshift}{\to_{\sym{shift}(R)}}
\newcommand{\shift}{\curvearrowright}
\newcommand{\prefixrewrite}{\hookrightarrow} 
\newcommand{\suffixrewrite}{\mathrel{\text{\rotatebox[origin=c]{180}{\reflectbox{$\hookrightarrow$}}}}}
\newcommand{\map}{\Phi}
\newcommand{\mapShift}{\map_{S}}
\newcommand{\mapSplit}{\map}
\newcommand{\mapRotate}{\map_{R}}
\newcommand{\src}{{\ensuremath{\mathsf{source}}}}
\newcommand{\tgt}{{\ensuremath{\mathsf{target}}}}
\newcommand{\dec}{{\ensuremath{\mathsf{Dec}}}}
\newcommand{\?}{\hspace{0.5pt}}
\begin{document}
\title[Termination of Cycle Rewriting]{Termination of Cycle Rewriting by Transformation and Matrix Interpretation}
\author[D.~Sabel]{David Sabel\rsuper a}
\address{{\lsuper a}Goethe-University Frankfurt,
Department of Computer Science and Mathematics,
Computer Science Institute,
60629 Frankfurt am Main, Germany}
\email{sabel@ki.informatik.uni-frankfurt.de}
\thanks{{\lsuper a}The first author is supported by the Deutsche Forschungsgemeinschaft (DFG)  under grant SA 2908/3-1.}
\author[H.~Zantema]{Hans Zantema\rsuper b}
\address{{\lsuper b}TU Eindhoven, Department of Computer Science\\
P.O.\ Box 513, 5600 MB Eindhoven, The Netherlands\linebreak
Radboud University Nijmegen, Institute for Computing and Information Sciences\\
P.O.\ Box 9010, 6500 GL Nijmegen, The Netherlands}
\email{h.zantema@tue.nl}
\keywords{rewriting systems, string rewriting, termination, relative termination}
\subjclass{F.4.2 Grammars and other rewriting systems}
\begin{abstract}
We present techniques to prove termination of cycle rewriting, that is, string rewriting on cycles, which are strings in which the start and end are connected. 
Our main technique is to transform cycle rewriting into string rewriting and then apply state of the art techniques to prove termination of the string rewrite system. 
We present three such transformations, and prove for all of them that they are sound and complete. In this
way not only termination of string rewriting of the transformed system implies termination of the original
cycle rewrite system, a similar conclusion can be drawn for non-termination. 

Apart from this transformational approach, we present a uniform framework of matrix interpretations,
covering most of the earlier approaches to automatically proving termination of cycle rewriting.

All our techniques serve both for proving termination and relative termination. 

We present several experiments showing the power of our techniques.
\end{abstract}
\maketitle

\section{Introduction}
Cycles can be seen as strings of which the left end is connected to the right end, 
by which the string has no left end or right end any more. In Fig.~\ref{fig:example-intro}
a pictorial representation of two such cycles is shown. 

String rewriting can not only be applied on strings, but also on cycles. 
Applying string rewriting on cycles, {i.e.}~replacing a substring of a cycle by another substring, 
is briefly called cycle rewriting. 
For instance, applying the string rewrite rule $a\?a\?a \to a\?b\?a\?b\?a$ 
to the cycle in Fig.~\ref{fig:example-intro} (a) results in the cycle shown Fig.~\ref{fig:example-intro} (b).

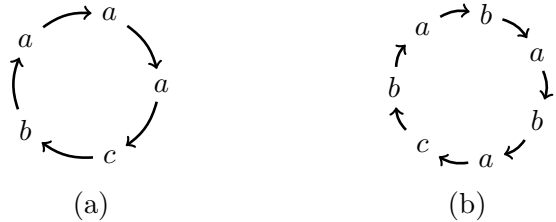
\begin{figure}[htp]
\begin{tikzpicture}
\node at (0,0){ 
\begin{tikzpicture}
\node (a1) at (1*360/5:1cm) {$a$};
\node (b1) at (2*360/5:1cm) {$a$};
\node (b2) at (3*360/5:1cm) {$b$};
\node (c1) at (4*360/5:1cm) {$c$}; 
\node (a2) at (5*360/5:1cm) {$a$};
\draw[->,line width=1pt,bend left=20] (b1) to node {} (a1);
\draw[->,line width=1pt,bend left=20] (b2) to node {} (b1);
\draw[->,line width=1pt,bend left=20] (c1) to node {} (b2);
\draw[->,line width=1pt,bend left=20] (a2) to node {} (c1);
\draw[->,line width=1pt,bend left=20] (a1) to node {} (a2);
\end{tikzpicture}
};
\node at (5,0){
\begin{tikzpicture}
\node (a1) at (0.5*360/7:1cm) {$a$};
\node (bn1) at (1.5*360/7:1cm) {$b$};
\node (b1) at (2.5*360/7:1cm) {$a$};
\node (b2) at (3.5*360/7:1cm) {$b$};
\node (c1) at (4.5*360/7:1cm) {$c$};
\node (a2) at (5.5*360/7:1cm) {$a$};
\node (bn2) at (6.5*360/7:1cm) {$b$};
\draw[->,line width=1pt,bend left=20] (bn1) to node {} (a1);
\draw[->,line width=1pt,bend left=20] (b1) to node {} (bn1);
\draw[->,line width=1pt,bend left=20] (b2) to node {} (b1);
\draw[->,line width=1pt,bend left=20] (c1) to node {} (b2);
\draw[->,line width=1pt,bend left=20] (a2) to node {} (c1);
\draw[->,line width=1pt,bend left=20] (bn2) to node {} (a2);
\draw[->,line width=1pt,bend left=20] (a1) to node {} (bn2);
\end{tikzpicture}
};
\node at (0,-1.6) {(a)};
\node at (5,-1.6) {(b)};
\end{tikzpicture}
\caption{Illustration of Cycles and Cycle Rewriting\label{fig:example-intro}}
\end{figure}

Rewriting behavior is strongly influenced by allowing cycles, for instance, in string rewriting the
single rule $a\?b \to b\?a$ is terminating, but in cycle rewriting it is not, since the string $a\?b$
represents the same cycle as $b\?a$. 

In many areas cycle rewriting is more natural than string rewriting.
For instance, the problem of 5 dining philosophers can be expressed as a cycle
${F}\?{T}\?{F}\?{T}\?{F}\?{T}\?{F}\?{T}\?{F}\?{T}$
where $F$ denotes a fork, and $T$ denotes a thinking philosopher. 
Writing $L$ for a philosopher who has picked up her left fork, but not her right fork, and $E$ for an eating philosopher, 
a classical (deadlocking) modeling of the dining philosophers problem (for arbitrary many philosophers)
can be expressed by the cycle rewrite system consisting of the rules
${T}\?{F} \to {L}$, ${F}\?{L} \to {E}$, ${E} \to {F}\?{T}\?{F}$.
As a cycle rewrite system this is clearly not terminating.

Also from the viewpoint of graph transformation, cycle rewriting is very natural. 
For instance, in \cite{BKZ14} it was shown that if all rules of a graph transformation system
are string rewrite rules, termination of the transformation system coincides with termination
of the cycle rewrite system, and not with termination of the string rewrite system. 
Developing techniques to prove cycle termination also helps to understand and develop
new techniques for proving termination of graph transformation systems (see {e.g.}~\cite{BKNZ15}).

So both string rewriting and cycle rewriting provide natural semantics for string rewrite 
systems, also called semi-Thue systems.  Historically, string rewriting got a lot of attention as being a 
particular case of term rewriting, while cycle rewriting hardly got any attention until recently.
In 2015 automated proving of cycle termination became a new category in the Termination Competition 
\cite{GMRTW:15,termcomp}. In 2015 two tools participated, and in 2016 three tools participated in the category.

In \cite{ZKB14} a first investigation of termination of cycle rewriting was made. Some techniques were presented to
prove cycle termination, implemented in a tool {\tt torpacyc}. Further a transformation $\phi$ was given
such that for every string rewriting system (SRS) $R$, string termination of $R$ holds if and only if cycle
termination of $\phi(R)$ holds. As a consequence, cycle termination is undecidable. 

However, for making
use of the strong power of current tools for proving termination of string rewriting in order to prove
cycle termination, a transformation the other way around is needed: transformations $\psi$ such that for
every SRS $R$, cycle termination of $R$ holds if and only if string termination of $\psi(R)$ holds.  
The `if' direction in this `if and only if' is called `sound', the `only if' is called complete. This
implies a way 
to prove cycle termination of an SRS $R$: apply a tool for proving termination of string rewriting
to $\psi(R)$ for a sound transformation $\psi$. Conversely, a way to prove cycle non-termination of $R$ is to
prove non-termination of $\psi(R)$ for a complete transformation $\psi$.
The main topic of this paper is to investigate such transformations, and to exploit them to prove termination 
of cycle rewriting, or non-termination, or relative (non-)termination. Here relative termination deals
with two rewrite systems: relative termination means that every infinite reduction of the union of them
contains only finitely many steps of one of them. In detail we give fully worked-out 
proofs of soundness and completeness  for three
approaches to transform cycle termination into string termination (called $\sym{split}$, $\sym{rotate}$, and $\sym{shift}$),
and also for relative termination. 

Using transformations to exploit the power of tools for termination of term rewriting to prove a modified
property was used before in \cite{GZ03,GM04}. However, there the typical observation was that the complete
transformations were complicated, and for non-trivial examples, termination of $\psi(R)$ could not be
proved by the tools, while for much simpler sound  (but incomplete)
transformations $\psi$, termination of $\psi(R)$ could
often be proved by the tools. In our current setting this is different: 
one of our introduced transformations, the
transformation $\spl$, for which we prove that it is sound and complete, we show that for several
systems $R$ for which all approaches from \cite{ZKB14} fail, cycle termination of $R$ can be concluded 
from an automatic termination proof of $\spl(R)$ generated by \aprove~\cite{FBEFFOPSKSST:14,aprove} 
or \TTTT~\cite{KSZM09,ttt2}.

It can be shown that if strings of size $n$ exist admitting cycle reductions in which for every 
rule the number of applications of that rule is more than linear in $n$, then all techniques from
\cite{ZKB14} fail to prove cycle termination. Nevertheless, in quite simple examples this may occur while 
cycle termination holds. As an example consider the following. 

A number of people are in a circle, and each of them carries a number, represented in binary notation with a bounded number of bits. 
Each of them may increase his/her number by one, as long as it fits in the bounded number of bits.
Apart from that, every person may increase the number of bits of the number of its right neighbor by two. In order to avoid trivial non-termination, the latter is only allowed if the leading bit of the number is 0, and the new leading bit is put to 1, and the other to 0, by which effectively one extra bit is added. 
We will prove that this process will always terminate by giving an SRS in which all of the above steps can be described by a number of cycle rewrite steps, and prove cycle termination. 
In order to do so we write $P$ for person, and 0 and 1 for the bits of the binary number. 
For carry handling we introduce an extra symbol $c$ of which the meaning is a 1 with a carry.
Assume for every person its number is stored left from it. So if the number ends in 0, by adding one this last bit 0 is replaced by 1, expressed by the
rule $0\?P \to 1\?P$. 
In case the number ends in 1, a carry should be created, since $c$ represents a  1 with a carry this is expressed by the rule $1\?P \to c\?P$.
Next the carry should be processed. In case it is preceded by 0, this 0 should be replaced by 1, while the $c$ is replaced by 0; this is expressed by the rule $0\?c \to 1\?0$.
In case it is preceded by 1, a new carry should be created while again the old carry is replaced by 0; this is expressed by the rule $1\?c \to c\?0$.  In this way adding one to any number in binary notation can be expressed by a number of rewrite steps, as long as no overflow occurs.
Finally, we have to add a rule to extend the bit size of the number of the right neighbor: the leading bit should be 0, while it is replaced by $1\?0\?0$: adding two extra bits of which the leading one is 1 and the other is 0. 
This is expressed by the rule $P\?0 \to P\?1\?0\?0$.
Summarizing: we have to prove cycle termination of the SRS consisting of the five rules 
$$0\?P \to 1\?P,\; 1\?P \to c\?P,\; 0\?c \to 1\?0, \; 1\?c \to c\?0, \; P\?0 \to P\?1\?0\?0.$$
This is fundamentally impossible by the techniques presented in \cite{ZKB14}: by one of the techniques the last rule can be removed, but starting in $0^n\?P$ a reduction can be made in which all of the remaining four rules are applied an exponential number of times, by which the techniques from \cite{ZKB14} fail.

In this paper we give two ways to automatically prove that cycle termination holds for the above example
$R$: \TTTT~succeeds in proving termination of $\spl(R)$, and the other is a variant of matrix
interpretations for which we show that it proves cycle termination. The latter is another main topic of
this paper: we give a self-contained uniform presentation of trace-decreasing matrix interpretations for 
cycle termination that covers the tropical and arctic variant from \cite{ZKB14} and the natural variant from 
\cite{sabel-zantema:15}, now also serving for relative termination. In this way we cover all current known
techniques for proving cycle termination except for match bounds (\cite{ZKB14}).

The paper is organized as follows.
Section~\ref{secprel} recalls the basics of cycle rewriting and introduces relative cycle termination.
The main section Section~\ref{sectransf} first adapts type introduction \cite{zantema:94} for string rewriting and relative termination (Theorem~\ref{theo:typeintro-rel}),
and then the three transformations $\sym{split}$, $\sym{rotate}$, and $\sym{shift}$ are presented and soundness and completeness of all of them is proved 
(Theorems~\ref{thm:split-sound-and-complete}, \ref{theo:shift-sound-and-complete}, and \ref{theo:rotate-sound-and-complete}).
Also adapted transformations for relative cycle transformation are presented and their soundness and completeness is shown
(Theorems~\ref{thm:split-sound-and-complete-relative}, \ref{theo:shift-sound-and-complete-relative}, and \ref{theo:rotate-sound-and-complete-relative}).
In Section~\ref{secmatr} trace-decreasing matrix interpretations are presented,
and we show how they can be used to prove cycle termination (Theorem~\ref{thmmatr}) and relative cycle termination (Theorem~\ref{thmmatrrel})
for the three instances of natural, tropical and arctic matrix interpretations.
In Section~\ref{secexp} experiments on implementations of our techniques are reported. We conclude in Section~\ref{secconcl}.

Compared to our paper \cite{sabel-zantema:15},  this paper contains full proofs, extends all the termination techniques 
to relative termination, and presents a framework for matrix interpretations which covers  and
extends the previous approaches presented in \cite{ZKB14,sabel-zantema:15}.

\section{Preliminaries}
\label{secprel}
In this section we briefly recall the required notions for string and cycle rewriting.

A {\em signature} $\Sigma$ is a finite alphabet of symbols. With $\Sigma^*$ we denote the set of strings over $\Sigma$. 
With $\varepsilon$ we denote the empty string and for $u,v \in \Sigma^*$, we write $u\?v$ for the concatenation of the strings $u$ and $v$.
With $|u|$ we denote the length of string $u \in \Sigma^*$ and for $a \in \Sigma$ and $n \in \nat$, 
$a^n$ denotes $n$ replications of symbol $a$, {i.e.}~$a^0 = \varepsilon$ and $a^i = a\? a^{i-1}$ for $i >0$.

Given a binary relation $\to$, we write $\to^i$ for $i$ steps, $\to^{\leq i}$ for at most $i$ steps, 
$\to^{< i}$ for at most $i-1$ steps, $\to^*$ for the reflexive-transitive closure of $\to$, and $\to^+$ 
for the transitive closure of $\to$. For binary relations $\to_1$ and $\to_2$, we write $\to_1 . \to_2$ for the
composition of $\to_1$ and $\to_2$, {i.e.}~$a \to_1 . \to_2 c$ iff there exists a $b$ {s.t.}~$a \to_1 b \text{ and } b \to_2 c$.

\subsection{String Rewriting}
A {\em string rewrite system} (SRS) is a finite set $R$ of rules $\ell \to r$ where $\ell,r \in \Sigma^*$ and $\ell \not= \varepsilon$. 
The {\em rewrite relation} ${\to_R} \subseteq {(\Sigma^* \times \Sigma^*)}$ is defined as follows: 
if $w = u\?\ell\?v \in \Sigma^*$ and $(\ell \to r) \in R$, then $w \to_R u\?r\?v$. 
The {\em prefix-rewrite relation} $\prefixrewrite_R$ is defined as: 
if $w = \ell\?u \in \Sigma^*$ and $(\ell \to r) \in R$, then $w \prefixrewrite_R r\?u$.
The {\em suffix-rewrite relation} $\suffixrewrite_R$ is defined as:
if $w = u\?\ell \in \Sigma^*$ and $(\ell \to r) \in R$, then $w \suffixrewrite_R u\? r$.

A (finite or infinite) sequence of rewrite steps $w_1 \to_R w_2 \to_R \cdots$ is called a {\em rewrite sequence} 
(sometimes also a {\em reduction} or a {\em derivation}). 
For an SRS $R$, the rewrite relation $\to_R$ is called {\em non-terminating} if there exists a string $w \in \Sigma^*$ and an 
infinite rewrite sequence $w \to_R w_1 \to_R \cdots$. Otherwise, $\to_R$ is {\em terminating}.
If $\to_R$ is terminating (non-terminating, resp.) we also say $R$ is \emph{string terminating} (string non-terminating, resp.).

\subsection{Cycle Rewriting}
We recall the notion of cycle rewriting from \cite{ZKB14}. A string can be viewed as a cycle, 
{i.e.}~the last symbol of the string is connected to the first symbol. To represent cycles by strings, we define the
equivalence relation $\sim$ as follows:
$$
u \sim v \text{ iff  } u = w_1\?w_2 \text { and }  v = w_2\?w_1 \text{ for some strings }w_1,w_2 \in \Sigma^*
$$
With $[u]$ we denote the equivalence class of string $u$ {w.r.t.}~$\sim$.

The {\em cycle rewrite relation} ${\cycto_R} \subseteq {({\Sigma/{\sim}} \times {\Sigma/{\sim}})}$ of an 
SRS $R$ is defined as 
$$
[u] \cycto_R [v]  \text{ iff } \exists w\in\Sigma^*: \ell\?w \in [u], (\ell \to r) \in R,~\text{and}~r\?w\in[v] 
$$

The cycle rewrite relation $\cycto_R$ is called {\em non-terminating} iff there exists a string 
$w \in \Sigma^*$ and an infinite sequence 
$[w] \cycto_R [w_1] \cycto_R [w_2] \cycto_R \cdots$. 
Otherwise, $\cycto_R$ is called {\em terminating}. 
If $\cycto_R$ is terminating (non-terminating, resp.) we also say that $R$ is \emph{cycle terminating}
(cycle non-terminating, resp.).

We recall some known facts about cycle rewriting.
\begin{proposition}[see \cite{ZKB14}]\label{prop:properties-rta14}
Let $\Sigma$ be a signature, $R$ be an SRS, and $u,v \in \Sigma^*$. 
\begin{enumerate}
\item If $u \to_R v$ then $[u] \cycto_R [v]$. 
\item If $\cycto_R$ is terminating, then $\to_R$ is terminating. 
\item Termination of $\to_R$ does not necessarily imply termination of $\cycto_R$.
\item Termination of $\cycto_R$ is undecidable.
\item For every SRS $R$ there exists a transformed SRS $\phi(R)$ {s.t.}~the following three properties are equivalent:
 \begin{itemize}
 \item $\to_R$ is terminating.
 \item $\to_{\phi(R)}$ is terminating.
 \item $\cycto_{\phi(R)}$ is terminating.
 \end{itemize}
\end{enumerate}
\end{proposition}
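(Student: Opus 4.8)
The plan is to prove (1) directly, obtain (2) as a corollary of (1), settle (3) by a single small example, derive (4) from (5), and reserve the real work for constructing the transformation in (5).

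First I would prove (1). If $u \to_R v$ then $u = x\,\ell\,y$ and $v = x\,r\,y$ for some rule $(\ell \to r) \in R$ and strings $x,y$. Putting $w := y\,x$ we get $\ell\,w = \ell\,y\,x \sim x\,\ell\,y = u$ and $r\,w = r\,y\,x \sim x\,r\,y = v$, so $\ell\,w \in [u]$ and $r\,w \in [v]$; by definition of $\cycto_R$ this gives $[u] \cycto_R [v]$. For (2) I would argue contrapositively: if $\to_R$ is non-terminating, pick an infinite reduction $u_0 \to_R u_1 \to_R \cdots$; applying (1) to each step yields the infinite cycle reduction $[u_0] \cycto_R [u_1] \cycto_R \cdots$, so $\cycto_R$ is non-terminating. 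For (3) it is enough to exhibit one SRS separating the two notions: take $R = \{a\,b \to b\,a\}$. On strings, the quantity assigning to a string $w$ the sum of the positions at which the symbol $a$ occurs strictly increases (by exactly $1$) under every $\to_R$-step, while $\to_R$ preserves the length and hence this quantity is bounded by $|w|(|w|+1)/2$; so $\to_R$ is terminating. On cycles, however, $[a\,b] \cycto_R [b\,a] = [a\,b]$ by (1) and $b\,a \sim a\,b$, giving an infinite cycle reduction; so $\cycto_R$ is not terminating.

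For (4) I would use (5): given an arbitrary SRS $R$, by (5) one can compute $\phi(R)$ with $\to_R$ terminating iff $\cycto_{\phi(R)}$ terminating, so a decision procedure for cycle termination would yield one for termination of string rewriting, contradicting the classical undecidability of (uniform) termination of string rewrite systems.

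The heart of the proposition — and the step I expect to be the main obstacle — is constructing $\phi$ for (5). The plan is to pick a fresh symbol not in $\Sigma$, used as an endmarker of the represented word, to represent a string $w$ by a marked cycle, and to let $\phi(R)$ be a small string rewrite system simulating a head-machine that walks over the represented word, nondeterministically guesses an occurrence of some left-hand side of $R$, and carries out the corresponding replacement. The design must guarantee two invariants: the number of marker symbols is preserved by $\phi(R)$, so that on a legal configuration the cyclic structure offers no information beyond that of a string; and the head can never run past the marker around the cycle, so that no infinite computation can exploit the wrap-around. Given such a construction, the equivalences in (5) reduce to three implications: $\to_R$ non-terminating $\Rightarrow$ $\to_{\phi(R)}$ non-terminating (run the simulation on a marked string); $\to_{\phi(R)}$ non-terminating $\Rightarrow$ $\cycto_{\phi(R)}$ non-terminating (immediate from (1)); and $\cycto_{\phi(R)}$ non-terminating $\Rightarrow$ $\to_R$ non-terminating, which is the delicate one: here one uses the marker invariant and the no-wrap-around property to argue that an infinite cycle reduction is, up to rotation, an infinite string reduction of $\phi(R)$, and then projects the simulation back to an infinite $\to_R$-reduction. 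Writing down a construction that actually enjoys these properties, and verifying the invariants, is the technical core — this is exactly what the transformation $\phi$ of \cite{ZKB14} provides.
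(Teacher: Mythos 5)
The paper itself does not prove this proposition: it is recalled verbatim from \cite{ZKB14}, so there is no in-paper argument to compare against line by line. Measured against that, your treatment is in good shape where it is concrete. Items (1) and (2) are exactly the standard argument (rotate the context to the back, then map an infinite $\to_R$-reduction stepwise to an infinite $\cycto_R$-reduction), and your witness for (3), the single rule $a\,b \to b\,a$ with the ``sum of positions of $a$'' measure, is precisely the separating example the paper itself mentions in the introduction. Deriving (4) from (5) via undecidability of uniform string termination is also how \cite{ZKB14} and this paper frame the matter.

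The one place where your proposal falls short of a self-contained proof is (5): you describe the intended shape of $\phi$ (endmarker, a head walking over the represented word, preservation of the marker count, no wrap-around) and the three implications it must support, but you never write down the rules of $\phi(R)$ or verify the two invariants, and you explicitly defer to \cite{ZKB14} for that. Relative to the paper this is not a discrepancy, since the paper defers in exactly the same way; but be aware that the delicate direction you flag (an infinite $\cycto_{\phi(R)}$-reduction yields an infinite $\to_R$-reduction) is where all the work lives, and your sketch as stated gives no argument for it beyond naming the invariants that would be needed. If the intent were a proof from scratch rather than a recollection of the cited result, that construction and its verification would have to be supplied.
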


\noindent For an SRS $R$, the last property implies that termination of $\to_R$ can be proved by proving termination of the translated cycle rewrite relation $\cycto_{\phi(R)}$. 
In \cite{ZKB14} it was used to show that termination of cycle rewriting is undecidable and for further results on derivational complexity for cycle rewriting.

\subsection{Relative Termination}
We will also consider relative termination of cycle and string rewrite systems.
Hence, in this section we recall the definition of relative termination of string rewrite systems (see {e.g.}~\cite{geser1990})
and introduce relative cycle termination. 
\begin{definition}
Let ${\to_1 \subseteq \to_2} \subseteq {(O \times O)}$
be binary relations on a set $O$.
We say $\to_1$ is {\em terminating relative} to $\to_2$
iff every infinite sequence $o_1 \to_{2} o_2 \to_{2}  \cdots$ contains only
finitely many $\to_1$-steps.  

Let $S \subseteq R$ be string rewrite systems over an  alphabet $\Sigma$.
If the string rewrite relation $\to_S$ is terminating relative to 
the string rewrite relation $\to_R$ then we say
$S$ is \emph{string terminating relative to} $R$.
If the cycle rewrite relation $\cycto_S$ is terminating relative to 
the cycle rewrite relation $\cycto_R$ then we say
$S$ is \emph{cycle terminating relative to} $R$.

For $S \subseteq R$,
we sometimes call $R \setminus S$ the weak rules and $S$ the strict rules.
We often write rules in $R \setminus S$ as $\ell \toweak r$ and rules in $S$ 
as $\ell \to r$.
\end{definition}
Note that for all SRSs $R$ and $S$ the identities
${\to_{R \cup S}} = {\to_{R} \cup \to_{S}}$ and
${\cycto_{R \cup S}} = {\cycto_{R} \cup \cycto_{S}}$
hold, which we will sometimes use.

Since every string rewrite step, is also a cycle rewrite step (on the equivalence class {w.r.t.}~$\sim$),
any infinite string rewrite sequence giving evidence for string non-termination, also gives evidence
for cycle non-termination. Thus we have:
\begin{corollary}
If $S$ is cycle terminating relative to $R$, then
$S$ is also string terminating relative to $R$.
\end{corollary}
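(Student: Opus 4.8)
The plan is to prove the contrapositive: assuming that $S$ is \emph{not} string terminating relative to $R$, I construct a witness showing that $S$ is \emph{not} cycle terminating relative to $R$. So suppose there is an infinite rewrite sequence $w_0 \to_R w_1 \to_R w_2 \to_R \cdots$ that contains infinitely many $\to_S$-steps; this is exactly the negation of the hypothesis, using that $\to_S \subseteq \to_R$ since $S \subseteq R$.

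First I would lift this sequence to a sequence of cycle rewrite steps. By Proposition~\ref{prop:properties-rta14}(1) applied to $R$, every step $w_i \to_R w_{i+1}$ yields $[w_i] \cycto_R [w_{i+1}]$, so $[w_0] \cycto_R [w_1] \cycto_R [w_2] \cycto_R \cdots$ is an infinite $\cycto_R$-sequence. Applying the same proposition to $S$, every step $w_i \to_S w_{i+1}$ also gives $[w_i] \cycto_S [w_{i+1}]$, so each of the infinitely many $\to_S$-steps in the original sequence becomes a $\cycto_S$-step in the lifted one.

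Next I would check that this cycle sequence still has infinitely many $\cycto_S$-steps, which is immediate from the previous paragraph. The only point deserving a moment's care is that possibly $[w_i] = [w_{i+1}]$ for such an $i$ (as happens, e.g., for the rule $a\?b \to b\?a$); but this is immaterial, since the relation $\cycto_S$ on $\Sigma^*/{\sim}$ is defined so that $[w_i] \cycto_S [w_{i+1}]$ holds regardless, and relative cycle termination counts $\cycto_S$-steps, not changes of equivalence class. Hence the infinite sequence $[w_0] \cycto_R [w_1] \cycto_R \cdots$ witnesses that $\cycto_S$ is not terminating relative to $\cycto_R$, i.e.\ that $S$ is not cycle terminating relative to $R$. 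This establishes the contrapositive, and therefore the corollary.

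I expect no genuine obstacle here: the argument is a routine contrapositive that merely transports each string rewrite step to its induced cycle rewrite step via Proposition~\ref{prop:properties-rta14}(1), and the sole subtlety is the remark above that trivial cycle steps (those not changing the $\sim$-class) still count towards the number of $\cycto_S$-steps.
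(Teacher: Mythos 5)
Your proof is correct and follows the same route as the paper: the paper also argues (contrapositively) that every string rewrite step is a cycle rewrite step on the $\sim$-classes, so an infinite $\to_R$-sequence with infinitely many $\to_S$-steps lifts directly to an infinite $\cycto_R$-sequence with infinitely many $\cycto_S$-steps. Your extra remark that steps with $[w_i]=[w_{i+1}]$ still count as $\cycto_S$-steps is a fine clarification but not a point of divergence.
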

However, as for usual termination, relative cycle termination 
is different from relative string termination:
\begin{example}
Let $S = \{a\?b \to c\?a\}$ and $R= S \cup \{c \toweak b\}$.
Then $S$ is string terminating relative to $R$ (which is easy to prove), 
while $S$ is not cycle terminating relative to $R$, since
the infinite (looping) cycle rewrite sequence
$[a\?b] \cycto_S [c\?a] \cycto_R [a\?b] \ldots$ contains
infinitely many $\cycto_S$ steps.
\end{example}

The following proposition provides several characterizations for relative termination. We formulate it in a general form (for all binary relations). Even though its proof is quite standard,
we include it for the sake of completeness.
\begin{proposition}\label{prop:rel-prop}
Let $\to_1 \subseteq \to_2$ be
binary relations on a set $O$.
The following five propositions are equivalent:
\begin{enumerate}
 \item\label{fact1} The relation $\to_2^*.\to_1.\to_2^*$ is terminating.
 \item\label{fact2a} The relation $\to_1.\to_2^*$ is terminating.
 \item\label{fact2} The relation $\to_2^*.\to_1$ is 
 terminating.
 \item\label{fact4} $\to_1$ is terminating relative to $\to_2$.
\end{enumerate}
\end{proposition}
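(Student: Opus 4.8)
The plan is to prove the equivalences by a short cycle of implications, using only the hypothesis ${\to_1}\subseteq{\to_2}$. Concretely I would establish $\ref{fact1}\Rightarrow\ref{fact2a}$, $\ref{fact1}\Rightarrow\ref{fact2}$, $\ref{fact2a}\Rightarrow\ref{fact4}$, $\ref{fact2}\Rightarrow\ref{fact4}$, and $\ref{fact4}\Rightarrow\ref{fact1}$; since $\ref{fact1}$ implies every other item and is implied by every other item (directly or through $\ref{fact4}$), this yields pairwise equivalence of all the listed statements. No step needs König's Lemma or the axiom of choice; everything is a direct manipulation of finite and infinite sequences.

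First I would record the trivial inclusions, obtained by inserting empty $\to_2^*$-segments: ${\to_1.\to_2^*}\subseteq{\to_2^*.\to_1.\to_2^*}$ and ${\to_2^*.\to_1}\subseteq{\to_2^*.\to_1.\to_2^*}$. Since a sub-relation of a terminating relation is terminating, these immediately give $\ref{fact1}\Rightarrow\ref{fact2a}$ and $\ref{fact1}\Rightarrow\ref{fact2}$.

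For $\ref{fact2a}\Rightarrow\ref{fact4}$ I would argue contrapositively. Given an infinite $\to_2$-sequence $o_1\to_2 o_2\to_2\cdots$ with infinitely many $\to_1$-steps, let $i_1<i_2<\cdots$ enumerate the positions at which a $\to_1$-step occurs, so $o_{i_k}\to_1 o_{i_k+1}$ while all steps between positions $i_k+1$ and $i_{k+1}$ are $\to_2$-steps; then $o_{i_k}\to_1 o_{i_k+1}\to_2^* o_{i_{k+1}}$ for every $k$, which exhibits an infinite $\to_1.\to_2^*$-sequence and contradicts $\ref{fact2a}$. The implication $\ref{fact2}\Rightarrow\ref{fact4}$ is the mirror image: the same data, regrouped as $o_{i_k+1}\to_2^* o_{i_{k+1}}\to_1 o_{i_{k+1}+1}$, yields an infinite $\to_2^*.\to_1$-sequence. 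For $\ref{fact4}\Rightarrow\ref{fact1}$, start from an infinite $\to_2^*.\to_1.\to_2^*$-sequence $a_0, a_1, a_2,\ldots$, expand its $n$-th step as $a_n\to_2^* b_n\to_1 c_n\to_2^* a_{n+1}$, and concatenate all these segments; since ${\to_1}\subseteq{\to_2}$ this is a single infinite $\to_2$-sequence, and it contains at least one $\to_1$-step per block, hence infinitely many, contradicting $\ref{fact4}$.

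There is no genuine obstacle here; the statement is standard, and the only thing requiring attention is the bookkeeping when splicing consecutive $\to_2^*$-segments and allowing some of them to be empty --- for instance, in $\ref{fact4}\Rightarrow\ref{fact1}$ the tail $c_n\to_2^* a_{n+1}$ and the head $a_{n+1}\to_2^* b_{n+1}$ of adjacent blocks must be merged into one $\to_2^*$-segment, and in $\ref{fact2a}\Rightarrow\ref{fact4}$ the ``next'' $\to_1$-position $i_{k+1}$ might equal $i_k+1$, making the intermediate $\to_2^*$-segment empty. I would double-check these corner cases but expect nothing surprising.
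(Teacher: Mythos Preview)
Your proposal is correct and uses essentially the same ideas as the paper's proof. The only difference is organizational: the paper runs a single cycle $\ref{fact1}\Rightarrow\ref{fact2a}\Rightarrow\ref{fact2}\Rightarrow\ref{fact4}\Rightarrow\ref{fact1}$ (in particular it proves $\ref{fact2a}\Rightarrow\ref{fact2}$ directly by a one-step shift of an infinite $\to_2^*.\to_1$-sequence), whereas you replace that link by the two parallel implications $\ref{fact2a}\Rightarrow\ref{fact4}$ and $\ref{fact2}\Rightarrow\ref{fact4}$ together with the extra trivial inclusion $\ref{fact1}\Rightarrow\ref{fact2}$; the underlying arguments (inclusion of relations, regrouping at the $\to_1$-positions, and concatenating expanded blocks using ${\to_1}\subseteq{\to_2}$) are identical.
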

\begin{proof}
We show the claim by a chain of implications:
\begin{description}
\item[\rm$\eqref{fact1}\implies\eqref{fact2a}$] 
Clearly ${\to_1.\to_2^*} \subseteq {\to_2^*.\to_1.\to_2^*}$,
and thus non-termination of $\to_1.\to_2^*$ implies non-termination of $\to_2^*.\to_1.\to_2^*$.
\item[\rm$\eqref{fact2a}\implies\eqref{fact2}$]
Assume $\to_2^*.\to_1$ is non-terminating.
Then there exists an infinite sequence s.t.
$o_{i,1} \to_2^* o_{i,2} \to_1 o_{i+1,1}$ for $i=1,2,\ldots$.
This implies $o_{i,2} \to_1 o_{i+1,1} \to_2^* o_{i+1,2}$
for all $i=1,2,\ldots$
and thus $\to_1.\to_2^*$ is non-terminating.
\item[$\eqref{fact2} \implies \eqref{fact4}$:]
We show $\neg \eqref{fact4} \implies \neg \eqref{fact2}$
Thus, we assume that 
there exists a sequence $o_1 \to_{2} o_2 \to_{2} \cdots$ s.t.
for every $n$ there exists a number $m_n \geq n$ {s.t.}~$o_{m_n} \to_1 o_{m_n+1}$.
Assume that each number $m_n$ is minimal {w.r.t.}~$n$.
Then the given sequence can be written as 
$o'_{j} \to_2^{k_j} o'_{j+k_j}
\to_1 o'_{j+k_j+1} = o'_{{j+1}}$
{s.t.}~$o'_0 = o_1$ and $k_j \geq 0$ for $j=0,1,\ldots$.
Since $o'_{j} \to_2^*.\to_{1} o'_{j+1}$, this shows that $\to_2^*.\to_1$ is non-terminating.

\item[\rm$\eqref{fact4} \implies \eqref{fact1}$:]
If $\to_2^*.\to_1.\to_2^*$ is non-terminating, then the infinite rewrite sequence consisting
of infinitely many $\to_2^*.\to_1.\to_2^*$-steps contains infinitely many $\to_1$-steps.
\qedhere
\end{description}
\end{proof}

\noindent For SRSs $S \subseteq R$ , the previous proposition can be instantiated
with $\to_1~:=~\to_S$ and $\to_2~:=~\to_R$ to derive characterizations 
of relative string termination, and with $\to_1~:=~\cycto_S$ and 
$\to_2~:=~\cycto_R$ to derive characterizations of relative cycle termination.

\section{Transforming Cycle Termination into String Termination}
\label{sectransf}
The criteria given in Proposition~\ref{prop:properties-rta14} and the involved transformation $\phi$, 
which transforms string rewriting into cycle rewriting,
provide a method to prove string termination by proving cycle termination.
However, it does not provide a method to prove termination of the cycle rewrite relation $\cycto_{R}$ by proving 
termination of the string rewrite relations $\to_R$ or $\to_{\phi(R)}$. Hence,  in this section we develop transformations 
$\psi$ {s.t.}~termination of $\to_{\psi(R)}$ implies termination of $\cycto_R$. 
We call such a transformation $\psi$ {\em sound}. However, there are ``useless'' sound transformations, for instance, transformations where $\psi(R)$ is always non-terminating. 
So at least one wants to find sound transformations which permit to prove termination of non-trivial cycle rewrite relations.
However, a better transformation should fulfill the stronger property that  $\to_{\psi(R)}$ is terminating if and only if $\cycto_R$ is terminating. 
If termination of $\cycto_R$ implies termination of $\to_{\psi(R)}$, then we say $\psi$ is {\em complete}. 
For instance, for a complete transformation, non-termination proofs of $\to_{\psi(R)}$ also imply non-termination of $\cycto_R$.
Hence, our goal is to find sound and complete transformations $\psi$.

Besides such transformations, we will consider transformations 
$\psi_{\mathit{rel}}(S,R)= (S',R')$ which are sound and complete for 
relative termination, i.e.~transformations  which transform $(S,R)$ with $S \subseteq R$ , such that
cycle termination of $S$ relative to $R$ holds, if and only if $S'$ is string terminating relative to $R'$.

We will introduce and discuss three transformations $\sym{split}$, $\sym{rotate}$, and  $\sym{shift}$
where the most important one is the transformation $\sym{split}$, since it has the following properties:
The transformation is sound and complete, and as our experimental results show, it behaves well in practice when proving termination of cycle rewriting.
The other two transformations $\sym{rotate}$ and $\sym{shift}$ are also sound and complete, but rather complex and -- as our experimental results show --
they do not behave as well as the transformation $\sym{split}$ in practice. 
We include all three transformations in this paper to document some different approaches to transform cycle rewriting into string rewriting. 
We also consider variations of the three transformations for relative termination.
We show that all three variations are sound and complete transformations for relative termination. 

Since our completeness proofs, use type introduction \cite{zantema:94}, we recall this technique in Section~\ref{sec:types} focused on
typed string rewriting only, and prove a (novel) theorem that type introduction is correct for relative string termination.
In the remaining sections~\ref{sec:split}, \ref{subsec:shift}, and \ref{subsec:rotate} we successively introduce and treat the three transformations.
\subsection{Type Introduction}\label{sec:types}
The technique of type introduction was presented in \cite{zantema:94}, for termination of term 
rewriting. Here we are only interested in string rewriting (being the special case of
term rewriting having only unary symbols), but for our purpose need to extend this result 
to relative termination.

An signature $\Sigma$ is {\em typed} if there is a set $\Types$ of types (also called sorts),
and every $a \in \Sigma$ has a {\em source type} $\tau_1 \in \Types$ and  a {\em target type} 
$\tau_2 \in \Types$, notation $a : \tau_1 \to \tau_2$, $\tau_1 = \src(a)$, $\tau_2 = \tgt(a)$.

For a non-empty string $w \in \Sigma^+$ its target $\tgt(w)$ is defined to be the target of 
its first element; its source $\src(w)$ is defined to be the source of its last element.
A non-empty string $w \in \Sigma^+$ is called {\em well-typed} if either it is in
$\Sigma$, or it is of the shape $au$ for $a \in \Sigma$ and $u \in \Sigma^+$ is
well-typed, and $\src(a) = \tgt(u)$.

An SRS is called {\em well-typed} if for every rule $\ell \to r$ we either have
\begin{itemize}
\item $r = \varepsilon$ and $\src(\ell) = \tgt(\ell)$, or
\item $r \neq \varepsilon$ and $\src(\ell) = \src(r)$ and $\tgt(\ell) = \tgt(r)$.
\end{itemize}
The following lemma is straightforward.
\begin{lem}
If $R$ is a well-typed SRS over $\Sigma$ and $w \to_R w'$ for $w \in \Sigma^+$ being
well-typed, then $w'$ is well-typed too.
\end{lem}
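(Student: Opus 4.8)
The plan is to prove this by induction on the length of the well-typed string $w$, tracking where in $w$ the rewrite step $w \to_R w'$ takes place relative to the reducted redex.

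First I would set up the situation: write $w = u\,\ell\,v$ with $(\ell \to r) \in R$ and $w' = u\,r\,v$. The key structural fact I would isolate is a characterization of well-typedness in terms of adjacent symbols: a non-empty string $a_1 a_2 \cdots a_k$ is well-typed if and only if $\src(a_i) = \tgt(a_{i+1})$ for all $1 \le i < k$. This follows immediately by an easy induction from the given recursive definition (the base case $k=1$ is vacuous, and the inductive step peels off $a_1$). Using this "local" characterization, well-typedness of a concatenation $xy$ of non-empty strings becomes equivalent to: $x$ well-typed, $y$ well-typed, and $\src(x) = \tgt(y)$ — i.e. the source/target "interface" condition at the seam, exactly matching the $\src$/$\tgt$ definitions for strings.

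Next I would use this to analyze $w = u\,\ell\,v$. Assuming for the main case that $u \ne \varepsilon$ and $v \ne \varepsilon$ (the degenerate cases where one or both are empty are handled the same way with fewer seams), well-typedness of $w$ gives: $u$, $\ell$, $v$ are each well-typed, $\src(u) = \tgt(\ell)$, and $\src(\ell) = \tgt(v)$. Now split on the well-typedness condition for the rule $\ell \to r$. If $r \ne \varepsilon$, then by definition $\src(\ell) = \src(r)$ and $\tgt(\ell) = \tgt(r)$; also $r$ is well-typed — but wait, well-typedness of $R$ only constrains source and target of $r$, not that $r$ itself is internally well-typed, so I need to be careful here. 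Actually I should check the definition again: the statement "well-typed SRS" as given only requires the source/target matching at the endpoints of the rule. Hmm — so internal well-typedness of $r$ is \emph{not} guaranteed by the hypothesis. I expect this to be the main obstacle: I would need either (a) the intended reading that rules are themselves built from well-typed strings, or (b) to observe that in this paper's usage $r$ is always well-typed because the rules come from a typed signature in which \emph{every} string is well-typed, or (c) that there is an implicit convention. The cleanest resolution for the write-up is to note that it is part of the standing assumption that both sides of a rule of a well-typed SRS are well-typed strings (or single symbols), and proceed; with that in hand, $r$ well-typed together with $\src(r) = \src(\ell) = \tgt(v)$ and $\tgt(r) = \tgt(\ell)$, $\src(u) = \tgt(\ell) = \tgt(r)$ gives all three seams of $u\,r\,v$ matching, so $w' = u\,r\,v$ is well-typed by the concatenation characterization.

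Finally, for the case $r = \varepsilon$: then $w' = u\,v$, and the rule hypothesis gives $\src(\ell) = \tgt(\ell)$. From $w$ well-typed we have $\src(u) = \tgt(\ell) = \src(\ell) = \tgt(v)$, so the single seam of $u\,v$ matches, whence $u\,v$ is well-typed — here I also need $u\,v \ne \varepsilon$, which holds since $w \ne \varepsilon$ forces at least one of $u,v$ nonempty, and if exactly one is empty then $w'$ equals the other, which is well-typed directly. This completes all cases. The only genuinely delicate point, as flagged, is making explicit that the right-hand sides of rules in a well-typed SRS are themselves well-typed; everything else is the routine "local matching" bookkeeping that the lemma's phrase "straightforward" is pointing at.
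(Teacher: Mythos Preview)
The paper provides no proof of this lemma; it simply labels it ``straightforward'' and moves on. Your approach via the local characterization of well-typedness (adjacent source/target matching) is exactly the natural one and is correct in substance.

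Two remarks. First, your flagged concern is real and well spotted: the paper's \emph{stated} definition of a well-typed SRS only constrains $\src$ and $\tgt$ of $\ell$ and $r$, not the internal well-typedness of $r$. As you suspect, the intended reading is that both sides of every rule are well-typed strings --- this is confirmed by how the paper verifies well-typedness in practice (e.g.\ ``\eqref{split-1} rewrites strings of type $\typeA \to \typeA$''), and without it the lemma is simply false (take $a:\tau_1\to\tau_1$, $b:\tau_2\to\tau_2$, and the rule $a \to a\,b\,a$). So your resolution --- record the implicit assumption and proceed --- is the right call.

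Second, a small slip in your $r=\varepsilon$ case: you write that ``$w \ne \varepsilon$ forces at least one of $u,v$ nonempty,'' but this is not so --- if $u=v=\varepsilon$ then $w=\ell$, which is nonempty, yet $w'=\varepsilon$. This edge case falls outside the definition of well-typedness (which is only given for $\Sigma^+$); one either declares $\varepsilon$ vacuously well-typed or notes that the lemma is only claimed for $w' \in \Sigma^+$. Either way it is harmless for the downstream uses, but your justification as written is incorrect.
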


So in an infinite reduction with respect to a  well-typed SRS, all strings are well-typed
if and only if the initial string is well-typed.

For a well-typed  SRS $R$, we say that $R$ is {\em string terminating in the
typed setting} if there does not exist an infinite $\to_R$-reduction consisting of well-typed strings.

For well-typed  SRSs $S \subseteq R$ we say that $S$ is {\em string terminating relative} to $R$ {\em in the
typed setting} if every infinite $\to_R$-reduction consisting of well-typed strings contains
only finitely many $\to_S$-steps.
 
The main theorem, to be exploited several times in this paper, states that this notion of
relative termination in the typed setting is equivalent to the notion of
relative termination in the general setting without typing requirements.

In order to prove this theorem we need a notion of decomposition of (possibly untyped)
strings and a lemma stating some key properties of this decomposition. We denote a string
consisting of $n$ strings $u_1,\ldots,u_n$ by $[u_1,\ldots,u_n]$. The decomposition
$\dec(u)$ of a string $u \in \Sigma^+$ is such a string of strings and is defined as follows:
\begin{itemize}
\item $\dec(a) = [a]$,
\item if $u \in \Sigma^+$ and $\dec(u) = [u_1,\ldots,u_n]$, then 
\begin{itemize}
\item $\dec(au) = [au_1,\ldots,u_n]$ if $\src(a) = \tgt(u_1)$, and
\item $\dec(au) = [a,u_1,\ldots,u_n]$ if $\src(a) \neq \tgt(u_1)$.
\end{itemize}
\end{itemize}
By construction for any $u \in \Sigma^+$ with $\dec(u) = [u_1,\ldots,u_n]$ we have the 
following properties:
\begin{itemize}
\item $u_i$ is well-typed for $i = 1,\ldots,n$;
\item $u = u_1 \ldots u_n$;
\item if $v$ is a well-typed substring of $u$, then it also a substring of $u_i$ for some 
$i = 1,\ldots,n$;
\end{itemize}
As we consider well-typed SRSs only, with non-empty left hand sides, every rewrite step
applied on such $u$ applies to
one of the corresponding $u_i$. In case of a collapsing rule, that is, a rule with
empty right hand side, it may be the case that a type clash is removed, decreasing the length
$|\dec(u)|$ of $\dec(u)$. For instance, for $a : \tau_1 \to \tau_1$, $b : \tau_2 \to \tau_2$,
we have $baabab \to_R baaab$ for $R = \{b \to \varepsilon\}$, while $\dec(baabab) = [b,aa,b,a,b]$ has length 5 and
$\dec(baaab) = [b,aaa,b]$ has length 3. In all other cases the rewriting of $u$ takes place in one 
of the elements of $\dec(u)$, while all other elements remain unchanged. These
observations are summarized in the following lemma.

\begin{lem}
\label{lemtyp}
Let $R$ be a well-typed SRS  with non-empty left hand sides and $u \to_R v$. Then
\begin{itemize}
\item $|\dec(u)| \geq |\dec(v)|$, and 
\item if $|\dec(u)| = |\dec(v)|$, then there exists $i \in \{1,\ldots,n\}$ such that
$u_i \to_R v_i$, and $u_j = v_j$ for $j \neq i$, where
$\dec(u) = [u_1,\ldots,u_n]$ and $\dec(v) = [v_1,\ldots,v_n]$.
\end{itemize}
\end{lem}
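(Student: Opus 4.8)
The plan is to prove Lemma~\ref{lemtyp} by a careful case analysis on the single rewrite step $u \to_R v$, tracking how the decomposition changes. First I would fix notation: let $(\ell \to r) \in R$ be the rule applied, so $u = x\,\ell\,y$ and $v = x\,r\,y$ for some (possibly empty) strings $x,y$, and recall from the three bulleted properties established just before the lemma that, since $\ell$ is well-typed and non-empty, $\ell$ occurs entirely inside one block $u_i$ of $\dec(u) = [u_1,\ldots,u_n]$. Write $u_i = p\,\ell\,q$ where $p$ is a suffix of $x$ and $q$ a prefix of $y$, and note that $u_1\ldots u_{i-1}$ is a prefix of $x$ and $u_{i+1}\ldots u_n$ a suffix of $y$.

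Next I would split into the two cases distinguished by the well-typedness requirement on $R$. \emph{Case $r \neq \varepsilon$:} here $\src(\ell) = \src(r)$ and $\tgt(\ell) = \tgt(r)$, so replacing $\ell$ by $r$ inside $u_i$ neither creates nor destroys any type clash at the two seams of $u_i$ — the first symbol of $u_i$ is unchanged (if $p \neq \varepsilon$) or has the same target as before (if $p = \varepsilon$, the target of $u_i$ was $\tgt(\ell) = \tgt(r)$), and symmetrically for the source at the right seam; moreover $p\,r\,q$ is itself well-typed because $\src(p) = \tgt(\ell)\cdots$ wait — one must check the internal seams where $p$ meets $r$ and $r$ meets $q$, which hold precisely because $\src(p) = \tgt(\ell) = \tgt(r)$ and $\src(r) = \src(\ell) = \tgt(q)$. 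Hence $\dec(v)$ has the same length $n$, with $v_i = p\,r\,q$ and $v_j = u_j$ for $j \neq i$, and $u_i \to_R v_i$. This gives the second bullet (and the equality case of the first). \emph{Case $r = \varepsilon$:} then $\src(\ell) = \tgt(\ell)$, so after deleting $\ell$ the string $u_i$ becomes $p\,q$; the new seam between $p$ and $q$ has $\src(p) = \tgt(\ell) = \src(\ell) = \tgt(q)$, so $p\,q$ is well-typed (also covering the degenerate subcases $p = \varepsilon$ or $q = \varepsilon$, where $u_i$ just shrinks or vanishes). If $p\,q \neq \varepsilon$, $\dec(v)$ again has length $n$ and we are in the equality case with $v_i = p\,q$, $u_i \to_R v_i$; if $p = q = \varepsilon$, the block $u_i$ disappears entirely and, because adjacent blocks $u_{i-1},u_{i+1}$ had a type clash with $u_i$ but might now be adjacent to each other, $\dec(v)$ is obtained from $[u_1,\ldots,u_{i-1},u_{i+1},\ldots,u_n]$ possibly by merging $u_{i-1}$ and $u_{i+1}$ if their seam is now well-typed — in either case $|\dec(v)| < n = |\dec(u)|$, so the first bullet's strict inequality holds and there is nothing to prove for the second bullet.

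The main obstacle, and the part deserving the most care, is the bookkeeping at the block boundaries in the collapsing case: one must argue that deleting $\ell$ from $u_i$ cannot affect the decomposition of the \emph{rest} of $u$ except possibly by merging the two neighbours of $u_i$, i.e.\ that no new well-typed substring spanning three or more old blocks can appear. This follows because the blocks $u_1,\ldots,u_{i-1}$ to the left of the edited region are literally unchanged and their seams are unchanged, likewise $u_{i+1},\ldots,u_n$ to the right, so the only seam that can change status is the one where $u_{i-1}$ and $u_{i+1}$ become adjacent — and $\dec$ is built greedily from the left, so at most that one merge can cascade, and no further. I would phrase this as: $\dec(v)$ coincides with $\dec(u)$ blockwise outside a window of at most three consecutive blocks around position $i$, and inside that window the length can only stay the same or decrease. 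Everything else is the routine seam-checking sketched above.
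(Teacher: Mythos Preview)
Your proposal is correct and follows the same line of reasoning as the paper, which in fact does not give a formal proof of this lemma at all: it states the lemma as a summary of the informal observations in the paragraph preceding it (that $\ell$ lies inside a single block $u_i$, that a collapsing rule may remove a type clash and shorten the decomposition, and that otherwise the rewrite acts within one block). Your write-up is a careful unpacking of exactly those observations, with the seam-checking made explicit; the only difference is level of detail, not approach.
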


\noindent Now we are prepared for the main theorem. The way it is used is as follows: for proving (relative)
termination, try to find a typing such that the SRS is well-typed. Then according to the theorem
the infinite reduction for which a contradiction has to be derived, may be assumed to be
well-typed.

\begin{thm}\label{theo:typeintro-rel}
Let $S \subseteq R$ be well-typed SRSs with non-empty left hand sides. Then  $S$ is 
string terminating relative to $R$ if and
only if $S$ is string terminating relative to $R$ in the typed setting.
\end{thm}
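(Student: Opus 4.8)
The plan is to prove the two directions of the equivalence.  One direction is essentially trivial: if $S$ is string terminating relative to $R$ (in the general setting), then in particular no infinite $\to_R$-reduction with infinitely many $\to_S$-steps exists, and hence certainly no such reduction consisting only of well-typed strings exists, which is exactly the typed statement.  So the real content is the converse: assuming relative termination in the typed setting, derive relative termination in the general setting.

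For the converse I would argue by contraposition.  Suppose $S$ is \emph{not} string terminating relative to $R$; then by the definition (or by Proposition~\ref{prop:rel-prop} applied to $\to_1 := \to_S$, $\to_2 := \to_R$) there is an infinite reduction $u^{(0)} \to_R u^{(1)} \to_R u^{(2)} \to_R \cdots$ with infinitely many $\to_S$-steps.  The idea is to project this reduction through the decomposition $\dec$ and extract from it an infinite \emph{well-typed} reduction still containing infinitely many $\to_S$-steps, contradicting typed relative termination.  By the first bullet of Lemma~\ref{lemtyp}, the sequence $|\dec(u^{(0)})| \geq |\dec(u^{(1)})| \geq \cdots$ is non-increasing in $\nat$, hence eventually constant: there is $N$ with $|\dec(u^{(k)})| = |\dec(u^{(N)})| =: n$ for all $k \geq N$.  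From index $N$ onward, the second bullet of Lemma~\ref{lemtyp} applies to every step: each step $u^{(k)} \to_R u^{(k+1)}$ acts inside a single component, i.e.\ writing $\dec(u^{(k)}) = [u^{(k)}_1,\ldots,u^{(k)}_n]$ there is an index $i_k \in \{1,\ldots,n\}$ with $u^{(k)}_{i_k} \to_R u^{(k+1)}_{i_k}$ and $u^{(k)}_j = u^{(k+1)}_j$ for $j \neq i_k$; moreover this step is a $\to_S$-step precisely when $u^{(k)}_{i_k} \to_S u^{(k+1)}_{i_k}$, since $\to_S \subseteq \to_R$ and the applied rule is the same.

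Now since the tail of the reduction (from $N$ on) still has infinitely many $\to_S$-steps but only finitely many components ($n$ of them), by the pigeonhole principle some fixed component index $i \in \{1,\ldots,n\}$ carries infinitely many $\to_S$-steps among the steps with $i_k = i$.  Restricting attention to the steps with $i_k = i$ and concatenating them yields an infinite reduction $u^{(N)}_i \to_R^* \cdot \to_R^* \cdots$ in component $i$ — more precisely, selecting the subsequence of indices $k$ with $i_k = i$ gives $u^{(N)}_i = w_0 \to_R w_1 \to_R w_2 \to_R \cdots$ (the intermediate stationary steps in other components are simply skipped) with infinitely many $\to_S$-steps.  Each $w_m$ is well-typed by the first property of $\dec$ listed before Lemma~\ref{lemtyp}, so this is an infinite well-typed $\to_R$-reduction with infinitely many $\to_S$-steps, contradicting the assumption that $S$ is string terminating relative to $R$ in the typed setting.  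This completes the contrapositive, hence the theorem.

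The only mildly delicate point — and the one I would be most careful about — is the bookkeeping in the pigeonhole step: one must note that the components only ever shrink or stay the same after index $N$ is fixed, that the component indices are genuinely constant (size exactly $n$) so that ``component $i$'' refers to the same slot throughout, and that concatenating the in-place rewrites of a fixed component really produces a legitimate $\to_R$-reduction of that component (this is immediate from the second bullet of Lemma~\ref{lemtyp}, which says the other components are untouched).  Everything else is routine, relying only on Lemma~\ref{lemtyp} and the three displayed properties of $\dec$, together with the elementary fact that a non-increasing sequence of natural numbers is eventually constant and the infinite pigeonhole principle.
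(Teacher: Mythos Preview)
Your proof is correct and follows essentially the same approach as the paper: both use Lemma~\ref{lemtyp} to stabilize $|\dec(\cdot)|$ at some value $n$ from index $N$ onward, then track each of the $n$ well-typed components separately. The only cosmetic difference is that the paper argues directly (each component's well-typed reduction contains only finitely many $S$-steps, so summing over the $n$ components and the finite initial segment gives finitely many $S$-steps in total), whereas you argue the contrapositive via pigeonhole (infinitely many $S$-steps must concentrate in some single component, yielding the offending well-typed reduction); the underlying argument is the same.
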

\begin{proof}
The `only if'-part is trivial. For the `if'-part assume we have an infinite $R$-reduction
$u_1 \to_R u_2 \to_R u_3 \to_R \cdots$, not
well-typed; we have to prove it contains only finitely many $S$-steps.
According to the first claim of Lemma \ref{lemtyp} there exist $n,N$ such $|\dec(u_i)| =
|\dec(u_{i+1})| = n$ for all $i \geq N$. Write $\dec(u_i) = [u_{i1},\ldots,u_{in}]$ for $i \geq
N$. According to the second part of Lemma \ref{lemtyp} for every $j = 1,\ldots,n$ we have
either $u_{ij} = u_{i+1,j}$ or  $u_{ij} \to_R u_{i+1,j}$ for $i \geq N$. If  $u_{ij} \to_R u_{i+1,j}$ occurs infinitely often
this yields a well-type infinite $R$-reduction, containing only finitely many $S$-steps since
$S$ is terminating relative to $R$ in the typed setting. If $u_{ij} \to_R u_{i+1,j}$ occurs
finitely often, it also contains only finitely  many $S$-steps. As the number of $S$-steps in 
the finite part $u_1 \to_R^* u_N$ is finite too, we conclude that the total number of 
$S$-steps in the original reduction is finite.
\end{proof}

By instantiating the previous theorem with $S=R$ we obtain correctness of type introduction for string termination:
\begin{corollary}\label{cor:typeintro-term}
Let $R$ be a well-typed SRS. Then $R$ is string terminating if and only if 
$R$ is string terminating in the typed setting.
\end{corollary}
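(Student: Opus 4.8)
The plan is to derive Corollary~\ref{cor:typeintro-term} as a direct consequence of Theorem~\ref{theo:typeintro-rel} by specializing the relative-termination statement to the case $S = R$. First I would observe that ordinary string termination of an SRS $R$ is exactly the special case of relative string termination ``$S$ is string terminating relative to $R$'' when $S = R$: an infinite $\to_R$-reduction contains only finitely many $\to_S = \to_R$ steps if and only if it is finite, so ``$R$ terminating relative to $R$'' is equivalent to ``$\to_R$ is terminating.'' The same identification works verbatim in the typed setting: ``$R$ is string terminating relative to $R$ in the typed setting'' unwinds to ``there is no infinite $\to_R$-reduction consisting of well-typed strings,'' which is precisely the definition of ``$R$ is string terminating in the typed setting.''

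Having made both identifications, the corollary follows by simply applying Theorem~\ref{theo:typeintro-rel} with $S := R$. The hypothesis of the theorem requires $S \subseteq R$ with non-empty left-hand sides; both hold trivially here (an SRS has non-empty left-hand sides by definition of SRS given in Section~\ref{secprel}, and $R \subseteq R$). The theorem then yields that $R$ is string terminating relative to $R$ if and only if $R$ is string terminating relative to $R$ in the typed setting, which by the two identifications above is exactly the claimed equivalence between string termination of $R$ and string termination of $R$ in the typed setting.

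There is essentially no obstacle here: the only thing to be careful about is that the definitions of termination and relative termination really do coincide when $S = R$, which is immediate from the definitions, and that $R$ is automatically well-typed as required — that hypothesis is already assumed in the statement of the corollary. So the proof is a one-line instantiation, and I would present it as such, noting the $S = R$ specialization and that relative termination with $S = R$ is plain termination in both the general and the typed setting.
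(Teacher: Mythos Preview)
Your proposal is correct and matches the paper's own approach exactly: the paper introduces the corollary with the sentence ``By instantiating the previous theorem with $S=R$ we obtain correctness of type introduction for string termination,'' which is precisely your argument. Your additional remarks unpacking why relative termination with $S=R$ collapses to ordinary termination (in both the untyped and typed settings) and why the non-empty-left-hand-side hypothesis is automatic are all correct and make the one-line instantiation fully explicit.
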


\subsection{The Transformation Split}\label{sec:split}
The idea of the transformation $\sym{split}$ is to perform a single cycle rewrite step  $[u] \cycto_R [v]$ which
uses rule $(\ell \to r) \in R$, by either applying a string rewrite step $u \to_R v$ or 
by splitting the rule $(\ell \to r)$ into two rules $(\ell_p \to r_p)$ and $(\ell_s \to r_s)$, 
where $\ell = \ell_p\?\ell_s$ and $r = r_p\?r_s$. Then a cycle rewrite step can be simulated by 
a prefix and a subsequent suffix rewrite step: first apply rule $\ell_s \to r_s$ to a prefix of $u$  and 
then apply rule $\ell_p \to r_p$ to a suffix of the obtained string. 

\begin{example}\label{example-simple}
Let $R = \{a\?b\?c \to b\?b\?b\?b\}$ and  $[b\?c\?d\?d\?a] \cycto_R [b\?b\?d\?d\?b\?b]$. The rule $a\?b\?c \to b\?b\?b\?b$ can be split into the rules $a \to b\?b$ and $b\?c \to b\?b$ {s.t.}~$b\?c\?d\?d\?a \prefixrewrite_{\{b\?c \to b\?b\}} b\?b\?d\?d\?a \suffixrewrite_{\{a \to b\?b\}} b\?b\?d\?d\?b\?b$.
\end{example}
We describe the idea of the transformation $\sym{split}$ more formally. 
It uses the following observation of cycle rewriting:
if $[u] \cycto_R [v]$, then $u \sim \ell\?w$, $(\ell \to r) \in R$, and $v \sim r w$. 
From $u \sim \ell\?w$ follows that $u = u_1\?u_2$  and $\ell\?w = u_2\?u_1$ for some $u_1,u_2$. We consider the cases for $u_2$: 

\begin{enumerate}
 \item\label{enum-splitexpl1} If $u_i = \varepsilon$ (for $i=1$ or $i=2$), then $u = \ell\?w$ and  
$u \prefixrewrite_R r\?w$ 
by a prefix string rewrite step.
 \item\label{enum-splitexpl2} If $\ell$ is a prefix of $u_2$, {i.e.}~$\ell\?u_2' = u_2$, then $w = u_2'\?u_1$, $u = u_1\?\ell\?u_2' \to_R u_1\?r\?u_2'$, and $u_1\?r\?u_2' \sim r\?w$.
 \item\label{enum-splitexpl3} If $u_2$ is a proper prefix of $\ell$, then there exist $\ell_p, \ell_s$ with $\ell = \ell_p\?\ell_s$ {s.t.}~$u_2 = \ell_p$ and $\ell_s$ is a 
 non-empty prefix of $u_1$, {i.e.}~$u_1 = \ell_s\?w$ and $u = u_1\?u_2 = \ell_s\?w\?\ell_p \prefixrewrite_{\{\ell_s\to r_s\}}r_s\?w\?\ell_p \suffixrewrite_{\{\ell_p \to r_p\}} r_s\?w\?r_p \sim r\?w$ if $r_p\?r_s = r$.
\end{enumerate}

\noindent The three cases show that a cycle rewrite step $[u] \cycto_{\{\ell \to r\}} [v]$ can either be performed by applying a string rewrite step $u \to_{\{\ell \to r\}} v'$ where $v' \sim v$ (cases \ref{enum-splitexpl1} and \ref{enum-splitexpl2}) or in case \ref{enum-splitexpl3} by
 splitting $\ell \to r$ into two rules $\ell_p \to r_p$ and $\ell_s \to r_s$ such that $u \prefixrewrite_{\{\ell_s \to r_s\}} u'$ replaces a prefix of $u$ by $r_s$ and
$u' \suffixrewrite_{\{\ell_p \to r_p\}} v'$ replaces a suffix  of $u'$ by $r_p$ {s.t.}~$v' \sim v$.

For splitting a rule $(\ell \to r)$ into rules $\ell_p \to r_p$ and $\ell_s \to r_s$, we may choose any decomposition of $r$ for $r_p$ and $r_s$ ({s.t.}~$r=r_p\,r_s$).
In the following, we will work with $r_p = r$ and $r_s = \varepsilon$.

The above cases for cycle rewriting show that a {\em sound} transformation of the cycle rewrite relation  $\cycto_R$ 
into a string rewrite relation is the SRS which consists of all rules of $R$ and all pairs of rules $\ell_s \to \varepsilon$ and $\ell_p \to r$ for all $(\ell \to r) \in R$ and all $\ell_p,\ell_s$ with $|\ell_p|> 0$, $|\ell_s|> 0$, and $\ell = \ell_p\?\ell_s$.  
However, this transformation does not ensure that the rules evolved by splitting are used as prefix and suffix rewrite steps only.
Indeed, the transformation in this form is useless for nearly all cases, since whenever the right-hand side $r$ of a rule $(\ell \to r) \in R$ contains a symbol $a\in\Sigma$ which is the first or the last symbol in $\ell$, then the transformed SRS is non-terminating. 
For instance, for $R = \{a\?a \to a\?b\?a\}$ the cycle rewrite relation $\cycto_R$ is terminating, while the rule $a \to a\?b\?a$ (which would be generated by splitting the left-hand side of the rule) leads to non-termination of the string rewrite relation. Note that this also holds if we choose any other decomposition of the right-hand side.
Hence, in our transformation we introduce additional symbols to ensure:

\begin{itemize}
 \item $\ell_s \to \varepsilon$ can only be applied to a prefix of the string.
 \item $\ell_p \to r$ can only be applied to a suffix of the string.
 \item If $\ell_s \to \varepsilon$ is applied to a prefix, then also $\ell_p \to r$ must be applied, 
in a synchronized manner ({i.e.}~no other rule $\ell'_B \to \varepsilon$ or $\ell'_A \to r'$ can be applied in between).
\end{itemize}

\noindent In detail, we will prepend the fresh symbol $\sym{B}$ to the beginning of the string, 
and append the fresh symbol $\sym{E}$ to the end of the string. 
These symbols guarantee, 
that prefix rewrite steps $\ell\?u \prefixrewrite_{(\ell \to r)} r\?u$ can be expressed 
with usual string rewrite rules by replacing the left hand side $\ell$ with $\sym{B}\ell$ and 
analogous for suffix rewrite steps $u\?\ell \suffixrewrite_{(\ell \to r)} u\?r$ 
by replacing the left hand side $\ell$ with $\ell\?\sym{E}$.

Let $(\ell_i \to r_i)$ be the $i^{\text{th}}$ rule of the SRS which is split into two rules $\ell_s \to \varepsilon$ and $\ell_p \to r_i$,
where $\ell_p\?\ell_s = \ell_i$. After applying the rule 
$\ell_s \to \varepsilon$ to a prefix of the string, the symbol $\sym{B}$ will be replaced
by the two fresh symbols $\sym{W}$ (for ``wait'')
and $\sym{R}_{i,j}$ where $i$ represents the $i^{\text{th}}$ rule 
 and $j$ means that $\ell_i$ has been split after $j$ symbols ({i.e.}~$|\ell_p| = j$). 
The fresh symbol $\sym{L}$ is used to signal that the suffix has been rewritten by rule $\ell_p \to r$.
Finally, we use a copy of the alphabet, to ensure completeness of the transformation:
for an alphabet $\Sigma$, we denote by $\overline{\Sigma}$ a fresh copy of $\Sigma$, {i.e.}~$\overline{\Sigma} = \{\overline{a} \mid a \in \Sigma\}$. For a word $w \in \Sigma^*$ with $\overline{w}\in\overline{\Sigma}^*$, we denote the word $w$ where every symbol $a$ is replaced by $\overline{a}$. Analogously, for a word $w \in \overline{\Sigma}^*$ with $\underline{w}\in\Sigma$, we denote $w$ where every symbol $\overline{a}$  is replaced by the symbol $a$.

\begin{definition}[The transformation $\sym{split}$] 
Let $R = \{\ell_1 \to r_1, \ldots, \ell_n \to r_n\}$ be an SRS over alphabet $\Sigma$. Let $\overline{\Sigma}$ be a fresh copy of $\Sigma$ and let
 $\sym{B}, \sym{E}, \sym{W}, \sym{R}_{i,j}, \sym{L}$  be fresh symbols (fresh for $\Sigma \cup \overline{\Sigma}$). The SRS $\sym{split}(R)$ over alphabet $\Sigma \cup \overline{\Sigma} \cup \{\sym{B},\sym{E},\sym{L},\sym{W}\} \cup \bigcup_{i=1}^n\{ \sym{R}_{i,j} \mid 1 \leq j < |\ell_i|\}$ consists of the following string rewrite rules: 
\begin{flushleft}
\begin{minipage}{.5\textwidth}
\begin{align}
  \ell_i                  &\to r_i &&\hspace*{-1.1ex}\text{for all $(\ell_i \to r_i) \in R$} \label{split-1}\tag{splitA} 
\\
  \overline{a} \? \sym{L} &\to \sym{L} \? a&&\hspace*{-1.1ex}\text{for all $a \in \Sigma$} \label{split-2}\tag{splitB}
\\
   \sym{W} \? \sym{L} &\to \sym{B} \label{split-3}\tag{splitC}
   \\
\sym{R}_{i,j} \? a &\to \overline{a} \? \sym{R}_{i,j}&&\hspace*{-1.1ex}\text{for all $a \in \Sigma$} \label{split-6}\tag{splitD}
\end{align}
\end{minipage}\qquad
\begin{minipage}{.44\textwidth}
\begin{align}
 \intertext{for all $(\ell_i \to r_i) \in R$ and all $1 \leq j < |\ell_i|$ and
    $\ell_p \? \ell_s= \ell_i$ with $|\ell_p| = j$:}
      \sym{B} \? \ell_s &\to \sym{W} \? \sym{R}_{i,j} \label{split-4}\tag{splitE}
\\
\sym{R}_{i,j} \? \ell_p \? \sym{E} &\to \sym{L} \? r_i \? \sym{E} \label{split-5}\tag{splitF}
\end{align}
\end{minipage}
\end{flushleft}
\end{definition}
We describe the intended use of the rules and the extra symbols. The symbols $\sym{B}$ and $\sym{E}$ mark the start and the end of the string,
i.e.~for a cycle $[u]$ the SRS $\sym{split}(R)$ rewrites $\sym{B}\?u\?\sym{E}$.

Let $[u] \cycto_R [w]$. The rule \eqref{split-1} covers the case that also $u \to_R w$ holds.
Now assume that for $w' \sim w$ we have $u \prefixrewrite_{\{\ell_s \to \varepsilon\}} v \suffixrewrite_{\{\ell_p \to r\}} w'$  (where $(\ell_p\ell_s \to r)\in R$).
Rule \eqref{split-4} performs the prefix rewrite step and replaces $\sym{B}$ by $\sym{W}$ to ensure that no other such a rule can be applied. 
Additionally, the symbol $\sym{R}_{i,j}$ corresponding to the rule and its splitting is added to
ensure that only the right suffix rewrite step is applicable. Rule \eqref{split-6} moves the symbol $\sym{R}_{i,j}$ to right
and rule \eqref{split-5} performs the suffix rewrite step. Rules \eqref{split-2} and \eqref{split-3} are used to finish the simulation
of the cycle rewrite step by using the symbol $\sym{L}$ to restore the original alphabet and to finally replace $\sym{W}\?\sym{L}$ by
$\sym{B}$.

\begin{example}
For $R_1 = \{a \? a \to a \? b \? a\}$ the transformed string rewrite system $\sym{split}(R_1)$ is:
$$\begin{array}{l@{~}c@{~}l@{~}l@{\qquad}l@{~}c@{~}l@{~}l@{\qquad}l@{~}c@{~}l@{~}l}
a\?a &\to& a\?b\?a & \eqref{split-1}&
\overline{a}\?\sym{L} &\to& \sym{L}\?a &\eqref{split-2}&
\overline{b}\?\sym{L} &\to& \sym{L}\?b &\eqref{split-2}\\
\sym{W}\?\sym{L} &\to& \sym{B}&\eqref{split-3}&
\sym{B}\?a\?&\to&\sym{W}\?\sym{R}_{1,1}&\eqref{split-4}&
\sym{R}_{1,1}\?a\?\sym{E} &\to& \sym{L}\?a\?b\?a\?\sym{E}&\eqref{split-5}\\
\sym{R}_{1,1}\?a &\to& \overline{a}\?\sym{R}_{1,1}&\eqref{split-6}&
\sym{R}_{1,1}\?b &\to& \overline{b}\?\sym{R}_{1,1}&\eqref{split-6}\\
\end{array}$$
For instance, the cycle rewrite step $[a\?b\?a] \cycto_{R_1} [b\?a\?b\?a]$
is simulated in the transformed system by the following sequence of string rewrite steps (where the redex is highlighted by a gray background):
$$
\Redex{$\sym{B}\?a$}\?b\?a\?\sym{E}
\xrightarrow{\!\tiny\ref{split-4}\!\!}
\sym{W}\?\Redex{$\sym{R}_{1,1}\?b\?$}a\?\sym{E}
\xrightarrow{\!\tiny\ref{split-6}\!\!}
\sym{W}\?\overline{b}\?\Redex{$\sym{R}_{1,1}\?a\?\sym{E}$}
\xrightarrow{\!\tiny\ref{split-5}\!\!}
\sym{W}\?\Redex{$\overline{b}\?\sym{L}\?$}a\?b\?a\?\sym{E}
\xrightarrow{\!\tiny\ref{split-2}\!\!}
\Redex{$\sym{W}\?\sym{L}\?$}b\?a\?b\?a\?\sym{E}
\xrightarrow{\!\tiny\ref{split-3}\!\!}
\sym{B}\?b\?a\?b\?a\?\sym{E}.$$

As a further example, for the system $R_2 = \{a\?b\?c \to c\?b\?a\?c\?b\?a,~a\?a \to a\}$, the transformed string rewrite system $\sym{split}(R_2)$ is:
$$
\begin{array}{@{}l@{~}c@{~}l@{~}l@{\quad}l@{~}c@{~}l@{~}l@{\quad}l@{~}c@{~}l@{~}l@{}}
a\?b\?c &\to& c\?b\?a\?c\?b\?a &\eqref{split-1} &
a\?a &\to& a\?&\eqref{split-1}&
\sym{W}\?\sym{L} &\to& \sym{B}     &\eqref{split-3}\\
\overline{a}\?\sym{L}&\to& \sym{L}\?a &\eqref{split-2}&
\overline{b}\?\sym{L} &\to& \sym{L}\?b &\eqref{split-2}&
\overline{c}\?\sym{L} &\to& \sym{L}\?c &\eqref{split-2}\\
\sym{B}\?b\?c &\to& \sym{W}\?\sym{R}_{1,1}&\eqref{split-4}&
\sym{B}\?c &\to& \sym{W}\?\sym{R}_{1,2}&\eqref{split-4}&
\sym{B}\?a &\to& \sym{W}\?\sym{R}_{2,1}&\eqref{split-4}\\
\sym{R}_{1,1}\?a\?\sym{E} &\to& \sym{L}\?c\?b\?a\?c\?b\?a\?\sym{E} &\eqref{split-5}&
\sym{R}_{1,2}\?a\?b\?\sym{E} &\to& \sym{L}\?c\?b\?a\?c\?b\?a\?\sym{E} &\eqref{split-5}&
\sym{R}_{2,1}\?a\?\sym{E} &\to& \sym{L}\?a\?\sym{E}&\eqref{split-5}\\
\sym{R}_{1,1}\?a &\to& \overline{a}\?\sym{R}_{1,1}&\eqref{split-6}&
\sym{R}_{1,2}\?a &\to& \overline{a}\?\sym{R}_{1,2} &\eqref{split-6}&
\sym{R}_{2,1}\?a &\to& \overline{a}\?\sym{R}_{2,1} &\eqref{split-6}\\
\sym{R}_{1,1}\?b &\to& \overline{b}\?\sym{R}_{1,1} &\eqref{split-6}&
\sym{R}_{1,2}\?b &\to& \overline{b}\?\sym{R}_{1,2} &\eqref{split-6}&
\sym{R}_{2,1}\?b &\to& \overline{b}\?\sym{R}_{2,1} &\eqref{split-6}\\
\sym{R}_{1,1}\?c &\to& \overline{c}\?\sym{R}_{1,1} &\eqref{split-6}&
\sym{R}_{1,2}\?c &\to& \overline{c}\?\sym{R}_{1,2} &\eqref{split-6}&
\sym{R}_{2,1}\?c &\to& \overline{c}\?\sym{R}_{2,1} &\eqref{split-6}\\
\end{array}$$
Termination of $\sym{split}(R_1)$ and $\sym{split}(R_2)$ can be proved by \aprove~and \TTTT.
\end{example}

\begin{proposition}[Soundness of $\sym{split}$]\label{prop:split-sound}
 If $\tosplit$ is terminating then $\cycto_R$ is terminating.
\end{proposition}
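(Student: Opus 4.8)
The plan is to prove the contrapositive: assuming $\cycto_R$ is non-terminating, construct an infinite $\tosplit$-reduction. Suppose we have an infinite cycle rewrite sequence $[w_0] \cycto_R [w_1] \cycto_R [w_2] \cycto_R \cdots$. The key idea is to simulate each single cycle step $[w_k] \cycto_R [w_{k+1}]$ by a finite, non-empty sequence of $\tosplit$-steps starting from $\sym{B}\,u_k\,\sym{E}$ and ending at $\sym{B}\,u_{k+1}\,\sym{E}$, where $u_k$ is a specific representative of $[w_k]$. Concatenating these simulation blocks yields an infinite $\tosplit$-reduction.

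The core of the argument is a single-step simulation lemma: if $[u] \cycto_R [v]$ then for every representative $x \in [u]$ there is a representative $y \in [v]$ with $\sym{B}\,x\,\sym{E} \to_{\sym{split}(R)}^+ \sym{B}\,y\,\sym{E}$. This is exactly the case analysis already set up in the excerpt (the three cases for $u_2$). In cases \ref{enum-splitexpl1} and \ref{enum-splitexpl2}, the cycle step is realized by an ordinary string rewrite step $x \to_{\{\ell\to r\}} y'$ with $y' \sim v$; wrapping with $\sym{B},\sym{E}$, a single application of rule \eqref{split-1} gives $\sym{B}\,x\,\sym{E} \to \sym{B}\,y'\,\sym{E}$ (taking $y := y'$, noting the redex lies strictly between $\sym{B}$ and $\sym{E}$). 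In case \ref{enum-splitexpl3}, we have $x = \ell_s\,w\,\ell_p$ with $\ell = \ell_p\,\ell_s$, $|\ell_p| = j > 0$, $|\ell_s| > 0$, and we simulate via the chain
$$\sym{B}\,\ell_s\,w\,\ell_p\,\sym{E} \xrightarrow{\ref{split-4}} \sym{W}\,\sym{R}_{i,j}\,w\,\ell_p\,\sym{E} \xrightarrow{\ref{split-6}^*} \sym{W}\,\overline{w}\,\sym{R}_{i,j}\,\ell_p\,\sym{E} \xrightarrow{\ref{split-5}} \sym{W}\,\overline{w}\,\sym{L}\,r\,\sym{E} \xrightarrow{\ref{split-2}^*} \sym{W}\,\sym{L}\,w\,r\,\sym{E} \xrightarrow{\ref{split-3}} \sym{B}\,w\,r\,\sym{E},$$
and one checks $w\,r \sim r\,w \sim v$, so $y := w\,r$ works. (One should double-check that when $w = \varepsilon$ the chain still has the right form, with the \eqref{split-6} and \eqref{split-2} phases being empty.) Each such block contains at least one $\tosplit$-step, so the simulation is strictly productive.

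Assembling the pieces: start with $x_0 := w_0$, and inductively given $x_k \in [w_k]$, apply the lemma with $[u] = [w_k]$, $[v] = [w_{k+1}]$ to get $x_{k+1} \in [w_{k+1}]$ with $\sym{B}\,x_k\,\sym{E} \to_{\sym{split}(R)}^+ \sym{B}\,x_{k+1}\,\sym{E}$. Splicing all these finite nonempty reductions produces an infinite $\tosplit$-reduction starting from $\sym{B}\,w_0\,\sym{E}$, contradicting termination of $\tosplit$. I do not expect any serious obstacle here — the construction is essentially bookkeeping already laid out in the text; the only thing requiring a little care is the boundary handling in case \ref{enum-splitexpl3} (verifying that $\ell_s$ being a genuine nonempty proper split of $\ell$ means the rules \eqref{split-4}, \eqref{split-5} with matching index $(i,j)$ actually exist in $\sym{split}(R)$, i.e.\ that $1 \le j < |\ell_i|$), and confirming the $\sim$-equivalences line up so that the wrapped strings $\sym{B}\,\cdot\,\sym{E}$ chain together correctly.
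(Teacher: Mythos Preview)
Your proposal is correct and follows essentially the same approach as the paper: prove the contrapositive by simulating each cycle rewrite step by a non-empty $\tosplit$-sequence between $\sym{B}\,\cdot\,\sym{E}$-wrapped representatives, then chain these blocks. The paper's proof is terser---it simply asserts ``by construction of $\sym{split}(R)$'' that $[u]\cycto_R[v]$ implies $\sym{B}\,u'\,\sym{E}\tosplit^+\sym{B}\,v'\,\sym{E}$ for suitable $u'\sim u$, $v'\sim v$---whereas you spell out the three-case simulation explicitly, which is exactly the bookkeeping the paper already laid out before the proposition.
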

\begin{proof}
By construction of $\sym{split}(R)$, it holds that if $[u] \cycto_R [v]$, then $\sym{B}\?u'\?\sym{E} 
\tosplit^+ \sym{B}\?v'\?\sym{E}$ with $u \sim u'$ and $v \sim v'$. 
Thus, for every infinite sequence 
$[w_1] \cycto_R [w_2] \cycto_R \cdots$, 
 there exists an infinite sequence  
$\sym{B}~w_1'~\sym{E} \tosplit \sym{B}~w_2'~\sym{E} \tosplit \cdots$ with $w_i \sim w_i'$ for all $i$.
\end{proof}

\subsubsection{Completeness of Split}
We use type introduction for string rewriting (see Section~\ref{sec:types}) and use the set of types
 $\Types := \{\typeA, \typeCopyA, \typeConst, \typeT\}$ and type the symbols used by $\sym{split}(R)$ as follows:
 $$
\begin{array}{c@{\qquad\qquad}c@{\qquad\qquad}c}
\begin{array}{l@{~}ll}
 \sym{L}   & : \typeA \to \typeCopyA
\\
 \sym{B}   & : \typeA \to \typeT
 \\&
\end{array}
&
\begin{array}{l@{~}ll}
 \sym{W}   & : \typeCopyA \to \typeT
 \\
 \sym{E}   & : \typeConst \to \typeA
\\&
\end{array}
&\begin{array}{l@{~}ll}
 a          & : \typeA \to \typeA & \text{for all $a \in \Sigma$}
\\
 \overline{a}          & : \typeCopyA \to \typeCopyA & \text{for all $\overline{a} \in \overline{\Sigma}$}
\\
 \sym{R}_{i,j} & : \typeA \to \typeCopyA &  \text{for all $\sym{R}_{i,j}$}
\end{array}
\end{array}
$$
First one can verify that $\sym{split}(R)$ is a typed SRS, i.e.~the left hand sides 
and right hand sides are well-typed with the same type:
\eqref{split-1} rewrites strings of type $\typeA \to \typeA$, 
\eqref{split-2} and \eqref{split-6} rewrite strings of type $\typeA \to \typeCopyA$,
\eqref{split-3}  and \eqref{split-4} rewrite strings of type $\typeA  \to \typeT$,
and \eqref{split-5} rewrites strings of type $\typeConst \to \typeCopyA$.
Thus $\sym{split}(R)$ is a typed SRS.

\begin{lemma}\label{lemma:sort-s2-sufficient}
If a typed string $w$ of type $\tau_1 \to \tau_2$ with $\tau_1,\tau_2\in\Types$ admits an infinite reduction
{w.r.t.}~$\sym{split}(R)$, then there exists a typed string $w'$ of type $\typeConst \to \typeT$, which admits an infinite reduction {w.r.t.}~$\sym{split}(R)$.
\end{lemma}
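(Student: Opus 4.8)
The goal is to show that if \emph{any} typed string admits an infinite $\sym{split}(R)$-reduction, then already some string of the specific type $\typeConst \to \typeT$ does. The natural strategy is: take an arbitrary infinite reduction $w = w_1 \to_{\sym{split}(R)} w_2 \to_{\sym{split}(R)} \cdots$ of type $\tau_1 \to \tau_2$, and manufacture from it (or from a suitable piece of it) an infinite reduction starting from a $\typeConst \to \typeT$ string. The key observation is that a string of type $\typeConst \to \typeT$ is exactly one of the form $\sym{B}\,u\,\sym{E}$ or, after some rewriting, $\sym{W}\,\sym{R}_{i,j}\,\overline{v}\,\sym{L}\,u\,\sym{E}$-like shapes --- i.e.\ the "well-formed" strings that the transformation is designed to rewrite. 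So the plan is first to understand which typed strings can occur, grouped by their source/target type pair, and then show every infinite reduction can be ``embedded'' into one on $\typeConst \to \typeT$ strings by prepending/appending context.

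\textbf{Step 1: classify typed strings by type.} Using the type assignment given just before the lemma, I would enumerate the possible shapes of a well-typed string of each type $\tau_1 \to \tau_2$. Since $a:\typeA\to\typeA$, $\overline a:\typeCopyA\to\typeCopyA$, $\sym{R}_{i,j}:\typeA\to\typeCopyA$, $\sym{L}:\typeA\to\typeCopyA$, $\sym{B}:\typeA\to\typeT$, $\sym{W}:\typeCopyA\to\typeT$, $\sym{E}:\typeConst\to\typeA$, a well-typed string reads, from right to left: an optional $\sym{E}$ (contributing source $\typeConst$), then a block of $\Sigma$-symbols (type $\typeA\to\typeA$), then optionally one of $\sym{L}$ or $\sym{R}_{i,j}$ (switching to $\typeCopyA$), then a block of $\overline\Sigma$-symbols, then optionally $\sym{W}$ or the whole thing could start with $\sym{B}$ instead. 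Concretely a typed string of type $\typeConst\to\typeT$ is one of $\sym{B}\,u\,\sym{E}$, $\sym{W}\,\overline v\,\sym{L}\,u\,\sym{E}$, $\sym{W}\,\overline v\,\sym{R}_{i,j}\,u\,\sym{E}$ (with $u\in\Sigma^*$, $\overline v\in\overline\Sigma^*$), and these are precisely the strings on which the $\sym{split}(R)$ rules interact ``globally.'' The point of the classification is: for any other type pair $\tau_1\to\tau_2$, a string of that type is a \emph{substring} of some $\typeConst\to\typeT$ string, obtained by deleting a prefix (some of $\sym{W},\overline v,\ldots$) and/or a suffix (some of $u,\sym{E}$).

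\textbf{Step 2: embed the reduction.} Given an infinite reduction of a $\tau_1\to\tau_2$ string $w$, I want to find strings $p,q$ so that $p\,w\,q$ is a well-typed $\typeConst\to\typeT$ string and $p\,w\,q$ still admits an infinite reduction --- which holds because rewriting is closed under context: $w\to w'$ implies $p\,w\,q\to p\,w'\,q$, and no rule of $\sym{split}(R)$ ever ``consumes'' an entire string (all left-hand sides are context-insensitive). So the only issue is \emph{existence} of suitable $p,q$ that make the type come out $\typeConst\to\typeT$; from the classification in Step~1 this is a finite case check: if $\tgt(w)\neq\typeT$ we prepend the appropriate prefix from $\{\sym{B}\},\{\sym{W}\},\{\sym{W}\,\overline a\,\sym{L}\},\ldots$ and if $\src(w)\neq\typeConst$ we append the appropriate suffix from $\{\sym{E}\},\{a\,\sym{E}\},\ldots$, always solvable because every type is reachable as a source/target by some symbol. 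A mild subtlety: one must make sure the prepended prefix and appended suffix don't create a \emph{new} redex that would be needed for the infinite reduction to continue --- but this isn't needed; the infinite reduction inside $w$ continues regardless, since context-closure of $\to_{\sym{split}(R)}$ is unconditional. So the embedding is immediate.

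\textbf{Main obstacle.} The genuinely fiddly part is Step~1 --- being careful and exhaustive about exactly which strings are well-typed, and in particular noticing that there is \emph{no} well-typed string of type, say, $\typeConst\to\typeConst$ or $\typeT\to\typeConst$ (so those pairs simply don't arise and need not be handled), and that the list of $\typeConst\to\typeT$ shapes above is complete. I expect the proof to proceed by a short lemma enumerating well-typed strings per type, then a one-line context-closure argument. The risk is an overlooked shape (e.g.\ strings containing $\sym{B}$ in the middle are impossible since $\sym{B}$ has target $\typeT$ which no symbol has as source --- worth stating explicitly), so I would double-check that $\sym{B}$ and $\sym{W}$ can only occur leftmost and $\sym{E}$ only rightmost, which pins down the $\typeConst\to\typeT$ strings exactly.
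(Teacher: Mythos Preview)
Your proposal is correct and follows essentially the same approach as the paper: pad $w$ on both sides with fixed strings $p,q$ (chosen according to $\tau_1,\tau_2$) so that $p\,w\,q$ has type $\typeConst\to\typeT$, and then invoke closure under context to transport the infinite reduction. The paper's proof is slightly leaner than your plan: it does not carry out the full classification of well-typed strings you describe in Step~1 (that classification is the content of the \emph{next} lemma in the paper), but simply writes down the padding case-by-case---$u\in\{\varepsilon,\sym{B},\sym{W}\}$ depending on $\tau_2$, and $v\in\{\varepsilon,\sym{E},\sym{L}\,\sym{E}\}$ depending on $\tau_1$---after noting that $\tau_1\neq\typeT$ and $\tau_2\neq\typeConst$ are impossible for non-empty well-typed strings.
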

\begin{proof}
Assume that $w$ is of type $\tau_1 \to \tau_2 \not= \typeConst \to \typeT$.
Note that $\tau_1 \not= \typeT$ and $\tau_2 \not=\typeConst$, since no well-typed, non-empty strings of these types exist.

We prepend and append symbols to $w$, resulting in a string $uwv$ s.t. $uwv$ is well-typed with
type $\typeConst \to \typeT$. If $\tau_2 = \typeA$ then choose $u = \sym{B}$, if $\tau_2 = \typeCopyA$ then 
choose $u = \sym{W}$, otherwise choose $u = \varepsilon$. If $\tau_1 = \typeA$, then choose $v=\sym{E}$,
if $\tau_1=\typeCopyA$ then choose $v = \sym{L}\sym{E}$.

Clearly, if $w \to_{\sym{split}(R)} w_1 \to_{\sym{split(R)}} \cdots$  is 
an infinite reduction {w.r.t.}~$\sym{split}(R)$, then also
$uwv \to_{\sym{split}(R)} uw_1v \to_{\sym{split(R)}} \cdots$  is an infinite reduction 
{w.r.t.}~$\sym{split}(R)$.
\end{proof}

Inspecting the typing of the symbols shows:
\begin{lemma}\label{lem:cases-sort-S2}
Any well-typed string of type $\typeConst \to \typeT$ is of one of the following forms:
\begin{itemize}
\item $\sym{B}\?u\?\sym{E}$ where $u \in \Sigma^*$, 
\item $\sym{W}\?w\?\sym{L}\?u\?\sym{E}$ where $w \in \overline{\Sigma}^*$ and $u\in\Sigma^*$, or
\item $\sym{W}\?w\?\sym{R}_{i,j}\?u\?\sym{E}$ where $w \in \overline{\Sigma}^*$ and $u\in\Sigma^*$.
\end{itemize}
\end{lemma}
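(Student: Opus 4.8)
The plan is to prove Lemma~\ref{lem:cases-sort-S2} by a direct structural analysis of well-typed strings, using the given type assignment to severely constrain which symbols can appear in which positions. Recall that a string $a_1 a_2 \cdots a_k$ is well-typed iff $\src(a_i) = \tgt(a_{i+1})$ for all $i$, and it has type $\typeConst \to \typeT$ iff additionally $\tgt(a_1) = \typeT$ and $\src(a_k) = \typeConst$. So I would first read off from the typing table exactly which symbols have which source/target types. The only symbols with target $\typeT$ are $\sym{B} : \typeA \to \typeT$ and $\sym{W} : \typeCopyA \to \typeT$; hence the first symbol $a_1$ of the string is either $\sym{B}$ or $\sym{W}$. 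Likewise the only symbol with source $\typeConst$ is $\sym{E} : \typeConst \to \typeA$, so the last symbol $a_k$ must be $\sym{E}$ (in particular $k \geq 2$, since $\sym{E}$ cannot also be the first symbol as $\tgt(\sym{E}) = \typeA \neq \typeT$).

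Next I would trace the ``middle'' of the string. Since $\tgt(\sym{E}) = \typeA$, the symbol immediately before $\sym{E}$ must have source type $\typeA$; the symbols with source $\typeA$ are exactly $a \in \Sigma$, $\sym{L}$, $\sym{B}$, and $\sym{R}_{i,j}$. Working backwards from $\sym{E}$, as long as we stay at source/target type $\typeA \to \typeA$ we accumulate a block $u \in \Sigma^*$ (the only symbols of type $\typeA \to \typeA$ are the $a \in \Sigma$). To leave this block we need a symbol with target $\typeA$ but source $\ne \typeA$ in $\{\typeA,\typeCopyA,\typeConst,\typeT\}$; inspecting the table, the symbols with target $\typeA$ are $\sym{E} : \typeConst \to \typeA$ and the $a \in \Sigma$. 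Since $\sym{E}$ only occurs at the very end, the block of $\Sigma$-symbols directly left of $\sym{E}$ is preceded by a symbol whose target is not $\typeA$ — contradiction unless that block extends all the way to position $2$, i.e. $a_1$'s source is $\typeA$, which forces $a_1 = \sym{B}$ (since $\sym{W}$ has source $\typeCopyA$). That gives the first case $\sym{B}\,u\,\sym{E}$, $u \in \Sigma^*$.

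In the remaining cases $a_1 = \sym{W}$, so $a_2$ has target $\src(\sym{W}) = \typeCopyA$. The symbols with target $\typeCopyA$ are $\sym{L} : \typeA \to \typeCopyA$, the $\overline{a} : \typeCopyA \to \typeCopyA$, and $\sym{R}_{i,j} : \typeA \to \typeCopyA$. So after $\sym{W}$ we have a (possibly empty) block of $\overline{a}$'s — a word $w \in \overline{\Sigma}^*$ — and then either the string ends (impossible, as it must end in $\sym{E}$) or the block is closed off by a symbol with target $\typeCopyA$ and source $\ne \typeCopyA$, namely $\sym{L}$ or some $\sym{R}_{i,j}$ (both of source type $\typeA$). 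After that $\sym{L}$ or $\sym{R}_{i,j}$ we are again at a position of target type $\typeA$, and by the argument of the previous paragraph the rest is a block $u \in \Sigma^*$ followed by $\sym{E}$. This yields the second and third cases $\sym{W}\,w\,\sym{L}\,u\,\sym{E}$ and $\sym{W}\,w\,\sym{R}_{i,j}\,u\,\sym{E}$.

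I do not anticipate a genuine obstacle here — the whole argument is a finite case check over the type table — but the one point requiring a little care is making the ``block'' arguments precise: formally I would phrase them as an induction on the string length (or on the position read from the right), showing that any well-typed suffix whose first symbol has target type $\typeA$ is of the form $u\,\sym{E}$ with $u \in \Sigma^*$, or more precisely that reading right-to-left we pass through a maximal run of $\Sigma$-symbols and then must hit $\sym{E}$ at the base or a symbol of source type $\typeA$ higher up, and similarly for $\typeCopyA$ and $\overline{\Sigma}$. The key non-triviality to double-check is that \emph{no} symbol other than $\sym{E}$ has target type $\typeA$ while having source type $\typeT$, $\typeCopyA$, or $\typeConst$, and that $\sym{E}$ genuinely cannot recur in the interior because its source $\typeConst$ is not the target of any symbol — so there is exactly one occurrence of $\sym{E}$, at the end. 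Once these observations are in place the three forms follow immediately.
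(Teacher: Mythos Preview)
Your approach is exactly the paper's: the lemma is stated with the one-line justification ``Inspecting the typing of the symbols shows,'' and your proposal spells out precisely that inspection. The overall plan---fix the first symbol as $\sym{B}$ or $\sym{W}$, fix the last as $\sym{E}$, and fill the middle using the constraints that only $a\in\Sigma$ has type $\typeA\to\typeA$ and only $\overline{a}\in\overline{\Sigma}$ has type $\typeCopyA\to\typeCopyA$---is correct, and your paragraph~3 (the case $a_1=\sym{W}$) and the formal induction sketch in paragraph~4 are fine.

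There is, however, a concrete slip in paragraph~2. Working backwards from $\sym{E}$, the symbol $a_j$ just left of the maximal $\Sigma$-block satisfies $\src(a_j)=\tgt(a_{j+1})=\typeA$, so you need a symbol with \emph{source} $\typeA$ and target $\ne\typeA$, not the other way round. You wrote ``target $\typeA$ but source $\ne\typeA$'' and then listed the symbols with target $\typeA$; from that you derived a ``contradiction'' forcing the block to extend to position~2. That inference is wrong: the correct list of symbols with source $\typeA$ and target $\ne\typeA$ is $\{\sym{B},\sym{L},\sym{R}_{i,j}\}$, so the block \emph{can} be preceded by $\sym{L}$ or $\sym{R}_{i,j}$---which is exactly how cases~2 and~3 arise. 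The cleanest repair is to abandon the backward scan and argue forward uniformly, just as you did for $a_1=\sym{W}$: if $a_1=\sym{B}$ then $\src(\sym{B})=\typeA$ forces $\tgt(a_2)=\typeA$, so $a_2\in\Sigma\cup\{\sym{E}\}$, and iterating gives $\sym{B}\,u\,\sym{E}$ with $u\in\Sigma^*$. With that fix the proof is complete.
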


We define a mapping from well-typed strings of type $\typeConst \to \typeT$ into untyped strings over $\Sigma$ as follows:
\begin{definition}
For a string $w : \typeConst \to \typeT$, the string $\mapSplit(w) \in \Sigma^*$ is defined according to the cases of Lemma~\ref{lem:cases-sort-S2}:
$$\begin{array}{llll}
 \mapSplit(\sym{B}\?u\?\sym{E})            &:=& u
\\
 \mapSplit(\sym{W}\?w\?\sym{L}\?u\?\sym{E}) &:=& \underline{w}\?u
\\
 \mapSplit(\sym{W}\?w\?\sym{R}_{i,j}\?u\?\sym{E}) &:= &\ell_s\?\underline{w}\?u &  \text{
if $(\sym{B}\?\ell_s \to \sym{W}\?\sym{R}_{i,j})\in\sym{split}(R)$.}
\end{array}$$
\end{definition}

\begin{lemma}\label{lemma:mapping}
Let $w$ be a well-typed string of type $\typeConst \to \typeT$ and $w \tosplit w'$. Then $[\mapSplit(w)] \cycto_R^* [\mapSplit(w')]$.
\end{lemma}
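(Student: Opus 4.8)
The plan is to prove Lemma~\ref{lemma:mapping} by a case analysis on which rule of $\sym{split}(R)$ is applied in the step $w \tosplit w'$, combined with a case analysis on the shape of $w$ as given by Lemma~\ref{lem:cases-sort-S2}. For each combination I would compute $\mapSplit(w)$ and $\mapSplit(w')$ explicitly and check that $[\mapSplit(w)] \cycto_R^* [\mapSplit(w')]$, where in most cases the relation will actually be equality of equivalence classes (i.e.\ zero cycle steps), and in exactly one case it will be a single cycle step.

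Concretely: if the applied rule is of type \eqref{split-1} ($\ell_i \to r_i$), then $w$ must be of the form $\sym{B}\?u\?\sym{E}$, $\sym{W}\?v\?\sym{L}\?u\?\sym{E}$, or $\sym{W}\?v\?\sym{R}_{i,j}\?u\?\sym{E}$, and since $\eqref{split-1}$ only acts on symbols of $\Sigma$ (type $\typeA\to\typeA$), the redex lies entirely inside the $u$-part; thus $\mapSplit(w) \to_R \mapSplit(w')$ by an ordinary string rewrite step, hence $[\mapSplit(w)] \cycto_R [\mapSplit(w')]$ by Proposition~\ref{prop:properties-rta14}(1). If the applied rule is \eqref{split-2} ($\overline{a}\?\sym{L} \to \sym{L}\?a$), then $w = \sym{W}\?v\?\overline{a}\?\sym{L}\?u\?\sym{E}$ with $v \in \overline{\Sigma}^*$, $w' = \sym{W}\?v\?\sym{L}\?a\?u\?\sym{E}$; both map under $\underline{(\cdot)}$ to the same word, so $\mapSplit(w) = \underline{v}\?a\?u = \mapSplit(w')$. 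Similarly rule \eqref{split-6} moves $\sym{R}_{i,j}$ rightward past an $a\in\Sigma$, turning $\sym{W}\?v\?\sym{R}_{i,j}\?a\?u\?\sym{E}$ into $\sym{W}\?v\?\overline{a}\?\sym{R}_{i,j}\?u\?\sym{E}$: using the third clause of $\mapSplit$ both sides equal $\ell_s\?\underline{v}\?a\?u$, so again equality holds. Rule \eqref{split-3} ($\sym{W}\?\sym{L}\to\sym{B}$) turns $\sym{W}\?\sym{L}\?u\?\sym{E}$ (the $v$-part is empty) into $\sym{B}\?u\?\sym{E}$, and both map to $u$. Rule \eqref{split-4} ($\sym{B}\?\ell_s \to \sym{W}\?\sym{R}_{i,j}$) turns $\sym{B}\?\ell_s\?u\?\sym{E}$ into $\sym{W}\?\sym{R}_{i,j}\?u\?\sym{E}$, and by definition $\mapSplit$ of the latter is exactly $\ell_s\?u$, matching $\mapSplit$ of the former; equality again.

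The one genuinely interesting case is rule \eqref{split-5}, $\sym{R}_{i,j}\?\ell_p\?\sym{E} \to \sym{L}\?r_i\?\sym{E}$. Here $w = \sym{W}\?v\?\sym{R}_{i,j}\?\ell_p\?\sym{E}$ with $v\in\overline{\Sigma}^*$ (note that the $u$-part must be empty, since $\ell_p\?\sym{E}$ consumes the end), and $w' = \sym{W}\?v\?\sym{L}\?r_i\?\sym{E}$. Then $\mapSplit(w) = \ell_s\?\underline{v}\?\ell_p$ — reading off that $(\sym{B}\?\ell_s\to\sym{W}\?\sym{R}_{i,j})\in\sym{split}(R)$ corresponds to the splitting $\ell_i = \ell_p\?\ell_s$ — while $\mapSplit(w') = \underline{v}\?r_i$. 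Now $[\ell_s\?\underline{v}\?\ell_p] = [\ell_p\?\ell_s\?\underline{v}] = [\ell_i\?\underline{v}] \cycto_R [r_i\?\underline{v}] = [\underline{v}\?r_i]$, using the rotation invariance of $\sim$ and the definition of $\cycto_R$ with the rule $\ell_i \to r_i \in R$. This is the single cycle step that the $\cycto_R^*$ in the statement accounts for. I expect this case — keeping straight the bookkeeping between the index pair $(i,j)$, the splitting $\ell_i=\ell_p\?\ell_s$, and the placement of $\ell_s$ at the front of $\mapSplit$ — to be the main obstacle; everything else is routine verification that $\mapSplit$ is invariant under the "administrative" rules \eqref{split-2}, \eqref{split-3}, \eqref{split-4}, \eqref{split-6}. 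One should also double-check that in each case the resulting $w'$ really does have type $\typeConst\to\typeT$ and falls under one of the three shapes of Lemma~\ref{lem:cases-sort-S2}, so that $\mapSplit(w')$ is well-defined; this follows from the type-preservation Lemma for well-typed SRSs together with the observation that the three shapes exhaust the well-typed strings of type $\typeConst\to\typeT$.
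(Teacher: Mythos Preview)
Your proof is correct and follows essentially the same approach as the paper's: a case analysis verifying that the administrative rules \eqref{split-2}, \eqref{split-3}, \eqref{split-4}, \eqref{split-6} leave $\mapSplit$ unchanged, that \eqref{split-1} yields an ordinary $\to_R$-step, and that \eqref{split-5} yields a single $\cycto_R$-step via the rotation $[\ell_s\?\underline{v}\?\ell_p]=[\ell_i\?\underline{v}]$. The only cosmetic difference is that the paper organizes the case split by the shape of $w$ (per Lemma~\ref{lem:cases-sort-S2}) and then lists the applicable rules, whereas you organize it by rule first; also, your parenthetical ``the $u$-part must be empty'' in the \eqref{split-5} case should read ``$u=\ell_p$'', but your computation there is correct.
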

\begin{proof}
We inspect the cases of Lemma~\ref{lem:cases-sort-S2} for $w$:
\begin{itemize}
 \item If $w = \sym{B}\?u\?\sym{E}$ where $u \in \Sigma^*$, 
 then the step $w \tosplit w'$ can use a rule of type \eqref{split-1} or \eqref{split-4}.
 If rule $\eqref{split-1}$ is applied, then $\mapSplit(w) \to_R \mapSplit(w')$ and thus $[\mapSplit(w)] \cycto_R [\mapSplit(w')]$.
 If rule $\eqref{split-4}$ is applied, then $w = \sym{B}\?\ell_2\?u' \tosplit \sym{W}\?\sym{R}_{i,j}\?u' = w'$
and $\mapSplit(w)  = \ell_2\?u' = \mapSplit(w')$ and thus $[\mapSplit(w)] = [\mapSplit(w')]$.
\item If $w = \sym{W}\?v\?\sym{L}\?u\?\sym{E}$ where $v \in \overline{\Sigma}^*$ and $u\in\Sigma^*$,
then the step $w \tosplit w'$ can use rules of type \eqref{split-1}, \eqref{split-2}, or  \eqref{split-3}.
 If rule $\eqref{split-1}$ is used, then $\mapSplit(w) \to_R \mapSplit(w')$ and thus $[\mapSplit(w)] \cycto_R [\mapSplit(w')]$.
 If rule $\eqref{split-2}$ or $\eqref{split-3}$ is used, then $\mapSplit(w) = \mapSplit(w')$ and thus
 $[\mapSplit(w)] = [\mapSplit(w')]$.
\item If $w = \sym{W}\?v\?\sym{R}_{i,j}\?u\?\sym{E}$ where $v \in \overline{\Sigma}^*$ and $u\in\Sigma^*$,
then the step $w \tosplit w'$ can use a rule of type \eqref{split-1}, \eqref{split-5}, or  \eqref{split-6}.
If rule $\eqref{split-1}$ is used, then $\mapSplit(w) \to_R \mapSplit(w')$ and thus $[\mapSplit(w)] \cycto_R [\mapSplit(w')]$.
If rule $\eqref{split-6}$ is used, then $\mapSplit(w) = \mapSplit(w')$ and thus
$[\mapSplit(w)] = [\mapSplit(w')]$.
If rule $\eqref{split-5}$ is used, then 
$w = \sym{W}\?v\?\sym{R}_{i,j}\?\ell_p\?\sym{E}$ and $w'= \sym{W}\?v\?\sym{L}\?r_i\?\sym{E}$
and $\mapSplit(w) = \ell_s\?\underline{v}\?\ell_p \sim \ell_p\ell_s\underline{v} \to_R r_i\?\underline{v} \sim \underline{v}\?r_i = \mapSplit(w')$ and thus $[\mapSplit(w)] \cycto_R [\mapSplit(w')]$.
\qedhere
\end{itemize}
\end{proof}

\begin{theorem}[Soundness and completeness of $\sym{split}$]\label{thm:split-sound-and-complete}
The transformation $\sym{split}$ is sound and complete, {i.e.}~$\tosplit$ is terminating if, and only if $\cycto_R$ is terminating.
\end{theorem}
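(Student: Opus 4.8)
The plan is to prove the two directions separately, using the machinery already assembled in the excerpt. Soundness is exactly Proposition~\ref{prop:split-sound}, so nothing remains to be done there; the work is entirely in completeness, i.e.\ showing that termination of $\cycto_R$ implies termination of $\tosplit$. By Corollary~\ref{cor:typeintro-term} it suffices to prove that $\sym{split}(R)$ is string terminating in the typed setting, which we have already observed is a well-typed SRS under the typing $\Types = \{\typeA,\typeCopyA,\typeConst,\typeT\}$ declared above. So assume for contradiction that there is an infinite $\tosplit$-reduction consisting of well-typed strings. By Lemma~\ref{lemma:sort-s2-sufficient} we may assume this infinite reduction starts from (hence, by the subject-reduction Lemma, consists entirely of) well-typed strings of type $\typeConst \to \typeT$, which by Lemma~\ref{lem:cases-sort-S2} are exactly the three shapes $\sym{B}\?u\?\sym{E}$, $\sym{W}\?w\?\sym{L}\?u\?\sym{E}$, and $\sym{W}\?w\?\sym{R}_{i,j}\?u\?\sym{E}$.

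Next I would apply the mapping $\mapSplit$ of the preceding definition to this infinite reduction. By Lemma~\ref{lemma:mapping}, each step $w_k \tosplit w_{k+1}$ gives $[\mapSplit(w_k)] \cycto_R^* [\mapSplit(w_{k+1})]$, so concatenating yields an infinite $\cycto_R^*$-derivation $[\mapSplit(w_1)] \cycto_R^* [\mapSplit(w_2)] \cycto_R^* \cdots$. Since $\cycto_R$ is assumed terminating, only finitely many of these $\cycto_R^*$-segments can contain an actual $\cycto_R$-step; hence from some index $N$ onward every step $w_k \tosplit w_{k+1}$ satisfies $[\mapSplit(w_k)] = [\mapSplit(w_{k+1})]$, i.e.\ is one of the ``$\mapSplit$-preserving'' cases identified in the proof of Lemma~\ref{lemma:mapping}: rules \eqref{split-2}, \eqref{split-3}, \eqref{split-6}, and the subcase of \eqref{split-4}. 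So we are left with an infinite reduction using only these four rules (shifting to start at $w_N$).

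The remaining, and I expect main, obstacle is to show that the SRS consisting only of rules \eqref{split-2}, \eqref{split-3}, \eqref{split-4}, \eqref{split-6} is itself terminating (on well-typed strings of type $\typeConst \to \typeT$). Intuitively this is clear: \eqref{split-4} fires at most once in such a reduction because it consumes the unique $\sym{B}$ and no rule among these four reproduces $\sym{B}$ — only \eqref{split-3} creates $\sym{B}$ but \eqref{split-3} is excluded once we've restricted, or rather \eqref{split-3} is present but consumes $\sym{W}\?\sym{L}$ which cannot be recreated either — and rules \eqref{split-2} and \eqref{split-6} each move a marker ($\sym{L}$ left, $\sym{R}_{i,j}$ right) past letters of $\Sigma$, which can happen only finitely often since the string length is bounded and each such rule strictly decreases a suitable well-founded measure (e.g.\ the distance of $\sym{L}$ from the left end, or of $\sym{R}_{i,j}$ from the right end, while simultaneously tracking how many barred/unbarred letters remain). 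Concretely I would exhibit a termination argument for $\{\eqref{split-2},\eqref{split-3},\eqref{split-4},\eqref{split-6}\}$ — for instance a lexicographic or weight/matrix interpretation: \eqref{split-3} strictly decreases the number of symbols from $\{\sym{W},\sym{L},\sym{R}_{i,j}\}\cup\{\sym{B}\}$-context appropriately (it shortens the string), \eqref{split-4} strictly decreases the count of $\sym{B}$'s, \eqref{split-6} decreases the number of unbarred letters to the right of an $\sym{R}$-marker, and \eqref{split-2} decreases the number of barred letters to the right of $\sym{L}$ — none increasing the others in a way that breaks a lexicographic combination. This contradicts the existence of the infinite tail, completing the proof; a final sentence would note that together with Proposition~\ref{prop:split-sound} this establishes the ``if and only if''.
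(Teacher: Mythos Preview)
Your proposal is correct and follows the same overall architecture as the paper: invoke type introduction and Lemma~\ref{lemma:sort-s2-sufficient} to reduce to well-typed strings of type $\typeConst\to\typeT$, push the infinite reduction through $\mapSplit$ via Lemma~\ref{lemma:mapping}, and then argue that the rules which do \emph{not} produce an actual $\cycto_R$-step form a terminating subsystem.

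There is one organizational difference worth noting. The paper first assumes w.l.o.g.\ that $\to_R$ is terminating and then claims that every sequence using only \eqref{split-1}, \eqref{split-2}, \eqref{split-3}, \eqref{split-4}, \eqref{split-6} terminates; hence infinitely many \eqref{split-5}-steps must occur, each of which $\mapSplit$ sends to a genuine $\cycto_R$-step. You instead observe that \eqref{split-1}-steps \emph{also} map to genuine $\cycto_R$-steps under $\mapSplit$, so termination of $\cycto_R$ directly bounds the number of \eqref{split-1}- and \eqref{split-5}-applications, leaving only $\{\eqref{split-2},\eqref{split-3},\eqref{split-4},\eqref{split-6}\}$ in the tail. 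This is a slight simplification: your residual system is independent of $R$, and its termination follows from the easy phase argument you sketch (the control shape moves irreversibly $\sym{W}\,\overline{\Sigma}^*\,\sym{L}\to\sym{B}\to\sym{W}\,\overline{\Sigma}^*\,\sym{R}_{i,j}$, with \eqref{split-2} and \eqref{split-6} strictly shrinking a finite counter within each phase). By contrast, the paper's residual system still contains \eqref{split-1}, so its (asserted) termination implicitly relies on the assumed termination of $\to_R$ interacting with \eqref{split-6}. Your informal description of the phase argument is a bit tangled (e.g.\ \eqref{split-3} \emph{does} recreate $\sym{B}$, but cannot fire after \eqref{split-4} because no $\sym{L}$ exists anymore), but the underlying reasoning is sound and straightforward to make precise.
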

\begin{proof}
Soundness is proved in Proposition~\ref{prop:split-sound}. It remains to show completeness. 
W.l.o.g.~we assume that $\to_R$ is terminating, since otherwise $\cycto_R$ is obviously non-terminating.
Type introduction (Corollary~\ref{cor:typeintro-term}) and Lemma~\ref{lemma:sort-s2-sufficient} show that it is sufficient to construct a non-terminating cycle rewrite sequence for any well-typed string $w$ of type 
$\typeConst \to \typeT$ where $w$ has an infinite $\tosplit$-reduction. 
Let $w$ be a well-typed string of type $\typeConst \to \typeT$ {s.t.}~$w$ admits an infinite reduction $w \tosplit w_1 \tosplit w_2 \tosplit \cdots$.
Lemma~\ref{lemma:mapping} shows that the cycle rewrite sequence $[\mapSplit(w)] \cycto_R^* [\mapSplit(w_1)] \cycto_R^* [\mapSplit(w_2)] \cycto_R^* \cdots$ exists. It remains to show that the constructed sequence is infinite.
One can observe that the infinite $\tosplit$-sequence starting with $w$ must have infinitely many applications of rule $\eqref{split-5}$, since
 every sequence of $\xrightarrow{\eqref{split-1}\vee\eqref{split-2}\vee\eqref{split-3}\vee\eqref{split-6}\vee\eqref{split-4}}$-steps is terminating (since we assumed that $\to_R$ is terminating).
Since $\mapSplit(\cdot)$ translates an $\eqref{split-5}$-step into exactly one $\cycto_R$-step,
the sequence $[\mapSplit(w)] \cycto_R^* [\mapSplit(w_1)] \cycto_R^* [\mapSplit(w_2)] \cycto_R^* \ldots$ consists of infinitely many $\cycto_R$-steps.
\end{proof}
 
\begin{remark}
Note that simulating a cycle rewrite step $w \cycto_R w'$ requires $O(|w|)$ $\tosplit$-steps.
Thus the derivation height of $\sym{split}(R)$ (i.e.~the length of a maximal string rewrite sequence for $\sym{B}w\sym{E}$)
is asymptotically the square of the derivation height of $R$ (i.e.~the length of a maximal cycle rewrite sequence  for $w$).
Note also that the same arguments and properties apply to the two other transformations, $\sym{shift}$ and  $\sym{rotate}$,
which will be presented later.
\end{remark}

\subsubsection{Relative Termination}
We discuss how the transformation $\sym{split}$ can be adapted to show relative termination of 
cycle rewriting.
With $\sym{\ref{split-1}}(\cdot)$ (and $\sym{\ref{split-5}}(\cdot)$, resp.) we 
denote the rules which are generated by the transformation
$\sym{split}(\cdot)$ according to rule \eqref{split-1} (and \eqref{split-5}, resp.). 
The idea of using the transformation $\sym{split}$ for relative termination, 
is to transform $S \subseteq R$  into $\sym{split}(R)$ where,
only the rules corresponding to $\sym{\ref{split-1}}(S)$ and $\sym{\ref{split-5}}(S)$ become strict rules,
while all other rules are weak rules in the transformed system.
\begin{definition}
Let $S \subseteq R$ be  SRSs over  alphabet $\Sigma$.
The transformation $\sym{split}_{\mathit{rel}}$ is defined as follows:
$$
\begin{array}{lcl}
\sym{split}_{\mathit{rel}}(S,R)  &:=& (\sym{\ref{split-1}}(S) \cup \sym{\ref{split-5}}(S),\sym{split}(R))
\end{array}
$$
\end{definition}

\begin{example}
Let 
$S = \{a\?b\?c \to b\?a\?c\}$
and ${R} = {S \cup {\{b\?c \toweak c\?b\}}}$.
Then 
$\sym{split}_{\mathit{rel}}(S,R)=(S',R')$ with
\[
\begin{array}{l@{}c@{}l@{}}
R'&~=~\{
        &b\?c \toweak c\?b,
        \sym{B}\?c \toweak \sym{W}\sym{R}_{1,1},
        \sym{B}\?b\?c \toweak\sym{W}\sym{R}_{2,1},
        \sym{B}\?c\? \toweak \sym{W}\sym{R}_{2,2},
        \sym{R}_{1,1}\?b\?\sym{E}\toweak \sym{L}\?c\?b\sym{E},
\\
      &&\sym{R}_{1,1}\?b \toweak \overline{b}\?\sym{R}_{1,1},
        \sym{R}_{2,1}\?b \toweak \overline{b}\?\sym{R}_{2,1},
        \sym{R}_{2,2}\?b \toweak \overline{b}\?\sym{R}_{2,2},
        \sym{R}_{1,1}\?c \toweak \overline{c}\?\sym{R}_{1,1},
        \sym{R}_{2,1}\?c \toweak \overline{c}\?\sym{R}_{2,1},
\\
       &&\sym{R}_{2,2}\?c \toweak \overline{c}\?\sym{R}_{2,2},
        \sym{R}_{1,1}\?a \toweak \overline{a}\?\sym{R}_{1,1},
        \sym{R}_{2,1}\?a \toweak \overline{a}\?\sym{R}_{2,1},
        \sym{R}_{2,2}\?a \toweak \overline{a}\?\sym{R}_{2,2},
        \overline{b}\?\sym{L} \toweak \sym{L}\?b,\\
       &&
        \overline{c}\?\sym{L} \toweak \sym{L}\?c,
        \overline{a}\?\sym{L} \toweak \sym{L}\?a,
        \sym{W}\?\sym{L} \toweak \sym{B}\} \cup S'
\\         
S'&~=~\{&\sym{R}_{2,1}\?a\?\sym{E} \to \sym{L}\?b\?a\?c\?\sym{E},
         \sym{R}_{2,2}\?a\?b\?\sym{E} \to \sym{L}\?b\?a\?c\?\sym{E},
         a\?b\?c \to b\?a\?c\}
\end{array}
\]
The termination provers \aprove\ and \TTTT\ automatically show that $S'$ is string terminating relative to $R'$.
Since -- as we show below -- $\sym{split}_{\mathit{rel}}$ is sound for relative termination,
 we can conclude that $S$ is cycle terminating relative to $R$. 
 
As another example, consider
$S=\{a\?a \to a\?b\?a\}$ and
$R = S \cup \{a\?b \toweak b\?a\}$, and let $\sym{split}_{\mathit{rel}}(S,R)= (S',R')$.
Both provers show that $S'$ is not terminating relative to $R'$.
Since the transformation is also complete for
relative termination, we can conclude that $S$ is not cycle terminating relative to $R$, which can also be verified by the 
following counter-example:  the infinite
cycle rewrite sequence $[a\?a] \cycto_S [a\?b\?a] \cycto_R [a\?a\?b] \cycto_S [a\?b\?a\?b] \cycto_R [a\?a\?b\?b] \cdots$ has infinitely many
$\cycto_S$-steps. 
\end{example}

For $S \subseteq R$ and $\sym{split}_{\mathit{rel}}(S,R) = (S',R')$,
we will show that $S$ is cycle terminating relative to $R$ if, and only if 
$S'$ is string terminating relative to $R'$.

First note that for every SRS $R$ and every cycle rewrite step, $v_1 \cycto_R v_2$ implies 
$\sym{B} v_1' \sym{E} \to_{\sym{split}(R)}^+ \sym{B} v_2' \sym{E}$
where $v_j \sim v_j'$ for $j=1,2$ and the 
$\to_{\sym{split}(R)}$-sequence contains exactly one application
of rule $\sym{\ref{split-1}}(R)$ or $\sym{\ref{split-5}}(R)$
(see Proposition~\ref{prop:split-sound} and inspect the construction of $\sym{split}$).

\begin{proposition}\label{prop-split-rel-sound}
The transformation $\sym{split}_{\mathit{rel}}$ is sound for relative termination.
\end{proposition}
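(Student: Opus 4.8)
The plan is to prove the contrapositive: assuming $S$ is \emph{not} cycle terminating relative to $R$, I will exhibit an infinite $\to_{R'}$-reduction containing infinitely many $\to_{S'}$-steps, which by the definition of relative termination means exactly that $S'$ is not string terminating relative to $R'$ (note $S' \subseteq R'$, so this makes sense). So fix an infinite cycle rewrite sequence $[w_1] \cycto_R [w_2] \cycto_R \cdots$ containing infinitely many $\cycto_S$-steps.

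The engine of the proof is the simulation observation stated just before the proposition, used in a slightly refined, per-step form. First I would record, by running through the three cases in the description of $\sym{split}$, the following statement: for \emph{every} representative $u' \in [u]$, a single cycle step $[u] \cycto_R [v]$ yields a nonempty derivation $\sym{B}\,u'\,\sym{E} \tosplit^+ \sym{B}\,v'\,\sym{E}$ with $v' \in [v]$ in which (i) exactly one step uses a rule of $\sym{\ref{split-1}}(R) \cup \sym{\ref{split-5}}(R)$ --- namely the direct application $\ell\to r$ in cases~1 and~2, or the single \eqref{split-5}-step in case~3 --- all remaining steps being applications of \eqref{split-2}, \eqref{split-3}, \eqref{split-4} or \eqref{split-6}; and (ii) this distinguished step lies in $\sym{\ref{split-1}}(S) \cup \sym{\ref{split-5}}(S) = S'$ if and only if the rule of $R$ used by $[u] \cycto_R [v]$ lies in $S$ (a $\sym{\ref{split-1}}$-rule and a $\sym{\ref{split-5}}$-rule have visibly different shapes, and each $\sym{\ref{split-5}}$-rule is tagged, via its symbol $\sym{R}_{i,j}$, by the index $i$ of the rule of $R$ it originates from, so membership in $S'$ is decided exactly by whether that originating rule is in $S$).

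With this in hand I would chain the simulations. Put $w_1' := w_1$ and, inductively, having reached $\sym{B}\,w_i'\,\sym{E}$ with $w_i' \in [w_i]$, apply the per-step statement to $[w_i] \cycto_R [w_{i+1}]$ with start representative $w_i'$ to get $\sym{B}\,w_i'\,\sym{E} \tosplit^+ \sym{B}\,w_{i+1}'\,\sym{E}$ with $w_{i+1}' \in [w_{i+1}]$. Concatenating these blocks gives an infinite $\to_{R'}$-reduction (recall $R' = \sym{split}(R)$), and by (i)--(ii) the block simulating the $i$-th cycle step contains a $\to_{S'}$-step precisely when that step is a $\cycto_S$-step. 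Since infinitely many of the original cycle steps are $\cycto_S$-steps, the constructed reduction contains infinitely many $\to_{S'}$-steps, as required.

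The point needing the most care --- and the one I expect to be the main obstacle --- is the ``for every representative $u'$'' strengthening of the simulation observation: the form stated in the text only guarantees that \emph{some} representative works, which would not let the blocks chain, since the string $\sym{B}\,w_{i+1}'\,\sym{E}$ produced at the end of block $i$ need not be the one at which block $i+1$ would begin. I would obtain the strengthening by revisiting the case analysis underlying $\sym{split}$: for arbitrary $u' \in [u]$ write $u' = u_1\?u_2$ with $\ell\?w = u_2\?u_1$ (possible since $u' \sim \ell\?w$), observe that the three cases there are driven only by this factorization of $u'$, and check in each case that the produced $v'$ satisfies $v' \sim r\?w \sim v$ together with the rule-counting claims (i)--(ii). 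Everything else is routine bookkeeping.
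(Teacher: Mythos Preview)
Your proposal is correct and follows essentially the same contrapositive-plus-simulation approach as the paper: both arguments turn an infinite cycle reduction with infinitely many $\cycto_S$-steps into an infinite $\to_{R'}$-reduction in which the simulated $S$-steps contribute exactly the $S'$-steps (via \eqref{split-1} or \eqref{split-5}). You are somewhat more careful than the paper in explicitly isolating and discharging the ``for every representative'' chaining issue, which the paper leaves implicit in its appeal to the earlier three-case analysis; this extra care is warranted but does not change the underlying argument.
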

\begin{proof}
Let $\Sigma_A$ be an alphabet, $S \subseteq R$ be SRSs over $\Sigma_A$, and
$\sym{split}_{\mathit{rel}}(S,R) = (S',R')$.
Assume that $S$ is not cycle terminating relative to $R$. Then 
there exists an infinite reduction $[w_i] \cycto_R^* [u_i] \cycto_S [w_{i+1}]$
for $i=1,2,\ldots$.
Then $\sym{B} w_i' \sym{E} \to_{\sym{split}(R)}^* \sym{B}\?u_i'\?\sym{E} \to_{\sym{split}(S)}^+ \sym{B}\?w_{i+1}'\?\sym{E}$ with 
$w_i \sim w_i', w_{i+1} \sim w_{i+1}'$ and $u_i \sim u_i'$. 
The sequence $u_i' \to_{\sym{split}(S)}^+ w_{i+1}$ uses exactly once
the rule $\sym{\ref{split-1}}(S)$ or the rule $\sym{\ref{split-5}}(S)$.
Thus we have 
$\sym{B}\?w_i'\?\sym{E} \to_{R'}^* \sym{B}\?u_i'\?\sym{E} \to_{R'}^*.\to_{S'}.\to_{R'}^* \sym{B}\?w_{i+1}'\?\sym{E}$ for all $i=1,2,\ldots$.
Hence $S'$ is not string terminating relative to $R'$.
\end{proof}
For proving completeness, we again use the typed variant of the string
rewrite system.\newpage

\begin{theorem}\label{thm:split-sound-and-complete-relative}
The transformation $\sym{split}_{\mathit{rel}}$ is sound and complete for relative termination.
\end{theorem}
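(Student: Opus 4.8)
The plan is to follow the structure already established for the non-relative case (Theorem~\ref{thm:split-sound-and-complete}), but now tracking which rewrite steps are strict. Soundness is already given by Proposition~\ref{prop-split-rel-sound}, so the work is in completeness: assuming $S$ is cycle terminating relative to $R$, I must show $S'$ is string terminating relative to $R'$ where $\sym{split}_{\mathit{rel}}(S,R) = (S',R')$. Equivalently, I show the contrapositive: from an infinite $\to_{R'}$-reduction with infinitely many $\to_{S'}$-steps I construct an infinite $\cycto_R$-reduction with infinitely many $\cycto_S$-steps.

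First I would reduce to the typed setting. Since $R'=\sym{split}(R)$ carries exactly the typing introduced before, and $S' \subseteq R'$ consists of well-typed rules, Theorem~\ref{theo:typeintro-rel} applies: it suffices to refute relative termination of $S'$ w.r.t.\ $R'$ on well-typed strings. Next, an analogue of Lemma~\ref{lemma:sort-s2-sufficient} is needed: if some well-typed string admits an infinite $\to_{R'}$-reduction with infinitely many $\to_{S'}$-steps, then so does some well-typed string of type $\typeConst \to \typeT$. The padding argument from Lemma~\ref{lemma:sort-s2-sufficient} works verbatim here, because prepending/appending the fixed context $u\,\Box\,v$ neither creates nor destroys redexes and in particular preserves the status (strict/weak) of every step. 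So I may assume an infinite reduction $w \to_{R'} w_1 \to_{R'} w_2 \to_{R'} \cdots$ with $w : \typeConst \to \typeT$ and infinitely many $\to_{S'}$-steps.

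Now I apply the mapping $\mapSplit$ exactly as in Lemma~\ref{lemma:mapping}, which gives $[\mapSplit(w)] \cycto_R^* [\mapSplit(w_1)] \cycto_R^* \cdots$. The key refinement is to inspect, in the case analysis of Lemma~\ref{lemma:mapping}, which $\to_{R'}$-steps are translated to a genuine $\cycto_R$-step and whether that step is a $\cycto_S$-step: a step using a rule in $\sym{\ref{split-1}}(R)$ (i.e.\ rule~\eqref{split-1}) becomes a single $\cycto_R$-step, which is a $\cycto_S$-step precisely when the underlying rule lies in $S$, i.e.\ when the $\to_{R'}$-step was a $\to_{S'}$-step; a step using a rule in $\sym{\ref{split-5}}(R)$ (rule~\eqref{split-5}) likewise becomes a single $\cycto_R$-step which is a $\cycto_S$-step exactly when the $\sym{R}_{i,j}$ encodes a rule of $S$, i.e.\ again exactly when the $\to_{R'}$-step was a $\to_{S'}$-step; and steps using \eqref{split-2}, \eqref{split-3}, \eqref{split-4}, \eqref{split-6} are mapped to the identity on equivalence classes and are always weak in $S'$. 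Thus the induced $\cycto_R$-reduction has a $\cycto_S$-step for every $\to_{S'}$-step among $w \to_{R'} w_1 \to_{R'} \cdots$, hence infinitely many; and it is itself infinite by the argument in the proof of Theorem~\ref{thm:split-sound-and-complete} (a reduction using only \eqref{split-1}, \eqref{split-2}, \eqref{split-3}, \eqref{split-4}, \eqref{split-6} is terminating because $\to_R$ is terminating — and if $\to_R$ is not terminating then $\cycto_R$ is trivially cycle non-terminating, so $S$ is certainly not cycle terminating relative to $R$ and there is nothing to prove). This contradicts relative cycle termination of $S$ w.r.t.\ $R$, completing the proof.

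The main obstacle, though a minor one, is the bookkeeping in the step-status correspondence: one must be careful that $\sym{split}_{\mathit{rel}}$ marks as strict \emph{exactly} the images of $S$-rules under the transformation components $\sym{\ref{split-1}}$ and $\sym{\ref{split-5}}$, so that the translation $\mapSplit$ sends strict-to-strict and weak-to-(weak or identity), with no leakage in either direction — in particular no weak $\to_{R'}$-step should ever be forced to produce a $\cycto_S$-step, and every strict $\to_{S'}$-step should produce one. Checking this amounts to re-reading the cases of Lemma~\ref{lemma:mapping} and noting that only \eqref{split-1} and \eqref{split-5} produce non-identity $\cycto_R$-steps, which is exactly where $\sym{split}_{\mathit{rel}}$ places the strict rules. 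Everything else is a routine adaptation of the absolute case.
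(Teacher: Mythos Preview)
Your proposal is correct and follows essentially the same approach as the paper: reduce to the typed setting via Theorem~\ref{theo:typeintro-rel}, restrict to strings of type $\typeConst \to \typeT$, apply $\mapSplit$ via Lemma~\ref{lemma:mapping}, and observe that exactly the $\to_{S'}$-steps (i.e.\ \eqref{split-1} and \eqref{split-5} instances coming from $S$) are mapped to $\cycto_S$-steps. The paper organizes the infinite reduction upfront as blocks $w_i \to_{S'} u_i \to_{R'}^{*} w_{i+1}$ (using Proposition~\ref{prop:rel-prop}), which makes the conclusion $[\mapSplit(w_i)] \cycto_S.\cycto_R^{*} [\mapSplit(w_{i+1})]$ immediate and renders your auxiliary infinitude argument (termination of $\to_R$ etc.) unnecessary; but this is purely a presentational difference, not a substantive one.
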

\begin{proof}
Soundness was proved in Proposition~\ref{prop-split-rel-sound}.
For completeness, consider $S \subseteq R$ with $\sym{split}_{\mathit{rel}}(S,R) = (S',R')$ and
assume that a string $w_1$ of type 
$\typeConst \to \typeT$
has an infinite reduction of the form
$w_i \to_{S'} u_i \to_{R'}^* w_{i+1}$
for all $i =1,2,\ldots$ 
(which is sufficient due to Theorem~\ref{theo:typeintro-rel} and Proposition~\ref{prop:rel-prop}).

We first consider the first reductions of the form
$w_i \to_{S'} u_i$.
If $w_i \to_{S'} u_i$ by rule $\sym{\ref{split-1}}(S)$,
then clearly $[\mapSplit(w_i)] \cycto_S [\mapSplit(u_i)]$.
If $w_i \to_{S'} u_i$ by rule $\sym{\ref{split-5}}(S)$,
then due to typing 
$w_i = \sym{W}\?v\?\sym{R}_{i,k}\?\ell_p\?\sym{E}$ 
and $u_i = \sym{W}\?v\?\sym{L}\?r_j\?\sym{E}$.
Applying $\mapSplit(\cdot)$ to $w_i$ and $u_i$ shows
that $\mapSplit(w_i) = \ell_s\?\underline{v}\?\ell_p 
\sim \ell_p\ell_s\underline{v} = \ell_j\?\underline{v}$ and
$\mapShift(u_i)= \underline{v}\?r_j$ which implies
$[\mapSplit(w_i)] \cycto_S [\mapSplit(u_i)]$.

Now, for the reduction sequences $u_i \to_{R'}^* w_{i+1}$,
Lemma~\ref{lemma:mapping} shows that
$[\mapSplit(u_i)] \cycto_R^* [\mapSplit(w_{i+1})]$ holds.
Thus, $[\mapSplit(w_i)] \cycto_S . \cycto_R^* [\mapSplit(w_{i+1})]$
for all $i=1,2,\ldots$. Hence, $S$ is not cycle terminating relative to $R$.
\end{proof}

\subsection{The Transformation Shift}\label{subsec:shift}
We first present the general ideas of the transformation
$\sym{shift}$ before giving its definition.
We write $\shift$ for the relation which moves the first element of a string to the end, 
{i.e.}~$a\?u \shift u\?a$ for every $a \in \Sigma$ and $u \in \Sigma^*$. 
Clearly, $u \sim v$ if and only if $u \shift^{< |u|} v$.

For a string rewrite system $R$, we define ${\mathrm{len}}(R)$ as the size of the largest left-hand side of the rules in $R$, 
{i.e.}~${\mathrm{len}}(R) = \max_{(\ell \to r) \in R} |\ell|$.

The approach of the transformation $\sym{shift}$ 
is to shift at most ${\mathrm{len}}(R)-1$ symbols from the left end to the right end and then 
to apply a string rewrite step (i.e. this the relation $\shift^{< {\mathrm{len}}(R)} . \to_R$). 

\begin{example}\label{example-simple-cont-shift}
As in Example~\ref{example-simple}, let $R = \{a\?b\?c \to b\?b\?b\?b\}$ and  $[b\?c\?d\?d\?a] \cycto_R [b\?b\?d\?d\?b\?b]$. 

The approach of transformation $\sym{shift}$ is to simulate the cycle rewrite step, by first shifting symbols from the left end to the right end of the string until $a\?b\?c$ becomes a substring, and then applying a string rewrite step, i.e.
$b\?c\?d\?d\?a \shift^{< {\mathrm{len}}(R)} . \to _R d\?d\?b\?b\?b\?b$, 
since $b\?c\?d\?d\?a \shift c\?d\?d\?a\?b \shift  d\?d\?a\?b\?c \to_R d\?d\?b\?b\?b\?b$.
\end{example}

The following lemma obviously holds:
\begin{lemma}\label{lem:Rightarrow-simulates-sim-shift}
Let $R$ be an SRS over  an alphabet $\Sigma$ and $u,v\in\Sigma^*$: 
If $[u]\cycto_R [v]$ then there exist  $u',v'\in\Sigma^*$ {s.t.}~$u \shift^* u' \to_R  v' \shift^* v$.

Moreover, we  can restrict the number of $\shift$-steps before a rewrite step:
For $u,v,w \in\Sigma^*$,
if $u \shift^k v \to_R w$ for some $ k \geq {\mathrm{len}}(R)$, then $u \shift^m v' \to_R v'' \shift^* w$ for some $v',v''\in\Sigma^*$ and $m < {\mathrm{len}}(R)$.
\end{lemma}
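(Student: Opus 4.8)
The plan is to treat the two parts separately: the first is essentially an unfolding of the definitions, while the second needs a short case analysis on where the rewritten redex sits inside the rotated string.

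For the first part, unfolding the definition of $\cycto_R$ shows that $[u]\cycto_R[v]$ gives a rule $(\ell\to r)\in R$ and a string $w$ with $u\sim \ell\?w$ and $v\sim r\?w$. Using the stated equivalence ``$x\sim y$ iff $x\shift^{<|x|}y$'' we obtain $u\shift^* \ell\?w$; then $\ell\?w\to_R r\?w$ is a genuine string rewrite step (here we use $\ell\neq\varepsilon$); and finally $r\?w\sim v$ gives $r\?w\shift^* v$. So $u':=\ell\?w$ and $v':=r\?w$ witness the claim.

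For the second part, I would first reduce modulo $|u|$: since $u\shift^{|u|}u$, the hypothesis $u\shift^k v$ also gives $u\shift^{k\bmod|u|}v$, and if $k\bmod|u|<{\mathrm{len}}(R)$ we are done with $m:=k\bmod|u|$, $v':=v$, $v'':=w$. Hence we may assume ${\mathrm{len}}(R)\le k<|u|$, and write $u=\alpha\?\beta$ with $|\alpha|=k$, so that $v=\beta\?\alpha$. Let the step use $(\ell\to r)\in R$, i.e.~$v=p\?\ell\?q$ and $w=p\?r\?q$; note $|\ell|\le{\mathrm{len}}(R)\le k=|\alpha|$, and the pieces $p$ and $q$ each lie in $v$. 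Now distinguish three cases according to how the occurrence of $\ell$ meets the cut between $\beta$ and $\alpha$ inside $v$: (I) $\ell$ lies entirely inside $\alpha$; (II) $\ell$ lies entirely inside $\beta$; (III) $\ell$ straddles the cut. In cases (I) and (II) the occurrence lifts to a genuine, non‑wrapping occurrence $u=u_1\?\ell\?u_2$, and $m=0$ works: $u\to_R u_1\?r\?u_2$, and plugging $v=\beta\?\alpha$ back in shows $u_1\?r\?u_2\sim w$ (a direct regrouping of the form $(x)(y)\sim(y)(x)$), hence $u_1\?r\?u_2\shift^* w$. In case (III), write $\ell=\ell_{\mathrm h}\?\ell_{\mathrm t}$ where $\ell_{\mathrm h}$ is the nonempty part of $\ell$ inside $\beta$ and $\ell_{\mathrm t}$ the nonempty part inside $\alpha$; then $\ell_{\mathrm h}$ is a suffix and $\ell_{\mathrm t}$ a prefix of $u$, and since $|\ell|\le k<|u|$ the two do not overlap, so $u=\ell_{\mathrm t}\?z\?\ell_{\mathrm h}$ for some $z$. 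Take $m:=|\ell_{\mathrm t}|$: then $u\shift^{m}z\?\ell_{\mathrm h}\?\ell_{\mathrm t}=z\?\ell\to_R z\?r$, and plugging $v=\beta\?\alpha$ back in one checks $z=q\?p$, whence $z\?r=q\?p\?r\sim p\?r\?q=w$ and $z\?r\shift^* w$; moreover $m=|\ell_{\mathrm t}|=|\ell|-|\ell_{\mathrm h}|\le|\ell|-1\le{\mathrm{len}}(R)-1<{\mathrm{len}}(R)$.

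The only real obstacle is bookkeeping: one has to keep track of the fact that the occurrence of $\ell$ actually rewritten in $v$ corresponds to a definite (possibly wrapping) occurrence in $u$, and verify that after rotating the rewritten string back the leftover pieces $p$ and $q$ reassemble into $w$. This is what makes the straddling case the delicate one, and the identity $z=q\?p$ is the key fact that makes it go through; all remaining verifications are routine string equalities once the decomposition $u=\alpha\?\beta$, $v=\beta\?\alpha$ has been fixed.
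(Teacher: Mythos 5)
Your proof is correct. The paper in fact gives no proof of this lemma (it is stated as ``obviously'' holding), and your argument --- reducing the shift count modulo $|u|$ and then splitting on whether the rewritten occurrence of $\ell$ lies inside $\alpha$, inside $\beta$, or straddles the cut in $v=\beta\?\alpha$ --- is exactly the reasoning the paper implicitly relies on; it mirrors the trichotomy (cases (1)--(3)) spelled out informally for the $\sym{split}$ transformation in Section~\ref{sec:split}, with the straddling case yielding the bound $m\le|\ell|-1<{\mathrm{len}}(R)$ just as you observe.
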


Using the previous lemmas, we are able to prove the following proposition:

\begin{proposition}\label{prop:N-shifts-are-sufficient-shift}
Let $R$ be an SRS. If $\cycto_R$ is non-terminating, then
$\shift^{< {\mathrm{len}}(R)} . \to_R$ admits an infinite reduction.
\end{proposition}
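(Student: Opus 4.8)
The plan is to derive from non-termination of $\cycto_R$ an infinite reduction of $\shift^{<{\mathrm{len}}(R)}.\to_R$, using Lemma~\ref{lem:Rightarrow-simulates-sim-shift} to convert each cycle step into a bounded-shift-plus-rewrite step and then stitching these together into a single infinite string rewrite sequence. First I would unfold the definition of non-termination of $\cycto_R$: there is a string $w_0$ and an infinite sequence $[w_0] \cycto_R [w_1] \cycto_R [w_2] \cycto_R \cdots$. The issue is that the $w_i$ here are merely representatives of equivalence classes, so I want to produce a concrete chain of strings on which the actual string relation $\shift^{<{\mathrm{len}}(R)}.\to_R$ acts.

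The key step is: for each $i$, apply the first part of Lemma~\ref{lem:Rightarrow-simulates-sim-shift} to $[w_i] \cycto_R [w_{i+1}]$ to get $u_i', v_i' \in \Sigma^*$ with $w_i \shift^* u_i' \to_R v_i' \shift^* w_{i+1}$, and then apply the ``moreover'' part to replace the unbounded prefix of $\shift$-steps $w_i \shift^* u_i'$ by a bounded one $w_i \shift^{m_i} \tilde u_i \to_R \tilde v_i \shift^* w_{i+1}$ with $m_i < {\mathrm{len}}(R)$ (together with more $\shift$-steps after the rewrite, which get absorbed). Actually, since $\shift$ relates $\sim$-equivalent strings and the precise representatives are irrelevant, the cleanest bookkeeping is: start from $w_0$; having reached some representative $x_i$ of $[w_i]$, use the lemma to get $x_i \shift^{m_i} y_i \to_R z_i$ with $m_i < {\mathrm{len}}(R)$ and $z_i \sim w_{i+1}$, and set $x_{i+1} := z_i$. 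This produces the infinite reduction $x_0 \;(\shift^{<{\mathrm{len}}(R)}.\to_R)\; x_1 \;(\shift^{<{\mathrm{len}}(R)}.\to_R)\; x_2 \;\cdots$, which is exactly what is required.

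The one point demanding care is justifying that I can always take $x_i \in [w_i]$ and get the bounded-shift step to \emph{some} $\sim$-representative of $[w_{i+1}]$ — i.e.\ that the ``moreover'' part of the lemma genuinely lets me bound the number of leading $\shift$-steps to below ${\mathrm{len}}(R)$ regardless of which representative $x_i$ I happen to be at. This follows because $\sim$ is exactly generated by $\shift$ (for any $x_i \sim w_i$ we have $w_i \shift^* x_i$ and conversely, so $x_i \shift^* u_i'$ for the $u_i'$ given by the first part applied with representative $w_i$, hence a possibly-long $\shift$-chain from $x_i$ into the redex, which the moreover-part then shortens to length $<{\mathrm{len}}(R)$ at the cost of extra trailing $\shift$-steps that are folded into $z_i \sim w_{i+1}$). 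I would also remark that the case ${\mathrm{len}}(R) \le 1$ (all left-hand sides single symbols, so $\shift^{<{\mathrm{len}}(R)}$ is the empty relation and $\shift^{<{\mathrm{len}}(R)}.\to_R$ is just $\to_R$) is covered automatically, since in that situation $u \sim v$ and $u \to_R w$ already gives $v \to_R w$ directly. The main obstacle is purely the representative bookkeeping; once one fixes the convention ``always carry a concrete string, use the lemma, move on,'' the proof is a short induction-free concatenation and no nontrivial computation is needed.

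\begin{proof}
Since $\cycto_R$ is non-terminating there is an infinite sequence
$[w_0] \cycto_R [w_1] \cycto_R [w_2] \cycto_R \cdots$.
We construct strings $x_0, x_1, x_2, \ldots$ with $x_i \in [w_i]$ and
$x_i \shift^{m_i} y_i \to_R x_{i+1}$ with $m_i < {\mathrm{len}}(R)$ for all $i$, which yields the desired infinite $\shift^{<{\mathrm{len}}(R)}.\to_R$-reduction.
Put $x_0 := w_0$. Given $x_i \in [w_i]$, note $[x_i] = [w_i] \cycto_R [w_{i+1}]$, so by the first part of Lemma~\ref{lem:Rightarrow-simulates-sim-shift} there are $u', v' \in \Sigma^*$ with $x_i \shift^k u' \to_R v' \shift^* w_{i+1}$ for some $k \geq 0$. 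If $k < {\mathrm{len}}(R)$, set $m_i := k$, $y_i := u'$ and $x_{i+1} := v'$; then $v' \shift^* w_{i+1}$ gives $v' \in [w_{i+1}]$. If $k \geq {\mathrm{len}}(R)$, the ``moreover'' part of Lemma~\ref{lem:Rightarrow-simulates-sim-shift} applied to $x_i \shift^k u' \to_R v'$ yields $x_i \shift^{m} \bar v \to_R \bar v' \shift^* v'$ with $m < {\mathrm{len}}(R)$; set $m_i := m$, $y_i := \bar v$ and $x_{i+1} := \bar v'$. Since $\bar v' \shift^* v' \shift^* w_{i+1}$, again $x_{i+1} \in [w_{i+1}]$. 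In both cases $x_i \,(\shift^{<{\mathrm{len}}(R)}.\to_R)\, x_{i+1}$, completing the construction and the proof.
\end{proof}
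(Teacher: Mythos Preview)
Your proof is correct and follows essentially the same approach as the paper: both use Lemma~\ref{lem:Rightarrow-simulates-sim-shift} to turn each cycle step into a $\shift^*.\to_R.\shift^*$ step and then invoke the ``moreover'' clause to push excess $\shift$-steps past the rewrite, bounding the leading block to length $<{\mathrm{len}}(R)$. Your version is more explicit about the representative bookkeeping (constructing the $x_i$ inductively and verifying $x_{i+1}\in[w_{i+1}]$), whereas the paper handles this informally in a single sentence, but the underlying argument is identical.
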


\begin{proof}
Let $[w_1] \cycto_R [w_2] \cycto_R \cdots$ 
be an infinite cycle rewrite sequence.
By Lemma~\ref{lem:Rightarrow-simulates-sim-shift} we have
$w_1 \shift^* w_1' \to_R w_1'' \shift^* w_2 \shift^* w_2' \to_R w_2'' \shift^* \cdots$
and by the second part of the lemma, we can always move more than ${\mathrm{len}}(R)$ $\shift$-steps to the right,
and thus we derive an infinite sequence $w_1 \shift^{< {\mathrm{len}}(R)} u_1 \to_R u_1' \shift^{< {\mathrm{len}}(R)} u_2 \to_R u_2' \cdots$
and thus $w_1$ admits an infinite reduction of the required form.
\end{proof}

For an SRS $R$, the SRS $\sym{shift}(R)$ encodes the relation $\shift^{<{\mathrm{len}}(R)}.\to_R$
where extra symbols are used to separate the steps, and copies of the alphabet underlying $R$ are used to ensure completeness of the transformation.

For the remainder of the section, we fix an SRS $R$ over alphabet $\Sigma_A = \{a_1,\ldots,a_n\}$.
Let us write $\Sigma_B, \Sigma_C$ for fresh copies of the alphabet $\Sigma_A$.
We use the following notation to switch between the alphabets: for $X,Y \in \{A,B,C\}$ and  $w \in \Sigma_X$ we write $\trans{w}{Y}$ to denote the copy of $w$ in the alphabet $Y$ where every symbol is translated from alphabet $X$ to alphabet $Y$.

\begin{definition}[The transformation $\sym{shift}$]
Let $R$ be an SRS over alphabet $\Sigma_A$ and let $N = \max(0,{\mathrm{len}}(R)-1)$.
The SRS $\sym{shift}(R)$ 
over the alphabet 
 $\Sigma_A \cup \Sigma_B \cup \Sigma_C \cup
 \{\sym{B},\sym{E},\sym{W},\sym{V},\sym{M},\sym{L},\sym{R},\sym{D}\}$
 (where
 $\sym{B},\sym{E},\sym{W},\sym{V},\sym{M},\sym{L},\sym{R},\sym{D}$ are
 fresh for $\Sigma_A \cup \Sigma_B \cup \Sigma_C$) consists of the
 following rules:\vspace{-3 pt}
\begin{flushleft}
\begin{minipage}{.53\textwidth}
\begin{align}
   \sym{B} &\to \sym{W}\?\sym{M}^N\?\sym{V} \label{shift-1}\tag{shiftA}
\\
  \sym{M} &\to \varepsilon \label{shift-2}\tag{shiftB}
 \\
  \sym{M}\?\sym{V}\?a &\to \sym{V}\trans{a}{B} &&\hspace*{-2.8ex}\text{for all $a \in \Sigma_A$} \label{shift-3}\tag{shiftC}
\\
  b\?a &\to a\?b &&\hspace*{-2.8ex}\text{for all $a\in\Sigma_A$, $b\in\Sigma_B$\!\!\!\!} \label{shift-4}\tag{shiftD}
\\
  b\?\sym{E} &\to \trans{b}{A}\?\sym{E} &&\hspace*{-2.8ex}\text{for all $b \in \Sigma_B$} \label{shift-5}\tag{shiftE}
 \end{align}
\end{minipage}%
\begin{minipage}{.47\textwidth}
\begin{align}
  \sym{W}\?\sym{V} &\to \sym{R}\?\sym{L} ~~\label{shift-6}\tag{shiftF}
\\
  \sym{L}\?a &\to \trans{a}{C}\?\sym{L} &&\hspace*{-1.1ex}\text{for all $a \in \Sigma_A$}\label{shift-7}\tag{shiftG}
\\
  \sym{L}\?\ell &\to \sym{D}\?r &&\hspace*{-1.1ex}\text{for all $(\ell \to r)\!\in\!R$\!\!}\label{shift-8}\tag{shiftH}
\\
  c\?\sym{D} &\to \sym{D}\trans{c}{A} &&\hspace*{-1.1ex}\text{for all $c\in\Sigma_C$}\label{shift-9}\tag{shiftI}
\\
 \sym{R}\?\sym{D} &\to \sym{B} ~~\label{shift-10}\tag{shiftJ}
\end{align}
\end{minipage}
\end{flushleft}
\end{definition}\medskip

\noindent The rules $\eqref{shift-1}$ - $\eqref{shift-5}$ encode the relation $\shift^{< {\mathrm{len}}(R)}$, {i.e.}~for $u\?v \in \Sigma_A^*$ with $|u| < {\mathrm{len}}(R)$, the string $\sym{B}\?u\?v\?\sym{E}$ is rewritten into 
$\sym{W}\?\sym{V}\?v\?u\?\sym{E}$ by these five rules. The sequence of symbols $\sym{M}$ generated by rule \eqref{shift-1} denotes the potential of moving at most $\sym{len}(R)-1$ symbols. The rules \eqref{shift-2} and \eqref{shift-3} either remove
one from the potential or start the moving of one symbol. The rule $\eqref{shift-4}$ performs the movement of a single symbol
until it reaches the end of the string and rule $\eqref{shift-5}$ finishes the movement. 

The remaining rewrite rules perform a single string rewrite step, {i.e.}~for a rule $(\ell \to r)\in R$ the string 
$\sym{W}\?\sym{V}\? w_1 \? \ell \? w_2\?\sym{E}$ is rewritten to $\sym{B}\?w_1\?r\? w_2\sym{E}$ by rules 
$\eqref{shift-6}$ - $\eqref{shift-10}$. 

\begin{example}
For $R_1 = \{a \? a \to a \? b \? a\}$, the transformed string rewrite system $\sym{shift}(R_1)$ is:
$$\begin{array}{l@{~}c@{~}l@{~}l@{\qquad}l@{~}c@{~}l@{~}l@{\qquad}l@{~}c@{~}l@{~}l}
 \sym{B} 
   &\to
   &\sym{W}\?\sym{M}\?\sym{V} 
   &\eqref{shift-1}
&\sym{M} 
   &\to
   &\varepsilon
   &\eqref{shift-2}
&\sym{M}\?\sym{V}\?a 
   &\to 
   & \sym{V}\trans{a}{B} 
   &\eqref{shift-3}
\\
 \sym{M}\?\sym{V}\?b 
   &\to
   &\sym{V}\trans{b}{B} 
   &\eqref{shift-3}
&\trans{a}{B}\?a 
   &\to
   &a\trans{a}{B} 
   &\eqref{shift-4}
&\trans{a}{B}\?b 
   &\to
   &b\trans{a}{B} 
   &\eqref{shift-4}
\\
 \trans{b}{B}\?a 
   &\to
   &a\trans{b}{B} 
   &\eqref{shift-4}
&\trans{b}{B}\?b 
   &\to
   &b\trans{b}{B} 
   &\eqref{shift-4}
&\trans{a}{B}\?\sym{E} 
   &\to 
   &a\?\sym{E}
   &\eqref{shift-5}  
\\
 \trans{b}{B}\?\sym{E} 
   &\to 
   &b\?\sym{E}
   &\eqref{shift-5}  
&\sym{W}\?\sym{V} 
   &\to
   &\sym{R}\?\sym{L}
   &\eqref{shift-6}
&\sym{L}\?a 
   &\to
   &\trans{a}{C}\?\sym{L}
   &\eqref{shift-7}
\\   
 \sym{L}\?b 
   &\to
   &\trans{b}{C}\?\sym{L}
   &\eqref{shift-7}
&\sym{L}\?a\?a 
   &\to 
   &\sym{D}\?a\?b\?a 
   &\eqref{shift-8}
&\trans{a}{C}\?\sym{D} 
   &\to
   &\sym{D}\?a
   &\eqref{shift-9}
\\
\trans{b}{C}\?\sym{D} 
   &\to
   &\sym{D}\?b
   &\eqref{shift-9}
&\sym{R}\?\sym{D}
   &\to
   &\sym{B}
   &\eqref{shift-10}
\end{array}
$$ 
The cycle rewrite step $[a\?b\?a] \cycto_{R_1} [b\?a\?b\?a]$
is simulated in the transformed system as follows:
$$\begin{array}{@{}l@{}l@{}}
\Redex{$\sym{B}$}\?a\?b\?a\?\sym{E}

&\xrightarrow{\!\tiny\ref{shift-1}\!\!}
\sym{W}\?\Redex{$\sym{M}\?\sym{V}\?a$}\?b\?a\?\sym{E}

\xrightarrow{\!\tiny\ref{shift-3}\!\!}
\sym{W}\?\sym{V}\?\Redex{$\trans{a}{B}b$}\?a\?\sym{E}

\xrightarrow{\!\tiny\ref{shift-4}\!\!}
\sym{W}\?\sym{V}\?b\Redex{$\trans{a}{B}\?a$}\?\sym{E}

\xrightarrow{\!\tiny\ref{shift-4}\!\!}
\sym{W}\?\sym{V}\?b\?a\Redex{$\trans{a}{B}\?\sym{E}$}
\\
&\xrightarrow{\!\tiny\ref{shift-5}\!\!}
\Redex{$\sym{W}\?\sym{V}$}\?b\?a\?a\?\sym{E}

\xrightarrow{\!\tiny\ref{shift-6}\!\!}
\sym{R}\?\Redex{$\sym{L}\?b$}\?a\?a\?\sym{E}

\xrightarrow{\!\tiny\ref{shift-7}\!\!}
\sym{R}\trans{b}{C}\Redex{$\sym{L}\?a\?a$}\?\sym{E}

\xrightarrow{\!\tiny\ref{shift-8}\!\!}
\sym{R}\?\Redex{$\trans{b}{C}\sym{D}$}\?a\?b\?a\?\sym{E}
\\
&\xrightarrow{\!\tiny\ref{shift-9}\!\!}
\Redex{$\sym{R}\?\sym{D}$}\?b\?a\?b\?a\?\sym{E}

\xrightarrow{\!\tiny\ref{shift-10}\!\!}
\sym{B}\?b\?a\?b\?a\?\sym{E}
\end{array}$$
\end{example}

Together with Proposition~\ref{prop:N-shifts-are-sufficient-shift}, the following lemma
implies soundness of the transformation $\sym{shift}$.

\begin{lemma}\label{lemm:shift-sound}
 If $u \shift^{< {\mathrm{len}}(R)} v \to_R w$, then $\sym{B}\?u\?\sym{E} \to_{\sym{shift}(R)}^+ \sym{B}\?w\?\sym{E}$.
\end{lemma}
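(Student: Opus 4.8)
The plan is to prove Lemma~\ref{lemm:shift-sound} by decomposing the given reduction $u \shift^{< {\mathrm{len}}(R)} v \to_R w$ into its two phases and simulating each phase separately in $\sym{shift}(R)$. Suppose $u \shift^k v$ with $k = N - k'$ for some $k' \ge 0$ where $N = \max(0,{\mathrm{len}}(R)-1)$, so that $u = u_1\?u_2$ with $|u_1| = k$ and $v = u_2\?u_1$. First I would start from $\sym{B}\?u_1\?u_2\?\sym{E}$ and apply rule \eqref{shift-1} to obtain $\sym{W}\?\sym{M}^N\?\sym{V}\?u_1\?u_2\?\sym{E}$. Then I would use rule \eqref{shift-2} exactly $k'$ times to delete the surplus $\sym{M}$'s, reaching $\sym{W}\?\sym{M}^{k}\?\sym{V}\?u_1\?u_2\?\sym{E}$. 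Now I would move the $k$ symbols of $u_1$ one at a time across to the right end: for each leading symbol $a$ of the remaining $u_1$, rule \eqref{shift-3} rewrites $\sym{M}\?\sym{V}\?a$ to $\sym{V}\?\trans{a}{B}$, then rules \eqref{shift-4} carry the $B$-copy $\trans{a}{B}$ rightward past all the remaining $A$-symbols and $B$-symbols until it sits just before $\sym{E}$, and finally rule \eqref{shift-5} turns $\trans{a}{B}\?\sym{E}$ back into $a\?\sym{E}$. Iterating this $k$ times consumes all $k$ of the $\sym{M}$'s and produces exactly $\sym{W}\?\sym{V}\?u_2\?u_1\?\sym{E} = \sym{W}\?\sym{V}\?v\?\sym{E}$.

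For the second phase, write $v = w_1\?\ell\?w_2$ with $(\ell \to r) \in R$ and $w = w_1\?r\?w_2$. I would apply rule \eqref{shift-6} to rewrite $\sym{W}\?\sym{V}$ into $\sym{R}\?\sym{L}$, obtaining $\sym{R}\?\sym{L}\?w_1\?\ell\?w_2\?\sym{E}$. Then rule \eqref{shift-7} carries $\sym{L}$ rightward through $w_1$, turning each symbol $a$ of $w_1$ into its $C$-copy $\trans{a}{C}$, reaching $\sym{R}\?\trans{w_1}{C}\?\sym{L}\?\ell\?w_2\?\sym{E}$. At this point rule \eqref{shift-8} fires on $\sym{L}\?\ell$, giving $\sym{R}\?\trans{w_1}{C}\?\sym{D}\?r\?w_2\?\sym{E}$. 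Next rules \eqref{shift-9} carry $\sym{D}$ leftward through the $C$-copied prefix, restoring each $\trans{a}{C}$ to $a$, until we reach $\sym{R}\?\sym{D}\?w_1\?r\?w_2\?\sym{E}$. Finally rule \eqref{shift-10} collapses $\sym{R}\?\sym{D}$ into $\sym{B}$, yielding $\sym{B}\?w_1\?r\?w_2\?\sym{E} = \sym{B}\?w\?\sym{E}$. Concatenating the two phases gives the required $\sym{B}\?u\?\sym{E} \to_{\sym{shift}(R)}^+ \sym{B}\?w\?\sym{E}$, and the sequence is nonempty since at least rule \eqref{shift-1} and rule \eqref{shift-10} are used.

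The routine but slightly delicate parts are the bookkeeping in the first phase: one must check that after $k$ rounds of "pop one symbol off $u_1$ via \eqref{shift-3}, shift it right via \eqref{shift-4}, deposit it via \eqref{shift-5}" the symbols arrive in the correct order (each newly moved symbol ends up to the right of the previously moved ones, which is exactly what $\shift$ does, since $u_1\?u_2 \shift^{|u_1|} u_2\?u_1$), and that exactly $k$ of the $\sym{M}$'s are consumed so that none are left over to block rule \eqref{shift-6}. It is convenient to phrase this as an auxiliary induction on $|u_1|$ showing $\sym{W}\?\sym{M}^{k}\?\sym{V}\?u_1\?u_2\?\sym{E} \to_{\sym{shift}(R)}^* \sym{W}\?\sym{V}\?u_2\?u_1\?\sym{E}$ for $|u_1| = k$. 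The genuine obstacle, to the extent there is one, is purely notational: the $B$- and $C$-copies of the alphabet exist solely to prevent rules \eqref{shift-4}, \eqref{shift-7}, and \eqref{shift-9} from interfering with still-to-be-processed $A$-symbols, and one must track which alphabet each symbol lives in at each intermediate step; but since we are only establishing the existence of one particular simulating reduction (not its uniqueness or confluence), no case analysis on alternative reduction choices is needed, and the construction above suffices.
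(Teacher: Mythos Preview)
Your proposal is correct and follows essentially the same approach as the paper: both construct an explicit $\sym{shift}(R)$-reduction by first using \eqref{shift-1}--\eqref{shift-5} to simulate the $k$ shifts (one symbol at a time, each fully transported to $\sym{E}$ before the next is started) and then using \eqref{shift-6}--\eqref{shift-10} to simulate the single $\to_R$-step. One small slip: rule \eqref{shift-4} only moves a $\Sigma_B$-symbol past $\Sigma_A$-symbols, not past other $\Sigma_B$-symbols, but this is harmless in your construction since you convert each transported symbol back via \eqref{shift-5} before starting the next round, so at most one $\Sigma_B$-symbol is ever present.
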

\begin{proof}
Let ${{\mathrm{len}}}(R) = m+1$, $u = a_1 \ldots a_n$ and $u \shift^k a_{k+1}\ldots a_n~a_1\ldots a_k = v$,
and let $v = a'_1 \ldots a_n'$ and $w = a_1' \ldots a_i' a_1'' \ldots a_r'' a_{i+j}' \ldots a_n'$,
{i.e.}~the applied rewrite rule is $a_{i+1}'\ldots a_{i+j-1}' \to a_1'' \ldots a_r''$.
Then the following rewrite sequence using $\sym{shift}(R)$ can be constructed:

\[\begin{array}{lr}
\sym{B}\?a_1\ldots a_n\?\sym{E}  
\xrightarrow{\eqref{shift-1}}    \sym{W}\?\sym{M}^m\?\sym{V}\?a_1\ldots a_n\?\sym{E}
\xrightarrow{\eqref{shift-2}}^*  \sym{W}\?\sym{M}^k\?\sym{V}\?a_1\ldots a_n\?\sym{E}\\
\xrightarrow{\eqref{shift-3}}    \sym{W}\?\sym{M}^{k-1}\?\sym{V}\trans{a_1}{B}\?a_2\ldots a_n\?\sym{E}
\xrightarrow{\eqref{shift-4}}^*  \sym{W}\?\sym{M}^{k-1}\?\sym{V}\?a_2\ldots a_n\trans{a_1}{B}\sym{E}\\
\xrightarrow{\eqref{shift-5}}    \sym{W}\?\sym{M}^{k-1}\?\sym{V}\?a_2\ldots a_n\?a_1\sym{E}
\left(\xrightarrow{\eqref{shift-3}}.\xrightarrow{\eqref{shift-4}}^*.\xrightarrow{\eqref{shift-5}}\right)^*
\sym{W}\?\sym{V}\?a_{k+1}\ldots a_n\?a_1\ldots a_k\?\sym{E}\\
\xrightarrow{\eqref{shift-6}}   \sym{R}\?\sym{L}\?a_{k+1}\ldots a_n\?a_1\ldots a_k\?\sym{E}
                           =    \sym{R}\?\sym{L}\?a'_1\ldots a'_n\?\sym{E}
\xrightarrow{\eqref{shift-7}}^* \sym{R}\trans{a'_1}{C}\ldots\trans{a_i'}{C}\?\sym{L}a_{i+1}'\ldots a'_n\?\sym{E}\\
\xrightarrow{\eqref{shift-8}}   \sym{R}\trans{a'_1}{C}\ldots\trans{a_i'}{C}\?\sym{D}a''_1\ldots a''_r a_{i+j}'\ldots a'_n\?\sym{E}
\xrightarrow{\eqref{shift-9}}^* \sym{R}\?\sym{D}\?a_1'\ldots{a_i'}a''_1\ldots a''_r a_{i+j}'\ldots a'_n\?\sym{E}\\
\xrightarrow{\eqref{shift-10}}^*\sym{B}\?a_1'\ldots{a_i'}a''_1\ldots a''_r a_{i+j}'\ldots a'_n\?\sym{E}\\
\end{array}\vspace{-16 pt}
\]
\end{proof}

\noindent We prove completeness by using type introduction.
Let us write $\widetilde{\Sigma_A} := \Sigma_A \cup \Sigma_B \cup \Sigma_C \cup \{\sym{B},\sym{E},\sym{W},\sym{V},\sym{M},\sym{L},\sym{R},\sym{D}\}$.
Let $\Types := \{\typeAB, \typeC, \typeConst, \typeM, \typeT\}$
be the set of types.
We assign the following types to the symbols of $\widetilde{\Sigma_A}$:
$$
\begin{array}{@{}c@{\quad}c@{\quad}c@{\quad}c@{}}
\begin{array}{lll}
a                       &: \typeAB \to \typeAB &\text{for all $a \in \Sigma_A$}\\
b                       &: \typeAB \to \typeAB &\text{for all $b  \in \Sigma_B$}\\
c                       &: \typeC \to \typeC &\text{for all $c \in \Sigma_C$}
\end{array}
&
\begin{array}{lll}
\sym{E}              &: \typeConst \to \typeAB\\ 
\sym{V}              &: \typeAB \to \typeM\\
\sym{M}              &: \typeM \to \typeM
\end{array}
&
\begin{array}{lll}
\sym{W}              &: \typeM \to \typeT\\
\sym{B}              &: \typeAB \to \typeT\\
\sym{L}              &: \typeAB \to \typeC
\end{array}
&
\begin{array}{lll}
\sym{D}              &: \typeAB \to \typeC\\
\sym{R}              &: \typeC \to \typeT\\
                        &\\
\end{array}
\end{array}
$$
We verify that  $\sym{shift}(R)$ is a typed SRS.
Rules 
\eqref{shift-1}, 
\eqref{shift-6},
and \eqref{shift-10}
rewrite strings of 
type $\typeAB \to \typeT$,
rule \eqref{shift-2}
rewrites strings of type $\typeM \to \typeM$,
rule \eqref{shift-3}
rewrites strings of type $\typeAB \to \typeM$,
rule \eqref{shift-4}
rewrites strings of type $\typeAB \to \typeAB$,
rule \eqref{shift-5}
rewrites strings of type $\typeConst \to \typeAB$,
and
rules 
\eqref{shift-7},
\eqref{shift-8}, and
\eqref{shift-9}
rewrite strings of type $\typeAB \to \typeC$.

\begin{lemma}\label{lem:shift-t3-suff}
Let $w \in \widetilde{\Sigma_A}^*$  be a well-typed string, 
{s.t.}~$w$ admits an infinite reduction 
{w.r.t.}~$\toshift$.
Then there exists a well-typed string $w' \in \widetilde{\Sigma_A}^*$ of type $\typeConst \to \typeT$ which admits
an infinite reduction {w.r.t.}~$\toshift$.
\end{lemma}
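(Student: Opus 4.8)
The plan is to mimic the proof of Lemma~\ref{lemma:sort-s2-sufficient} for the transformation $\sym{split}$: given a well-typed string $w$ of some type $\tau_1 \to \tau_2$ that admits an infinite $\toshift$-reduction, I would prepend and append suitable symbols so that the resulting string has type $\typeConst \to \typeT$, observing that prepending/appending symbols preserves the ability to continue the infinite reduction (the added symbols merely sit at the ends and do not block any step, since every rule either acts in the interior or, if it touches $\sym{B}$, $\sym{E}$, $\sym{R}$, $\sym{L}$, etc., the newly added boundary symbols are chosen so as not to create or destroy a redex). First I would note that $\src$ of a well-typed string ranges over the source types of symbols and $\tgt$ over the target types; scanning the typing table, the only type that can serve as an overall \emph{source} is $\typeConst$ (the source of $\sym{E}$) or $\typeAB, \typeM, \typeC, \typeAB$ — wait, more precisely I would simply enumerate which $\tau_1 \in \Types$ admit a non-empty well-typed string with that source, and likewise which $\tau_2$ admit one as target, and then for each such $(\tau_1,\tau_2)$ exhibit a short well-typed ``cap'' string $u$ with $\tgt(u)$ matching and $\src(u)=\typeConst$ or ending appropriately, and similarly a ``tail'' $v$.

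Concretely: if $\tau_1 \neq \typeConst$, prepend nothing on that side but instead append on the right a tail $v$ with $\tgt(v) = \tau_1$ and $\src(v) = \typeConst$; the candidates are $v = \sym{E}$ (which has $\typeConst \to \typeAB$), $v = \sym{L}\,\sym{E}$ (type $\typeAB\to\typeC$ composed gives $\typeConst\to\typeC$), $v = \sym{R}\,\sym{L}\,\sym{E}$ or $v = \sym{R}\,\sym{D}\ldots$, etc., covering $\tau_1 \in \{\typeAB,\typeC,\typeT\}$; and $\tau_1 = \typeM$ is handled by $v = \sym{V}\,\sym{E}$ (type $\typeM\to\typeT$? no — $\sym{V}:\typeAB\to\typeM$, so $\sym{V}\,\sym{E}$ has type $\typeConst\to\typeM$, good). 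For the left side, if $\tau_2 \neq \typeT$ prepend a cap $u$ with $\src(u) = \tau_2$ and $\tgt(u) = \typeT$: $u=\sym{B}$ for $\tau_2=\typeAB$, $u=\sym{W}$ for $\tau_2=\typeM$, $u=\sym{R}$ for $\tau_2=\typeC$, $u = \sym{W}\,\sym{V}$ or $\sym{W}\,\sym{M}\ldots$ — one picks whatever composes correctly. Then $u\,w\,v$ is well-typed of type $\typeConst \to \typeT$, and $u\,w\,v \toshift u\,w_1\,v \toshift \cdots$ is infinite because each step of $w \toshift w_1 \toshift \cdots$ still applies inside $u\,w_i\,v$ (no rule of $\sym{shift}(R)$ has a left-hand side straddling a boundary in a way that the caps would spoil — the caps only add a single symbol of a ``fresh'' control sort on each end).

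The main obstacle I anticipate is the bookkeeping: I must be sure that for \emph{every} combination of $(\tau_1,\tau_2)$ that can actually occur for a well-typed string over $\widetilde{\Sigma_A}$, a suitable cap and tail exist, and that the cap/tail I prepend does not accidentally enable a \emph{new} infinite reduction that would make the argument circular (it doesn't matter here — we only need \emph{some} infinite reduction) nor, more importantly, block the old one. Since the rewrite relation is monotone under context (string rewriting is closed under taking left/right contexts), blocking is a non-issue: $w_i \toshift w_{i+1}$ immediately gives $u\,w_i\,v \toshift u\,w_{i+1}\,v$. So really the only thing to check is existence of the caps, which is a finite case analysis on $\Types = \{\typeAB,\typeC,\typeConst,\typeM,\typeT\}$; I would present it compactly by reading off from the typing table, exactly as the $\sym{split}$ case did. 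Hence the proof will read: ``Assume $w$ has type $\tau_1 \to \tau_2 \neq \typeConst \to \typeT$. Note $\tau_1 \neq \typeT$ and $\tau_2 \neq \typeConst$ since no non-empty well-typed string has source $\typeT$ or target $\typeConst$. Choose $u$ depending on $\tau_2$ and $v$ depending on $\tau_1$ as above so that $u\,w\,v : \typeConst \to \typeT$; then the infinite reduction of $w$ lifts to one of $u\,w\,v$.''
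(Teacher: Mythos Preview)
Your proposal is correct and follows essentially the same approach as the paper: prepend a cap $u$ and append a tail $v$ (chosen by a finite case analysis on $\tau_1,\tau_2$) so that $uwv$ has type $\typeConst \to \typeT$, and then invoke closure of string rewriting under contexts to lift the infinite reduction. The paper's concrete choices differ only trivially from yours (e.g.\ it uses $\sym{D}\,\sym{E}$ rather than $\sym{L}\,\sym{E}$ for $\tau_1 = \typeC$, and $\sym{B}\,\sym{E}$ for the degenerate cases $\tau_2 = \typeConst$ or $\tau_1 = \typeT$, which you instead rule out up front as impossible for a non-empty well-typed string).
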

\begin{proof}
Let $w$ be a well-typed string of type $\tau_1 \to \tau_2 \not= \typeConst \to \typeT$ where
$\tau_1,\tau_2 \in \Types$ s.t.\ $w$ admits an infinite reduction w.r.t.\ $\toshift$.
We prepend  and append symbols to $w$ constructing a string $w' = uwv$ of type 
$\typeConst \to \typeT$ as follows:

The string $u$ is the empty string if $\tau_2 = \typeT$ and
otherwise for any $\tau_2 \in \{\typeConst,\typeAB,\typeM,\typeC\}$ 
there is a sequence of type $\tau_2 \to \typeT$,
which is used for $u$:
\[
\begin{array}{l@{~}c@{~}l@{\qquad\qquad}l@{~}c@{~}l@{\qquad\qquad}l@{~}c@{~}l@{\qquad\qquad}l@{~}c@{~}l}
\sym{R} &:& \typeC \to \typeT
&
\sym{W} &:& \typeM \to \typeT
&
\sym{B} &:& \typeAB \to \typeT
&
\sym{B}\?\sym{E}&:&\typeConst \to \typeT
\end{array}
\]
The string $v$ is the empty string if $\tau_1 = \typeConst$ and
otherwise for any $\tau_1 \in \{\typeAB,\typeM,\typeC,\typeT\}$ 
there is a sequence of type $\typeConst \to \tau_1$,
which is used for $v$:
\[
\begin{array}{l@{~}c@{~}l@{\qquad\qquad}l@{~}c@{~}l@{\qquad\qquad}l@{~}c@{~}l@{\qquad\qquad}l@{~}c@{~}l}
\sym{E} &:& \typeConst \to \typeAB
&
\sym{V\?E} &:& \typeConst \to \typeM
&
\sym{D\?E} &:& \typeConst \to \typeC
&
\sym{B\?E} &:& \typeConst \to \typeT
\end{array}
\]
Clearly, the infinite reduction for $w$ can be used to construct an infinite reduction for $uwv$.
\end{proof}
By checking all possible cases,
the following lemma can be verified:

\begin{lemma}\label{lem:cases-terms-sort-T3}
Any well-typed string of type $\typeConst \to \typeT$ 
is of one of the following forms:
\begin{itemize}
 \item $\sym{W}\?\sym{M}^i\sym{V}w\sym{E}$ where $i \geq 0$ and $w \in (\Sigma_A \cup \Sigma_B)^*$,
 \item $\sym{B}\?w\?\sym{E}$ where $w \in (\Sigma_A \cup \Sigma_B)^*$,
 \item $\sym{R}\?w_c\?\sym{L}\?w\?\sym{E}$ where $w_c \in \Sigma_C^*$ and $w \in (\Sigma_A \cup \Sigma_B)^*$, or
 \item $\sym{R}\?w_c\?\sym{D}\?w\?\sym{E}$ where $w_c \in \Sigma_C^*$ and $w \in (\Sigma_A \cup \Sigma_B)^*$.
\end{itemize}
\end{lemma}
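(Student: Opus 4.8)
The plan is to prove Lemma~\ref{lem:cases-terms-sort-T3} by a straightforward case analysis on the structure of a well-typed string $w$ of type $\typeConst \to \typeT$, exploiting the fact that the types severely constrain which symbol can follow which. First I would record the key observation extracted from the typing table: a symbol $x : \sigma \to \tau$ can be immediately followed (to its right) by a symbol $y : \sigma' \to \tau'$ only if $\src(x) = \tgt(y)$, i.e.\ $\sigma = \tau'$. Reading off the table, the only symbol with target type $\typeT$ is $\sym{W}$ (from $\typeM$), $\sym{B}$ (from $\typeAB$), and $\sym{R}$ (from $\typeC$); since $w$ has overall target $\typeT$, its first symbol is one of $\sym{W}, \sym{B}, \sym{R}$. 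This gives the three-way split corresponding (after further analysis) to the four listed forms.

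Then I would trace each case forward, repeatedly asking ``which symbols have target type equal to the source type of the symbol just placed?''. If the first symbol is $\sym{B} : \typeAB \to \typeT$, the remainder must have target $\typeAB$; the symbols with target $\typeAB$ are exactly those in $\Sigma_A \cup \Sigma_B$ together with $\sym{E} : \typeConst \to \typeAB$, and since the string must end with source type $\typeConst$, and only $\sym{E}$ has source $\typeConst$, we get $\sym{B}\?w\?\sym{E}$ with $w \in (\Sigma_A \cup \Sigma_B)^*$. If the first symbol is $\sym{W} : \typeM \to \typeT$, the remainder has target $\typeM$; the symbols with target $\typeM$ are $\sym{V} : \typeAB \to \typeM$ and $\sym{M} : \typeM \to \typeM$, so we get a (possibly empty) block $\sym{M}^i$ followed by $\sym{V}$, after which the target type drops to $\typeAB$ and we continue as before with a block from $(\Sigma_A \cup \Sigma_B)^*$ ended by $\sym{E}$, yielding $\sym{W}\?\sym{M}^i\?\sym{V}\?w\?\sym{E}$. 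If the first symbol is $\sym{R} : \typeC \to \typeT$, the remainder has target $\typeC$; the symbols with target $\typeC$ are those in $\Sigma_C$ together with $\sym{L} : \typeAB \to \typeC$ and $\sym{D} : \typeAB \to \typeC$, so we get a block $w_c \in \Sigma_C^*$ followed by exactly one of $\sym{L}$ or $\sym{D}$, after which (same as before) a block in $(\Sigma_A \cup \Sigma_B)^*$ terminated by $\sym{E}$, giving the last two forms.

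The routine engine of the argument is an induction on string length formalizing ``read the type left to right''; I would state a small sublemma of the form: \emph{any well-typed string of type $\tau \to \typeAB$ with $\tau \in \{\typeConst\}$ is $w\?\sym{E}$ with $w \in (\Sigma_A\cup\Sigma_B)^*$}, and similar helpers for targets $\typeM$ and $\typeC$, and then assemble the four cases from these. I expect no real obstacle here — the content is entirely bookkeeping — but the one place requiring a little care is ruling out ``mixed'' continuations: e.g.\ after $\sym{V}$ one must check that no symbol other than those in $\Sigma_A\cup\Sigma_B$ and $\sym{E}$ has target $\typeAB$ (in particular $\sym{B}$, $\sym{L}$, $\sym{D}$ all have target $\neq \typeAB$, so they cannot reappear in the interior), and symmetrically that $\sym{E}$ is forced as the final symbol because it is the unique symbol of source type $\typeConst$. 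Double-checking the typing table for each such exclusion is the only thing that needs attention; once those are in hand, the enumeration is complete and exhaustive.
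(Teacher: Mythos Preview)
Your proposal is correct and follows essentially the same approach as the paper: the paper simply states ``By checking all possible cases, the following lemma can be verified'' without giving any details, and your write-up is exactly the case analysis on the typing table that this sentence alludes to. Your argument is a faithful and more explicit expansion of what the paper leaves implicit.
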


For a string $w \in (\Sigma_A\cup \Sigma_B)^*$, let $\pi_A(w)$ be the string $w$ where all symbols  $b\in\Sigma_B$
are removed and let $\overline{\pi}_B(w)$ be the reversed string of $w'$ where  $w'$  is $w$ where all symbols $a\in\Sigma_A$ are removed.

\begin{definition}
For a well-typed string $w : \typeConst \to \typeT$,
the mapping $\mapShift(w) \in \Sigma_A^*$ is defined according to the cases of 
Lemma~\ref{lem:cases-terms-sort-T3} as follows:
\[\begin{array}{lcl@{\qquad\qquad}lcl}
  \mapShift(\sym{W}\?\sym{M}^i\?\sym{V}\?w\?\sym{E}) &:=& \pi_A(w)\?\overline{\pi}_B(w)
 &\mapShift(\sym{B}\?w\?\sym{E}) &:=& \pi_A(w)\?\overline{\pi}_B(w)\\
  \mapShift(\sym{R}\?w_c\?\sym{L}\?w\?\sym{E})&:=&\trans{w_c}{A}\?\pi_A(w)\?\overline{\pi}_B(w)
 &\mapShift(\sym{R}\?w_c\?\sym{D}\?w\?\sym{E})&:=&\trans{w_c}{A}\?\pi_A(w)\?\overline{\pi}_B(w)
  \end{array}
 \]
\end{definition}

\begin{lemma}\label{lemma:shift-correspondence}
Let $u$ be a well-typed string of type $\typeConst \to \typeT$.
 If $u  \toshift v$, then $[\mapShift(u)] \cycto_R^* [\mapShift(v)]$.
\end{lemma}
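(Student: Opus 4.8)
The plan is to prove Lemma~\ref{lemma:shift-correspondence} by a straightforward case analysis on the rule of $\sym{shift}(R)$ used in the step $u \toshift v$, exactly paralleling the proof of Lemma~\ref{lemma:mapping}. By Lemma~\ref{lem:cases-terms-sort-T3} the well-typed string $u$ of type $\typeConst \to \typeT$ has one of four shapes, and in each shape only a few of the ten rules of $\sym{shift}(R)$ are applicable (since the left-hand side of the rule must occur in $u$ and the result must still be well-typed of type $\typeConst \to \typeT$). For each applicable rule I check what $\mapShift$ does to the two sides: in most cases $\mapShift(u) = \mapShift(v)$, so $[\mapShift(u)] = [\mapShift(v)]$ and $\cycto_R^*$ holds reflexively; in one case $\mapShift(u) \to_R \mapShift(v)$ or $\mapShift(u) \sim \cdot \to_R \cdot \sim \mapShift(v)$, giving a single $\cycto_R$-step.

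Concretely, I would organize the cases as follows. If $u = \sym{W}\?\sym{M}^i\?\sym{V}\?w\?\sym{E}$, the possible rules are \eqref{shift-2} (removes one $\sym{M}$), \eqref{shift-3} (consumes an $\sym{M}$ and moves the first $\Sigma_A$-symbol of $w$ across $\sym{V}$, turning it into its $\Sigma_B$-copy), \eqref{shift-4} (swaps a $\Sigma_B$-symbol rightwards past a $\Sigma_A$-symbol), \eqref{shift-5} (turns the $\Sigma_B$-symbol at the right end back into a $\Sigma_A$-symbol before $\sym{E}$), and \eqref{shift-6} (only if $i=0$, replacing $\sym{W}\?\sym{V}$ by $\sym{R}\?\sym{L}$). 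For \eqref{shift-2} clearly $\pi_A$ and $\overline{\pi}_B$ of $w$ are unchanged, so $\mapShift$ is unchanged. For \eqref{shift-3}, \eqref{shift-4}, \eqref{shift-5}: the key observation is that $\pi_A(w)\,\overline{\pi}_B(w)$ records the $\Sigma_A$-symbols still to the right of $\sym{V}$ in order followed by the already-shifted symbols (as $\Sigma_B$-copies) read back-to-front; each of these three rules just advances this bookkeeping by one symbol without changing the resulting $\Sigma_A$-word, so again $\mapShift(u) = \mapShift(v)$. For \eqref{shift-6}, the symbol block $w$ is untouched, $w_c$ is empty, so $\mapShift$ is unchanged as well. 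The case $u = \sym{B}\?w\?\sym{E}$ allows only \eqref{shift-1}, which produces $\sym{W}\?\sym{M}^N\?\sym{V}\?w\?\sym{E}$ with the same $w$, hence $\mapShift$ unchanged. The case $u = \sym{R}\?w_c\?\sym{L}\?w\?\sym{E}$ allows \eqref{shift-7} (moves a $\Sigma_A$-symbol across $\sym{L}$ into its $\Sigma_C$-copy, leaving $\trans{w_c}{A}\,\pi_A(w)\,\overline{\pi}_B(w)$ unchanged) and \eqref{shift-8} (the only rule producing an actual rewrite: $\sym{L}\?\ell$ becomes $\sym{D}\?r$); here $\mapShift(u) = \trans{w_c}{A}\,\ell\,(\text{rest})$ and, using $w \in (\Sigma_A\cup\Sigma_B)^*$ with $\ell$ being a $\Sigma_A$-substring, $\mapShift(u) \to_R \trans{w_c}{A}\,r\,(\text{rest}) = \mapShift(v)$, so $[\mapShift(u)] \cycto_R [\mapShift(v)]$. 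Finally $u = \sym{R}\?w_c\?\sym{D}\?w\?\sym{E}$ allows \eqref{shift-9} (moves a $\Sigma_C$-symbol across $\sym{D}$ into its $\Sigma_A$-copy, which changes neither $\trans{w_c}{A}$ read in order nor the rest) and \eqref{shift-10} (only if $w_c$ is empty, replacing $\sym{R}\?\sym{D}$ by $\sym{B}$, leaving $w$ and hence $\mapShift$ unchanged). Collecting all cases gives $[\mapShift(u)] \cycto_R^* [\mapShift(v)]$.

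The main obstacle is getting the invariant for $\mapShift$ on strings of the form $\sym{W}\?\sym{M}^i\?\sym{V}\?w\?\sym{E}$ exactly right, i.e.\ verifying that rules \eqref{shift-3}, \eqref{shift-4}, \eqref{shift-5} genuinely preserve $\pi_A(w)\,\overline{\pi}_B(w)$. This needs a precise description of $w$ during the shifting phase: $w = v'\,\trans{v''}{B}$ where $v'' v' $ is a cyclic rotation of the original word and $\trans{v''}{B}$ is the (contiguous) block of symbols already rotated past $\sym{V}$; actually the $\Sigma_B$-symbols may be interleaved into $v'$ by intermediate \eqref{shift-4}-steps, so I would instead argue directly: $\pi_A$ extracts the $\Sigma_A$-part in order, $\overline{\pi}_B$ the $\Sigma_B$-part reversed, and each of the three rules is a local rearrangement that either (i) converts the leftmost $\Sigma_A$-symbol into a $\Sigma_B$-symbol placed just after it (rule \eqref{shift-3} — but this moves it to the front of the $\overline{\pi}_B$-reversal, matching its removal from the front of $\pi_A$... wait, it becomes the \emph{last} $\Sigma_B$-symbol scanned so appears first in the reversal — I would check this sign carefully), (ii) transposes adjacent $a$ and $b$ (no effect on the two projections separately), or (iii) converts the rightmost $b$ back to $a$ at the very end. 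Pinning down this sign/order bookkeeping is the only delicate point; everything else is immediate from the shapes in Lemma~\ref{lem:cases-terms-sort-T3} and the typing.

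If a cleaner argument is wanted, I would reformulate: observe that for every well-typed $u$ of type $\typeConst \to \typeT$ that arises in a $\toshift$-reduction from some $\sym{B}\?w_0\?\sym{E}$, the string $\mapShift(u)$ equals some cyclic rotation of a string reachable from $[w_0]$ by $\cycto_R$, and that the four rule groups correspond respectively to (shift preparation / shifting / shifting / rewriting / cleanup) phases in which the underlying cycle never changes except for the single \eqref{shift-8}-step. But for the statement as given, the direct case analysis above suffices, and I expect the write-up to mirror the proof of Lemma~\ref{lemma:mapping} almost verbatim.
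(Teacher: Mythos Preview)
Your approach is essentially the paper's: case analysis on the four shapes of Lemma~\ref{lem:cases-terms-sort-T3} and on the applicable rules, checking that $\mapShift$ is preserved (as a cycle) except at \eqref{shift-8}, where a genuine $\to_R$-step occurs. Two small corrections are needed.

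First, your claim that \eqref{shift-3} leaves $\mapShift$ \emph{equal} is wrong, and your hesitation in the obstacle paragraph is justified. If $u = \sym{W}\,\sym{M}^i\,\sym{V}\,a\,w'\,\sym{E}$ with $a\in\Sigma_A$, then $\mapShift(u)=a\,\pi_A(w')\,\overline{\pi}_B(w')$, while after \eqref{shift-3} we have $v=\sym{W}\,\sym{M}^{i-1}\,\sym{V}\,\trans{a}{B}\,w'\,\sym{E}$ and $\mapShift(v)=\pi_A(w')\,\overline{\pi}_B(w')\,a$: the newly created $\Sigma_B$-symbol is leftmost in $w$, hence rightmost after reversal. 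So $\mapShift(u)\sim\mapShift(v)$ but not equality; this is exactly what the paper records, and it still gives $[\mapShift(u)]=[\mapShift(v)]$.

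Second, you understate the set of applicable rules in each shape: since $w\in(\Sigma_A\cup\Sigma_B)^*$ in \emph{all four} shapes, rules \eqref{shift-4} and \eqref{shift-5} may fire inside $w$ in every case (not only the first), and the paper lists them accordingly. Both rules preserve $\pi_A(w)\,\overline{\pi}_B(w)$ exactly, so the extra cases are trivial, but they should be mentioned for completeness.
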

\begin{proof}
We go through the cases of Lemma~\ref{lem:cases-terms-sort-T3} and inspect all applicable rewrite rules:
\begin{itemize}
 \item If $u = \sym{W}\?\sym{M}^i\?\sym{V}\?w\?\sym{E}$,
 then there are the following cases: If rule 
\eqref{shift-2}, \eqref{shift-4}, \eqref{shift-5}, or \eqref{shift-6} is applied,
then $\mapShift(u) = \mapShift(v)$ and
if rule \eqref{shift-3} is applied, then $\mapShift(u) = a\?w_1\?w_2$ (with $w = a\?w'$ and $\pi_A(w) = a\?w_1$ and $\overline{\pi}_B(w) = w_2$) and $\mapShift(v) = w_1\?w_2\?a$ and thus $\mapShift(u) \sim \mapShift(v)$.
\item If $u = \sym{B}\?w\?\sym{E}$, then rules 
\eqref{shift-1}, \eqref{shift-4}, or \eqref{shift-5} may be applied and in all cases  $\mapShift(u) = \mapShift(v)$ holds.
\item For $u= \sym{R}\?w_c\?\sym{L}\?w\?\sym{E}$, there are two cases:
If rule \eqref{shift-4}, \eqref{shift-5}, or \eqref{shift-7} is applied, then  $\mapShift(u) = \mapShift(v)$.
If rule \eqref{shift-8} is applied, then $\mapShift(u) \to_R \mapShift(v)$.
\item If $u = \sym{R}\?w_c\?\sym{D}\?w\?\sym{E}$, then
rules \eqref{shift-4}, \eqref{shift-5}, \eqref{shift-9}, and \eqref{shift-10} may be applied and in all cases
 $\mapShift(u) = \mapShift(v)$ holds.\qedhere
\end{itemize}
\end{proof}

\begin{proposition}\label{prop-shift-complete}
Let $u : \typeConst \to \typeT$ {s.t.}~$u$ 
admits an infinite $\toshift$-reduction.
Then $[u]$ admits an infinite $\cycto_R$-reduction.
\end{proposition}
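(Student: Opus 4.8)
The plan is to imitate the completeness argument for $\sym{split}$ (proof of Theorem~\ref{thm:split-sound-and-complete}): follow the given infinite $\toshift$-reduction of $u$ under the map $\mapShift$ and read off an infinite $\cycto_R$-reduction starting from $[\mapShift(u)]$. Write the reduction as $u = u_0 \toshift u_1 \toshift u_2 \toshift \cdots$. First I would record that every $u_i$ is again a well-typed string of type $\typeConst \to \typeT$: well-typedness is preserved under $\toshift$ since $\sym{shift}(R)$ is a typed SRS, and by inspection of the rules no rewrite step removes the trailing $\sym{E}$ or changes the target type of the leading symbol (this is visible from the shape analysis in Lemma~\ref{lem:cases-terms-sort-T3}), so the type $\typeConst \to \typeT$ is invariant along the reduction. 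Hence Lemma~\ref{lemma:shift-correspondence} applies to each step and yields $[\mapShift(u_0)] \cycto_R^* [\mapShift(u_1)] \cycto_R^* [\mapShift(u_2)] \cycto_R^* \cdots$, a $\cycto_R^*$-sequence starting at $[\mapShift(u)]=[\mapShift(u_0)]$.

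It remains to show this sequence contains infinitely many genuine $\cycto_R$-steps. Re-examining the proof of Lemma~\ref{lemma:shift-correspondence} shows that a step $u_i \toshift u_{i+1}$ using any rule other than $\eqref{shift-8}$ satisfies $\mapShift(u_i) = \mapShift(u_{i+1})$ or $\mapShift(u_i) \sim \mapShift(u_{i+1})$, so $[\mapShift(u_i)] = [\mapShift(u_{i+1})]$, whereas a step using $\eqref{shift-8}$ gives $\mapShift(u_i) \to_R \mapShift(u_{i+1})$, i.e.\ exactly one $\cycto_R$-step. Thus it suffices to prove that $\eqref{shift-8}$ is applied infinitely often in $u_0 \toshift u_1 \toshift \cdots$; equivalently, that $\sym{shift}(R) \setminus \{\eqref{shift-8}\}$ (the rules $\eqref{shift-1}$--$\eqref{shift-7}$, $\eqref{shift-9}$, $\eqref{shift-10}$) is terminating on well-typed strings of type $\typeConst \to \typeT$ — for then, if $\eqref{shift-8}$ occurred only finitely often, a tail of the reduction would be an infinite reduction of this subsystem.

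For that termination claim I would use Lemma~\ref{lem:cases-terms-sort-T3}: every such string carries a single ``control'' prefix, and the only rules changing it are $\eqref{shift-1}$ ($\sym{B}\mapsto\sym{W}\?\sym{M}^N\?\sym{V}$), $\eqref{shift-6}$ ($\sym{W}\?\sym{V}\mapsto\sym{R}\?\sym{L}$) and $\eqref{shift-10}$ ($\sym{R}\?\sym{D}\mapsto\sym{B}$); but $\eqref{shift-10}$ requires the symbol $\sym{D}$, which is produced only by $\eqref{shift-8}$. So without $\eqref{shift-8}$ the cyclic ``phase'' $\sym{B}\leadsto\sym{W}\leadsto\sym{R}$ is broken and can be traversed at most once, and the reduction splits into at most three bounded segments in each of which only the bookkeeping rules remain: $\eqref{shift-2},\eqref{shift-3}$ strictly decrease the number of $\sym{M}$'s, while $\eqref{shift-4},\eqref{shift-5}$ (move a $\Sigma_B$-symbol one step towards $\sym{E}$ and finally convert it to $\Sigma_A$) and $\eqref{shift-7},\eqref{shift-9}$ (move $\sym{L}$ resp.\ $\sym{D}$ past one symbol, converting $\Sigma_A$ resp.\ $\Sigma_C$ to the other) admit an obvious lexicographic measure (number of $\sym{M}$'s, then a count of mis-sorted symbols). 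Hence $\sym{shift}(R)\setminus\{\eqref{shift-8}\}$ terminates, $\eqref{shift-8}$ fires infinitely often, and $[\mapShift(u)]$ admits an infinite $\cycto_R$-reduction. The main obstacle is exactly this termination check for the $\eqref{shift-8}$-free subsystem; the one subtlety there is that one must not try to get it from termination of $\to_R$ (termination of $\sym{shift}(R)$ itself genuinely fails whenever $\cycto_R$ is non-terminating), whereas the $\eqref{shift-8}$-free part terminates unconditionally. Everything else is routine and already packaged in Lemma~\ref{lemma:shift-correspondence} and Lemma~\ref{lem:cases-terms-sort-T3}.
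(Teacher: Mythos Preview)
Your proposal is correct and follows the same overall strategy as the paper: map the infinite $\toshift$-reduction through $\mapShift$ via Lemma~\ref{lemma:shift-correspondence}, observe that only \eqref{shift-8}-steps produce genuine $\cycto_R$-steps, and conclude by showing that $\sym{shift}(R)\setminus\{\eqref{shift-8}\}$ terminates. The one substantive difference is in that last termination argument. The paper proves termination of the \eqref{shift-8}-free subsystem unconditionally on \emph{all} strings, via two polynomial interpretations (a simple one to peel off \eqref{shift-1}, then a heavier one for the rest). You instead restrict to well-typed strings of type $\typeConst\to\typeT$---which is all that is needed here---and exploit the phase structure of Lemma~\ref{lem:cases-terms-sort-T3}: since \eqref{shift-10} consumes a $\sym{D}$ that only \eqref{shift-8} can produce, the control cycle $\sym{B}\leadsto\sym{W}\leadsto\sym{R}\leadsto\sym{B}$ is broken, so only finitely many phase changes occur, and within each phase a local lexicographic measure suffices. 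Your route is more structural and explains \emph{why} the subsystem terminates rather than exhibiting an interpretation; the paper's route is more uniform and does not rely on typing. One small point to tighten if you write it out in full: your ``count of mis-sorted symbols'' needs to be made phase-specific (e.g.\ in the $\sym{R}\,w_c\,\sym{L}\,w\,\sym{E}$ phase take $|w|$ first, then the $\Sigma_B$-content of $w$; in the $\sym{R}\,w_c\,\sym{D}\,w\,\sym{E}$ phase take $|w_c|$ first), since \eqref{shift-3} increases the $\Sigma_B$-count and \eqref{shift-9} increases $|w|$---but these are routine refinements and your sketch clearly has the right shape.
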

\begin{proof}
Suppose that $u : \typeConst \to \typeT$
admits an infinite $\toshift$-reduction.
Applying Lemma~\ref{lemma:shift-correspondence} shows
that it is possible to construct an infinite reduction of $\cycto_R^*$-steps starting with $[u]$.
Since applications of rule $\eqref{shift-8}$ are translated into
$\cycto_R$ steps, it remains to show that the given infinite reduction of $u$ 
has infinitely many applications of rule \eqref{shift-8}.
Therefore we show that the rewrite  system consisting of all rules of 
$\sym{shift}(R)$ except for rule $\eqref{shift-8}$ is terminating. Let us denote
this system by $\sym{shift}_{\setminus \{\eqref{shift-8}\}}(R)$, i.e. 
$\sym{shift}_{\setminus \{\eqref{shift-8}\}}(R) := (\sym{shift}(R) \setminus \{\ell \xrightarrow{\eqref{shift-8}} r\})$.
For proving termination of $\sym{shift}_{\setminus \{\eqref{shift-8}\}}(R)$ we first eliminate rule $\eqref{shift-1}$:
Let $\sigma_0$ be the polynomial interpretation defined by 
 $\sigma_0(B) = \sigma_0(D) = \lambda x.x+1$ and $\sigma_0(y) = \lambda x.x$
 for all other symbols $y$. It is easy to verify that $\sigma_0(\ell) > \sigma_0(r)$ for $\ell \xrightarrow{\eqref{shift-1}} r$
 and $\sigma_0(\ell) \geq \sigma_0(r)$ for $\ell \to r \in \sym{shift}_{\setminus \{\eqref{shift-8}\}}(R)$.
Thus it suffices to prove termination of 
$\sym{shift}_{\setminus \{\eqref{shift-1},\eqref{shift-8}\}}(R) := (\sym{shift}(R) \setminus \{\ell \xrightarrow{\eqref{shift-8}} r,\ell \xrightarrow{\eqref{shift-1}} r\})$:
We use the following polynomial interpretation  $\sigma$. 
Let $N = \max{\{1,{\mathrm{len}}(R)\}}$ and
\[
\begin{array}{@{}l@{~}l@{~}l@{\qquad}l@{~}l@{~}l@{\qquad}l@{~}l@{~}l@{}}
\sigma(a) &=& \lambda x.4x+4 \text{ for all } a\in \Sigma_A
&\sigma(\sym{D}) &=& \lambda x.2^N x + 2 ^ {N+1} - 1
&\sigma(\sym{M}) &=& \lambda x.2x+1
\\
\sigma(b) &=& \lambda x.8x+1 \text{ for all } b\in \Sigma_B
&\sigma(\sym{R}) &=& \lambda x.2x+1
&\sigma(\sym{L}) &=& \lambda x.x
\\
\sigma(c) &=& \lambda x.4x \text{ for all } c \in \Sigma_C
&\sigma(\sym{E}) &=& \lambda x.7x+1
&\sigma(\sym{W}) &=& \lambda x.2x+2
\\
\sigma(\sym{B}) &=& \lambda x.2^N x + 2^{N+1}-1
&\sigma(\sym{V}) &=& \lambda x.x
\end{array}
\]
We verify that the inequation $\sigma(\ell) > \sigma(r)$ holds for all rules of $\sym{shift}_{\setminus \{\eqref{shift-1},\eqref{shift-8}\}}(R)$:
For 
rule \eqref{shift-2}, we have $\lambda x.2x+1 > \lambda x.x$,
for rule \eqref{shift-3}, we have $\lambda x.8x + 9 > \lambda x.8x+1$,
for  rule \eqref{shift-4}, we have $\lambda x.32x+33 > \lambda x.32x+8$,
for rule \eqref{shift-5}, we have $\lambda x.56x+9 > \lambda x.28x+8$,
for rule \eqref{shift-6}, we have $\lambda x.2x+2  > \lambda x.2x+1$,
for rule  \eqref{shift-7}, we have $\lambda x.4x+4 > \lambda x.4x$,
for rule \eqref{shift-9}, we have $\lambda x.2^{N+2}x + 3\cdot 2^{N+1} + 2^{N+1}- 4 > \lambda x. 2^{N+2}x + 3\cdot 2^{N+1}-1$,
and
for rule \eqref{shift-10}, we have $\lambda x.2^{N+1} x + 2 ^ {N+2} - 1 > \lambda x.2^N x + 2 ^ {N+1} - 1$.\qedhere
\end{proof}

\begin{theorem}\label{theo:shift-sound-and-complete}
 The transformation $\sym{shift}$ is sound and complete.
\end{theorem}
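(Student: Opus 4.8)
The plan is to prove the two directions of the ``if and only if'' separately, each by contraposition, assembling the pieces already established for $\sym{shift}$.

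For \emph{soundness} ($\toshift$ terminating implies $\cycto_R$ terminating), I would assume $\cycto_R$ is non-terminating. Proposition~\ref{prop:N-shifts-are-sufficient-shift} then yields an infinite reduction with respect to $\shift^{<{\mathrm{len}}(R)}.\to_R$, that is, strings $w_1,w_2,\ldots$ together with $v_1,v_2,\ldots$ such that $w_i \shift^{<{\mathrm{len}}(R)} v_i \to_R w_{i+1}$ for all $i$. Applying Lemma~\ref{lemm:shift-sound} to each block $w_i \shift^{<{\mathrm{len}}(R)} v_i \to_R w_{i+1}$ gives $\sym{B}\?w_i\?\sym{E} \toshift^+ \sym{B}\?w_{i+1}\?\sym{E}$, and concatenating these non-empty $\toshift$-sequences produces an infinite $\toshift$-reduction starting from $\sym{B}\?w_1\?\sym{E}$, contradicting termination of $\toshift$.

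For \emph{completeness} ($\cycto_R$ terminating implies $\toshift$ terminating), I would again argue by contraposition. If $\toshift$ is non-terminating, some string over $\widetilde{\Sigma_A}$ admits an infinite $\toshift$-reduction. Since $\sym{shift}(R)$ was checked to be a well-typed SRS, Corollary~\ref{cor:typeintro-term} (correctness of type introduction for string termination) lets us assume this string is well-typed; Lemma~\ref{lem:shift-t3-suff} then lets us further assume it has type $\typeConst \to \typeT$. Finally Proposition~\ref{prop-shift-complete} converts this infinite $\toshift$-reduction into an infinite $\cycto_R$-reduction, contradicting termination of $\cycto_R$.

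Thus the theorem is a short composition of Propositions~\ref{prop:N-shifts-are-sufficient-shift} and~\ref{prop-shift-complete}, Lemma~\ref{lemm:shift-sound}, Lemma~\ref{lem:shift-t3-suff}, and the type-introduction results, so I do not expect a genuinely new obstacle at this point. The real technical work has already been carried out in those ingredients --- in particular the polynomial interpretation in the proof of Proposition~\ref{prop-shift-complete} forcing rule \eqref{shift-8} to be applied infinitely often, and the careful definition and correctness of the mapping $\mapShift$ in Lemma~\ref{lemma:shift-correspondence}. The only mild point to keep in mind inside the theorem proof is the degenerate case ${\mathrm{len}}(R) \le 1$, where $\shift^{<{\mathrm{len}}(R)}$ is the identity relation and $N = 0$; one checks that every cited lemma still applies verbatim, so no separate case analysis is required.
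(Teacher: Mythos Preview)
Your proposal is correct and follows exactly the same route as the paper: soundness via Proposition~\ref{prop:N-shifts-are-sufficient-shift} together with Lemma~\ref{lemm:shift-sound}, and completeness via type introduction (Corollary~\ref{cor:typeintro-term}), Lemma~\ref{lem:shift-t3-suff}, and Proposition~\ref{prop-shift-complete}. Your additional remarks on the degenerate case ${\mathrm{len}}(R)\le 1$ and on where the real work lies are accurate but go slightly beyond what the paper states; the paper's proof is literally a two-line citation of these ingredients.
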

\begin{proof}
 Soundness follows by Lemma~\ref{lemm:shift-sound} and Proposition~\ref{prop:N-shifts-are-sufficient-shift}.
Completeness follows by type introduction 
(Corollary~\ref{cor:typeintro-term}), 
Proposition~\ref{prop-shift-complete}, and Lemma~\ref{lem:shift-t3-suff}.
\end{proof}

\subsubsection{Relative Termination}
We also provide a variation of the transformation $\sym{shift}$ to encode relative cycle termination by relative string termination.
\begin{definition}
Let $S \subseteq R$ be SRSs over an alphabet $\Sigma$.
The transformation $\sym{shift}_{\mathit{rel}}$ is defined as:
$$
\begin{array}{lcl}
 \sym{shift}_{\mathit{rel}}(S,R)  &:=& 
 (\{\sym{L} \ell \to \sym{D} r \mid (\ell \to r) \in S\},\sym{shift}(R))
\end{array}
$$
\end{definition}

\begin{theorem}\label{theo:shift-sound-and-complete-relative}
The transformation $\sym{shift}_{\mathit{rel}}$ is sound and complete for relative termination.
\end{theorem}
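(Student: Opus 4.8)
The plan is to mirror the proof of Theorem~\ref{thm:split-sound-and-complete-relative}, using the specific feature of $\sym{shift}$ that a single cycle step is simulated by a $\to_{\sym{shift}(R)}$-sequence containing exactly one rule of type~\eqref{shift-8}, and that $\sym{shift}_{\mathit{rel}}$ declares precisely the type-\eqref{shift-8} rules coming from $S$ to be strict. Thus the task is to re-run the soundness argument (Proposition~\ref{prop:N-shifts-are-sufficient-shift} together with Lemma~\ref{lemm:shift-sound}) and the completeness argument (type introduction together with the mapping $\mapShift$) while keeping track of which steps belong to $S$.

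For soundness I would prove the contrapositive. First I would record the relative refinement of Proposition~\ref{prop:N-shifts-are-sufficient-shift}: if $S$ is not cycle terminating relative to $R$, then $\shift^{<{\mathrm{len}}(R)}.\to_R$ admits an infinite reduction in which infinitely many of the $\to_R$-steps use rules of $S$. Its proof is literally that of Proposition~\ref{prop:N-shifts-are-sufficient-shift} via Lemma~\ref{lem:Rightarrow-simulates-sim-shift}, using that $\shift$-steps never consume a rule of $R$, so reorganising $\shift$-steps neither creates nor destroys $S$-steps. Second, inspecting the proof of Lemma~\ref{lemm:shift-sound} shows that the sequence $\sym{B}\?u\?\sym{E} \to_{\sym{shift}(R)}^+ \sym{B}\?w\?\sym{E}$ simulating $u \shift^{<{\mathrm{len}}(R)} v \to_{\{\ell\to r\}} w$ uses exactly one step of type~\eqref{shift-8}, namely the rule $\sym{L}\?\ell\to\sym{D}\?r$; hence this step is strict (lies in $S' = \{\sym{L}\?\ell\to\sym{D}\?r \mid (\ell\to r)\in S\}$) if and only if $(\ell\to r)\in S$. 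Concatenating the simulations along the infinite $\shift^{<{\mathrm{len}}(R)}.\to_R$-reduction therefore yields an infinite $\to_{R'}$-reduction with infinitely many $\to_{S'}$-steps, so $S'$ is not string terminating relative to $R' = \sym{shift}(R)$.

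For completeness I would again invoke type introduction for relative termination (Theorem~\ref{theo:typeintro-rel}) and Proposition~\ref{prop:rel-prop}. Assuming $S'$ is not string terminating relative to $R'$, the same holds in the typed setting, and a relative variant of Lemma~\ref{lem:shift-t3-suff} (prepending and appending symbols preserves an infinite reduction together with its infinitely many strict steps, since every original step survives in the enlarged context) lets us take the witness to start from a well-typed string of type $\typeConst\to\typeT$, all of whose reducts again have that type. By Proposition~\ref{prop:rel-prop} we may write the reduction as $w_i \to_{S'} u_i \to_{R'}^* w_{i+1}$ for $i=1,2,\ldots$. For the strict step $w_i\to_{S'}u_i$, Lemma~\ref{lem:cases-terms-sort-T3} forces $w_i = \sym{R}\?w_c\?\sym{L}\?\ell\?w'\?\sym{E}$ and $u_i = \sym{R}\?w_c\?\sym{D}\?r\?w'\?\sym{E}$ with $(\ell\to r)\in S$; since $\ell,r\in\Sigma_A^*$ this gives $\mapShift(w_i) = \trans{w_c}{A}\?\ell\?\pi_A(w')\?\overline{\pi}_B(w')$ and $\mapShift(u_i) = \trans{w_c}{A}\?r\?\pi_A(w')\?\overline{\pi}_B(w')$, so $\mapShift(w_i)\to_S\mapShift(u_i)$ and hence $[\mapShift(w_i)]\cycto_S[\mapShift(u_i)]$. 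For the weak part $u_i\to_{R'}^*w_{i+1}$, Lemma~\ref{lemma:shift-correspondence} yields $[\mapShift(u_i)]\cycto_R^*[\mapShift(w_{i+1})]$. Chaining over all $i$ produces $[\mapShift(w_i)]\cycto_S.\cycto_R^*[\mapShift(w_{i+1})]$, so $\cycto_S.\cycto_R^*$ is non-terminating, whence by Proposition~\ref{prop:rel-prop} $S$ is not cycle terminating relative to $R$.

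The routine ingredients — the type-checking of $\sym{shift}(R)$ and the polynomial interpretations used in the non-relative completeness proof — are reused verbatim. The only genuinely new work, and the step I expect to be the main obstacle, is establishing the three relative refinements (of Proposition~\ref{prop:N-shifts-are-sufficient-shift}, of Lemma~\ref{lemm:shift-sound}, and of Lemma~\ref{lem:shift-t3-suff}): one must check carefully that in every simulation a rule of $R$ is consumed as a type-\eqref{shift-8} step exactly once, and that all the auxiliary machinery (the $\shift$-part, the copies $\Sigma_B,\Sigma_C$, and the delimiters) is rule-neutral, so that the correspondence between $\to_{S'}$-steps and $\cycto_S$-steps is exact in both directions.
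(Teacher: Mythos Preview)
Your proposal is correct and follows essentially the same approach as the paper's own proof: soundness via a relative refinement of Proposition~\ref{prop:N-shifts-are-sufficient-shift} together with the observation that the simulation in Lemma~\ref{lemm:shift-sound} uses exactly one \eqref{shift-8}-step, and completeness via type introduction (Theorem~\ref{theo:typeintro-rel}), Lemma~\ref{lem:shift-t3-suff}, the case analysis of Lemma~\ref{lem:cases-terms-sort-T3} for the $S'$-step, and Lemma~\ref{lemma:shift-correspondence} for the $R'$-part. The paper invokes Lemma~\ref{lem:shift-t3-suff} as stated without explicitly noting the relative refinement you flag, but since the embedding $w\mapsto uwv$ preserves each individual step this is indeed immediate.
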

\begin{proof}
Let $\Sigma_A$ be an alphabet, $S \subseteq R$ be SRSs over $\Sigma_A$, and  
$\sym{shift}_{\mathit{rel}}(S,R) = (S',R')$.
For proving soundness of $\sym{shift}_{\mathit{rel}}$, assume that $S$ is not cycle terminating  relative to $R$.
Using the same arguments as in Proposition~\ref{prop:N-shifts-are-sufficient-shift}
there exists an infinite reduction such that 
 $w_i\,(\shift^{< {\mathrm{len}}(R)} . \to_R)^*.\shift^{< {\mathrm{len}}(S)} v_i \to_S w_{i+1}$  
for $i=1,2,\ldots$. By Lemma~\ref{lemm:shift-sound} we have for all $i=1,2,\ldots$:
$\sym{B}w_i\sym{E} \to_{\sym{shift}(R)}^*  \sym{B}v_i\sym{E} \to_{\sym{shift}(S)}^+ \sym{B}w_{i+1}\sym{E}$.
Since ${\to_{\sym{shift}(S)}} \subseteq {\to_{R'}^*.\to_{S'}.\to_{R'}^*}$, this shows that $S'$ is not string terminating relative to $R'$.

For proving completeness, we assume that $S'$ is not string terminating relative to $R'$.
We use typed string rewriting and apply Theorem~\ref{theo:typeintro-rel}
and Lemma~\ref{lem:shift-t3-suff}.
Thus there exists a typed string $w_0 : \typeConst \to \typeT$ 
{s.t.}~for all $i = 0,1,2,\ldots$ the reduction sequence $w_i \to_{S'} w'_i \to_{R'}^* w_{i+1}$  exists.
Typing of $w_i$, Lemma~\ref{lem:cases-terms-sort-T3}, and applicability of $\to_{S'}$ show that 
$w_i$ and $w_i'$ must be of the forms
$w_i = \sym{R}\?u_i\?\sym{L}\?\ell\?v_i\?\sym{E}$
and
$w_i' = \sym{R}\?u_i\?\sym{D}\?r\?v_i\?\sym{E}$
where $(\ell \to r) \in S$, $u_i \in \Sigma_C$, and $v_i \in (\Sigma_A \cup \Sigma_B)^*$.
Since 
$\mapShift(w_i) = \trans{u_i}{A}\ell\pi_A(v_i) \overline{\pi}_B(v_i)$
and $\mapShift(w'_i) 
               = \trans{u_i}{A}r\pi_A(v_i) \overline{\pi}_B(v_i)$,
               we have $\mapShift(w_i) \to_S \mapShift(w'_i)$
               and thus we also have $[\mapShift(w_i)] \cycto_S [\mapShift(w'_i)]$.
Since $R' = \sym{shift}(R)$, Lemma~\ref{lemma:shift-correspondence} shows that 
$[\mapShift(w_i')] \cycto_R^* [\mapShift(w_{i+1})]$.
Thus for all $i=1,2,\ldots$ we have $[\mapShift(w_i)] \cycto_S.\cycto_R^* [\mapShift(w_{i+1})]$
which shows that $S$ is not  cycle terminating relative to $R$.
\end{proof}

\subsection{The Transformation \sym{rotate}}\label{subsec:rotate}
We first present the idea of the transformation $\sym{rotate}$.
The transformation is closely related to the definition  of $\cycto$:
The idea is to first rotate the string and then to apply a prefix rewrite step 
(i.e. both steps can be expressed by the relation $\sim . \prefixrewrite_R$).

\begin{example}\label{example-simple-cont-rotate}
As in Example~\ref{example-simple}, let $R = \{a\?b\?c \to b\?b\?b\?b\}$ and  $[b\?c\?d\?d\?a] \cycto_R [b\?b\?d\?d\?b\?b]$. 

The transformation $\sym{rotate}$ simulates the cycle rewrite step, by first rotating the string
(using $\sim$) 
until $a\?b\?c$ becomes a prefix, and then applies a prefix rewrite step. I.e., we have
$b\?c\?d\?d\?a \sim . \prefixrewrite_R b\?b\?b\?b\?d\?d$, since
$b\?c\?d\?d\?a \sim a\?b\?c\?d\?d  \prefixrewrite_R b\?b\?b\?b\?d\?d$.
It would be possible to use shifts $\shift$ to perform the rotation, i.e.
$b\?c\?d\?d\?a \shift^{*} . \prefixrewrite_R b\?b\?b\?b\?d\?d$, since
$b\?c\?d\?d\?a \shift c\?d\?d\?a\?b \shift d\?d\?a\?b\?c \shift d\?a\?b\?c\?d \shift a\?b\?c\?d\?d \prefixrewrite_R b\?b\?b\?b\?d\?d$. However, our transformation $\sym{rotate}$ 
will implement $\sim$ by  moving symbols  from right to left.
\end{example}

The following lemma obviously holds:
\begin{lemma}\label{lem:Rightarrow-simulates-sim-rotate}
Let $R$ be an SRS over an alphabet $\Sigma$ and $u,v\in\Sigma^*$: 
If $[u]\cycto_R [v]$ then there exist  $u',v'\in\Sigma^*$ {s.t.}~$u \sim u' \prefixrewrite_R v' \sim v$.
\end{lemma}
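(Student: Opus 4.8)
The plan is simply to unfold the definition of the cycle rewrite relation $\cycto_R$. By that definition, $[u]\cycto_R[v]$ means there exist a string $w\in\Sigma^*$ and a rule $(\ell\to r)\in R$ such that $\ell\?w\in[u]$ and $r\?w\in[v]$. I would then take $u' := \ell\?w$ and $v' := r\?w$. From $\ell\?w\in[u]$ we get $u\sim u'$ (membership in the class $[u]$ is exactly $\sim$-equivalence to $u$), and from $r\?w\in[v]$ we get $v'\sim v$, using that $\sim$ is symmetric.

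It remains to observe $u'\prefixrewrite_R v'$, which is immediate: $u' = \ell\?w$ is already of the shape required by the definition of the prefix-rewrite relation, so applying the rule $\ell\to r$ gives $u' = \ell\?w\prefixrewrite_R r\?w = v'$. Chaining the three relations yields $u\sim u'\prefixrewrite_R v'\sim v$, which is the claim. There is no genuine obstacle here --- the lemma is a direct translation of the definition of $\cycto_R$ into the language of rotations followed by a prefix rewrite step --- and indeed the paper states it ``obviously holds''; the only point one must not overlook is the direction of $\sim$, which is harmless since $\sim$ is an equivalence relation.
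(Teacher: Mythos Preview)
Your proof is correct and is exactly the intended argument: the paper itself gives no proof beyond noting that the lemma ``obviously holds'', and your unfolding of the definition of $\cycto_R$ with $u' := \ell\?w$ and $v' := r\?w$ is precisely the immediate verification the authors have in mind.
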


\begin{proposition}\label{prop:N-shifts-are-sufficient-rotate}
Let $R$ be an SRS. If $\cycto_R$ is non-terminating, then
$\sim . \prefixrewrite_R$ admits an infinite reduction.
\end{proposition}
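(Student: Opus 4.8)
The plan is to follow the same pattern as the proof of Proposition~\ref{prop:N-shifts-are-sufficient-shift}, but the argument here is even simpler: since the target relation is just $\sim . \prefixrewrite_R$ (with no bound on the number of rotation steps), we only need Lemma~\ref{lem:Rightarrow-simulates-sim-rotate} and transitivity of $\sim$, and no auxiliary termination argument. First I would invoke the hypothesis to fix an infinite cycle rewrite sequence $[w_1] \cycto_R [w_2] \cycto_R [w_3] \cycto_R \cdots$.

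Next, for each $i \geq 1$ I would apply Lemma~\ref{lem:Rightarrow-simulates-sim-rotate} to the step $[w_i] \cycto_R [w_{i+1}]$, obtaining strings $u_i, v_i \in \Sigma^*$ with $w_i \sim u_i \prefixrewrite_R v_i \sim w_{i+1}$. Then I would chain these individual simulations together: since $\sim$ is an equivalence relation, from $v_i \sim w_{i+1}$ and $w_{i+1} \sim u_{i+1}$ we get $v_i \sim u_{i+1}$ for every $i$, so that
$$w_1 \sim u_1 \prefixrewrite_R v_1 \sim u_2 \prefixrewrite_R v_2 \sim u_3 \prefixrewrite_R v_3 \sim \cdots .$$
Grouping each $\sim$ with the following $\prefixrewrite_R$ then exhibits the infinite reduction $w_1 \mathrel{(\sim . \prefixrewrite_R)} v_1 \mathrel{(\sim . \prefixrewrite_R)} v_2 \mathrel{(\sim . \prefixrewrite_R)} \cdots$, which is exactly what the proposition asks for.

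I do not expect any real obstacle. The only points worth stating explicitly are that $\sim$ is transitive (a rotation of a rotation is a rotation), which is what licenses the splice $v_i \sim w_{i+1} \sim u_{i+1}$ and is implicitly used everywhere in the paper when working with the classes $[u]$, and that the grouping of consecutive $\sim$-steps into a single $\sim$-step (using $\sim . \sim \subseteq \sim$) keeps the sequence in the required shape. Together with Lemma~\ref{lem:Rightarrow-simulates-sim-rotate} for the converse-style simulation, this proposition is what will subsequently be combined with the encoding rules of $\sym{rotate}$ to obtain soundness of the transformation $\sym{rotate}$, in the same way Proposition~\ref{prop:N-shifts-are-sufficient-shift} was used for $\sym{shift}$.
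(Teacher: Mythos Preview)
Your proposal is correct and follows essentially the same argument as the paper: start from an infinite cycle rewrite sequence, apply Lemma~\ref{lem:Rightarrow-simulates-sim-rotate} to each step to obtain $w_i \sim u_i \prefixrewrite_R v_i \sim w_{i+1}$, and then merge consecutive $\sim$-steps by transitivity to get the desired infinite $(\sim . \prefixrewrite_R)$-reduction. The paper's proof is just a slightly terser version of what you wrote.
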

\begin{proof}
Let $[w_1] \cycto_R [w_2] \cycto_R \cdots$ 
be an infinite cycle rewrite sequence.
By Lemma~\ref{lem:Rightarrow-simulates-sim-rotate} we have
$w_1 \sim u_1 \prefixrewrite_R v_1 \sim w_2 \sim u_2 \prefixrewrite_R v_2 \sim \cdots$,
{i.e.}~(by joining the $\sim$-steps) we get
$v_0 = w_1 \sim u_1 \prefixrewrite_R v_1 \sim u_2 \prefixrewrite_R v_2 \sim \cdots$
and thus $v_0$ admits an infinite reduction of the required form.\qedhere
\end{proof}

For an SRS $R$, the SRS $\sym{rotate}(R)$ will encode the relation
$\sim.\prefixrewrite_R$ where extra symbols are used to separate the steps, 
and copies of the alphabet underlying $R$ are used to ensure completeness of the transformations. 
We first introduce the transformation $\sym{rotate}(R)$ and then explain the ideas of the transformation in detail.

Again for the rest of the section, we fix an SRS $R$ over alphabet $\Sigma_A = \{a_1,\ldots,a_n\}$,
write $\Sigma_B, \Sigma_C, \Sigma_D, \Sigma_E$ for fresh copies of the alphabet $\Sigma_A$,
and use $\trans{w}{Y}$ to switch between the alphabets (see Section~\ref{subsec:shift}).

\begin{definition}[The transformation $\sym{rotate}$]\label{def:rotate}
Let $R$ be an SRS over alphabet $\Sigma_A$.
The SRS $\sym{rotate}(R)$ over the alphabet
$\Sigma_A \cup \Sigma_B \cup \Sigma_C \cup \Sigma_D \cup \Sigma_E \cup \{
\BEGIN,\END,\REWRITE,\GORIGHT,\GUESS,\ROTATE,\CUT, 
\MOVELEFT,
\WAIT, 
\FINISH,
\FINISHTWO\}$ (where 
\BEGIN, 
\END, 
\REWRITE, 
\GORIGHT, 
\GUESS, 
\ROTATE, 
\CUT, 
\MOVELEFT, 
\WAIT, 
\FINISH, and
\FINISHTWO\ 
are fresh for $\Sigma_A \cup\Sigma_B \cup\Sigma_C \cup\Sigma_D \cup\Sigma_E$) is:
\begin{flushleft}
\begin{minipage}{.53\textwidth}
\begin{align}
\BEGIN\?\END &\to \REWRITE\?\END  \label{rotate-0}\tag{rotA}
\\
\BEGIN\?a &\to \ROTATE\?\CUT\trans{a}{D}\?\GUESS &&\hspace*{-1.1ex}\text{for all $a \in \Sigma_A$} \label{rotate-1}\tag{rotB}
\\
\GUESS\?a &\to \trans{a}{D}\?\GUESS&&\hspace*{-1.1ex}\text{for all $a \in \Sigma_A$}\label{rotate-2}\tag{rotC}
\\
\GUESS\?a &\to \MOVELEFT\trans{a}{C}\?\WAIT &&\hspace*{-1.1ex}\text{for all $a \in \Sigma_A$}\label{rotate-3}
\tag{rotD}
\\
\GUESS\?\END &\to \FINISH\?\END\label{rotate-4}\tag{rotE}
\\
d\?\MOVELEFT\?c &\to \MOVELEFT\?c\trans{d}{B} &&\hspace*{-1.1ex}\text{for all $c\in\Sigma_C,d\in\Sigma_D$}\label{rotate-5}\tag{rotF}
\\
\CUT\?\MOVELEFT\?c &\to \trans{c}{E}\?\CUT\?\GORIGHT &&\hspace*{-1.1ex}\text{for all $c \in \Sigma_C$}\label{rotate-6}\tag{rotG}
\\
\GORIGHT\?b &\to \trans{b}{D}\?\GORIGHT &&\hspace*{-1.1ex}\text{for all $b \in \Sigma_B$}\label{rotate-7}\tag{rotH}
\end{align}
\end{minipage}\quad~
\begin{minipage}{.45\textwidth}
\begin{align}
\GORIGHT\?\WAIT\?a &\to \MOVELEFT\trans{a}{C}\?\WAIT &&\hspace*{-1.1ex}\text{for all $a \in \Sigma_A$}\label{rotate-8}\tag{rotI}
\\
\GORIGHT\?\WAIT\?\END &\to \FINISH\?\END\label{rotate-9}\tag{rotJ}
\\
d\?\FINISH &\to \FINISH\trans{d}{A} &&\hspace*{-1.1ex}\text{for all $d \in \Sigma_D$}\label{rotate-10}\tag{rotK}
\\
\CUT\?\FINISH &\to \FINISHTWO\label{rotate-11}\tag{rotL}
\\
e\?\FINISHTWO &\to \FINISHTWO\trans{e}{A} &&\hspace*{-1.1ex}\text{for all $e \in \Sigma_E$}\label{rotate-10-2}\tag{rotM}
\\
\ROTATE\?\FINISHTWO &\to \REWRITE\label{rotate-12}\tag{rotN}
\\
\REWRITE\?\ell &\to \BEGIN\? r &&\hspace*{-1.1ex}\text{for all  $(\ell \to r) \in R$} \label{rotate-13}\tag{rotO}
\end{align}
\end{minipage}
\end{flushleft}
\end{definition}\bigskip

\noindent We describe the intended use of the rewrite rules, where we ignore the copies of the alphabet in our explanations.
The goal is that for any string $w \in \Sigma_A^*$, the string $\BEGIN\?w\?\END$ is 
rewritten to $\BEGIN\?u\?\END$, where $w \sim.\prefixrewrite_R u$.
The prefix rewrite step is performed by the last rule $\eqref{rotate-13}$. All other rules perform the rotation $\sim$
{s.t.}~$\BEGIN\?w\?\END$ is rewritten into $\REWRITE\?v\?\END$ where $w \sim v$.
This is done by moving a suffix of the string in front of the string.

The first rewrite rule \eqref{rotate-0} covers the case that $w$ is empty.
Otherwise, if $w = a_1\ldots a_n$, then first choose a position to cut the string into $w_1\?w_2$ (the goal is then to form the string $w_2\?w_1$). The symbol $\GUESS$ is used for the non-deterministic selection of the position. Rule \eqref{rotate-1} starts the rotate phase and the guessing, rule \eqref{rotate-2} shifts the $\GUESS$-marker and rule \eqref{rotate-3} stops the guessing. Rule \eqref{rotate-4} covers the case that $w_2=\varepsilon$ and no rotation will be performed.
After stopping the guessing, every symbol of $w_2$ is moved in front of $w_1$, resulting in $w_2\?w_1$.
A typical situation is $a_{k+1}\ldots a_m\?a_1 \ldots a_k a_{m+1} \ldots a_n$ and now the symbol $a_{m+1}$ must be moved in between $a_m$ and $a_1$. To keep track of the position of $a_1$, the symbol $\CUT$ (inserted in front of $a_1$)  marks the original beginning, and to keep track of the position of $a_k$, the  symbol $\WAIT$  (inserted after $a_k$) marks this position. The symbol $\MOVELEFT$ guards the movement of $a_{m+1}$ (by rule \eqref{rotate-5}). When arriving at the right place (rule \eqref{rotate-6}), the symbol $\GORIGHT$ is used to walk along the string (rule \eqref{rotate-7}) to find the next symbol which has to be moved (rule \eqref{rotate-8}). If all symbols are moved, rule \eqref{rotate-9} is applied to start the clean-up phase. There the symbols $\FINISH$ and $\FINISHTWO$ are used to remove the markers and to replace the copied symbols of the alphabet with the original ones (rules \eqref{rotate-10} -- \eqref{rotate-12}).

In the following, we denote with $\rotatesrs$ the string rewrite system consisting of the rules
$\eqref{rotate-0}$ -- $\eqref{rotate-12}$ from Definition~\ref{def:rotate} (excluding
the rule \eqref{rotate-13}).

\begin{example}
For the system $R = \{a \? a \? a \to a \? b \? a \? b \? a\}$ and 
$\Sigma_A=\{a,b\}$, the transformed string rewrite system is
${\sym{rotate}}(R) = {{\rotatesrs} \cup 
{\{
\REWRITE \? a \? a \? a 
\to
\BEGIN\? a \? b \? a \? b \? a
\}}}$. 
The cycle rewrite step 
$[a\?b\?b\?a\?a] \cycto_{R_1} [a\?b\?a\?b\?a\?b\?b]$ 
is simulated in the system ${\sym{rotate}}(R)$ by rewriting 
$\BEGIN\?a\?b\?b\?a\?a\?\END$ into 
$\BEGIN\?a\?b\?a\?b\?a\?b\?b\END$ as follows:
\begin{align*}
&  \Redex{$\BEGIN\?a$}\?b\?b\?a\?a\?\END
\xrightarrow{\!\tiny\ref{rotate-1}\!\!}
  \ROTATE\?\CUT\trans{a}{D}\Redex{$\GUESS\?b$}\?b\?a\?a\?\END
\xrightarrow{\!\tiny\ref{rotate-2}\!\!}
  \ROTATE\?\CUT\trans{a}{D}\trans{b}{D}\Redex{$\GUESS\?b$}\?a\?a\?\END
\xrightarrow{\!\tiny\ref{rotate-2}\!\!}
  \ROTATE\?\CUT\trans{a}{D}\trans{b}{D}\trans{b}{D}\Redex{$\GUESS\?a$}\?a\?\END
\displaybreak[3]\\&
\!\quad\xrightarrow{\!\tiny\ref{rotate-3}\!\!}
  \ROTATE\?\CUT\trans{a}{D}\trans{b}{D}\Redex{$\trans{b}{D}\MOVELEFT\trans{a}{C}$}\WAIT\?a\?\END
\xrightarrow{\!\tiny\ref{rotate-5}\!\!}
  \ROTATE\?\CUT\trans{a}{D}\Redex{$\trans{b}{D}\MOVELEFT\,\trans{a}{C}$}\trans{b}{B}\WAIT\?a\?\END
\xrightarrow{\!\tiny\ref{rotate-5}\!\!}
  \ROTATE\?\CUT\?\Redex{$\trans{a}{D}\MOVELEFT\,\trans{a}{C}$}\trans{b}{B}\trans{b}{B}\WAIT\?a\?\END
\displaybreak[3]\\&
\!\quad\xrightarrow{\!\tiny\ref{rotate-5}\!\!}
  \ROTATE\?\Redex{$\CUT\?\MOVELEFT\trans{a}{C}$}\trans{a}{B}\trans{b}{B}\trans{b}{B}\WAIT\?a\?\END
\xrightarrow{\!\tiny\ref{rotate-6}\!\!}
  \ROTATE\trans{a}{E}\CUT\?\Redex{$\GORIGHT\trans{a}{B}$}\trans{b}{B}\trans{b}{B}\WAIT\?a\?\END
\xrightarrow{\!\tiny\ref{rotate-7}\!\!}
  \ROTATE\trans{a}{E}\CUT\trans{a}{D}\Redex{$\GORIGHT\trans{b}{B}$}\trans{b}{B}\WAIT\?a\?\END
\displaybreak[3]\\&
\!\quad\xrightarrow{\!\tiny\ref{rotate-7}\!\!}
  \ROTATE\trans{a}{E}\CUT\trans{a}{D}\trans{b}{D}\Redex{$\GORIGHT\trans{b}{B}$}\WAIT\?a\?\END
\xrightarrow{\!\tiny\ref{rotate-7}\!\!}
  \ROTATE\trans{a}{E}\CUT\trans{a}{D}\trans{b}{D}\trans{b}{D}\Redex{$\GORIGHT\?\WAIT\?a$}\?\END
\displaybreak[3]\\&
\!\quad
\xrightarrow{\!\tiny\ref{rotate-8}\!\!}
  \ROTATE\trans{a}{E}\CUT\trans{a}{D}\trans{b}{D}\Redex{$\trans{b}{D}\MOVELEFT\trans{a}{C}$}\WAIT\?\END
\xrightarrow{\!\tiny\ref{rotate-5}\!\!}
  \ROTATE\trans{a}{E}\CUT\trans{a}{D}\Redex{$\trans{b}{D}\MOVELEFT\trans{a}{C}$}\trans{b}{B}\WAIT\?\END
\displaybreak[3]\\&
\!\quad
 \xrightarrow{\!\tiny\ref{rotate-5}\!\!}
  \ROTATE\trans{a}{E}\CUT\?\Redex{$\trans{a}{D}\MOVELEFT\trans{a}{C}$}\trans{b}{B}\trans{b}{B}\WAIT\?\END
\xrightarrow{\!\tiny\ref{rotate-5}\!\!}
  \ROTATE\trans{a}{E}\Redex{$\CUT\?\MOVELEFT\trans{a}{C}$}\trans{a}{B}\trans{b}{B}\trans{b}{B}\WAIT\?\END
\displaybreak[3]\\&
\!\quad
  \xrightarrow{\!\tiny\ref{rotate-6}\!\!}
  \ROTATE\trans{a}{E}\trans{a}{E}\CUT\?\Redex{$\GORIGHT\trans{a}{B}$}\trans{b}{B}\trans{b}{B}\WAIT\?\END
\xrightarrow{\!\tiny\ref{rotate-7}\!\!}
  \ROTATE\trans{a}{E}\trans{a}{E}\CUT\trans{a}{D}\Redex{$\GORIGHT\trans{b}{B}$}\trans{b}{B}\WAIT\?\END
\displaybreak[3]\\&
\!\quad
 \xrightarrow{\!\tiny\ref{rotate-7}\!\!}
  \ROTATE\trans{a}{E}\trans{a}{E}\CUT\trans{a}{D}\trans{b}{D}\Redex{$\GORIGHT\trans{b}{B}$}\WAIT\?\END
%
\xrightarrow{\!\tiny\ref{rotate-7}\!\!}
  \ROTATE\trans{a}{E}\trans{a}{E}\CUT\trans{a}{D}\trans{b}{D}\trans{b}{D}\Redex{$\GORIGHT\?\WAIT\?\END$}
\displaybreak[3]\\&
\!\quad
\xrightarrow{\!\tiny\ref{rotate-9}\!\!}
  \ROTATE\trans{a}{E}\trans{a}{E}\CUT\trans{a}{D}\trans{b}{D}\Redex{$\trans{b}{D}\FINISH$}\END
\xrightarrow{\!\tiny\ref{rotate-10}\!\!}
  \ROTATE\trans{a}{E}\trans{a}{E}\CUT\trans{a}{D}\Redex{$\trans{b}{D}\FINISH$}\?b\?\END
\xrightarrow{\!\tiny\ref{rotate-10}\!\!}
  \ROTATE\trans{a}{E}\trans{a}{E}\CUT\?\Redex{$\trans{a}{D}\FINISH$}\?b\?b\?\END
\displaybreak[3]\\&
\!\quad
\xrightarrow{\!\tiny\ref{rotate-10}\!\!}
  \ROTATE\trans{a}{E}\trans{a}{E}\Redex{$\CUT\?\FINISH$}\?a\?b\?b\?\END
\xrightarrow{\!\tiny\ref{rotate-11}\!\!}
  \ROTATE\trans{a}{E}\Redex{$\trans{a}{E}\FINISHTWO$}\?a\?b\?b\?\END
\xrightarrow{\!\tiny\ref{rotate-11}\!\!}
  \ROTATE\?\Redex{$\trans{a}{E}\FINISHTWO$}\?a\?a\?b\?b\?\END
\xrightarrow{\!\tiny\ref{rotate-11}\!\!}
  \Redex{$\ROTATE\?\FINISHTWO$}\?a\?a\?a\?b\?b\?\END
\displaybreak[3]\\&
\!\quad
\xrightarrow{\!\tiny\ref{rotate-12}\!\!}
  \Redex{$\REWRITE\?a\?a\?a$}\?b\?b\?\END
\xrightarrow{\!\tiny\ref{rotate-13}\!\!}
  \BEGIN\?\?a\?b\?a\?b\?a\?b\?b\?\END
\end{align*}\end{example}

\begin{proposition}\label{prop:rotate-rotates}
If $u \sim  v$ then $\BEGIN\?u\?\END \torot^* \REWRITE\?v\?\END$.
\end{proposition}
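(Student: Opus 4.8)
The plan is to prove the claim by analysing the decomposition $u = w_1\?w_2$, $v = w_2\?w_1$ witnessing $u \sim v$ and, in each case, exhibiting an explicit $\torot$-derivation from $\BEGIN\?u\?\END$ to $\REWRITE\?v\?\END$. All the rules \eqref{rotate-0}--\eqref{rotate-12} are available (only the prefix-rewriting rule \eqref{rotate-13} is excluded from $\rotatesrs$), so the task is purely to show that the rotation phase always runs to completion. Throughout I write $u = a_1\cdots a_n$ with $n = |u|$.

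First I would dispose of the two degenerate cases. If $u = \varepsilon$ then $v = \varepsilon$ and a single step of \eqref{rotate-0} gives $\BEGIN\?\END \torot \REWRITE\?\END$. If $w_2 = \varepsilon$ (so $v = u$ and $n \ge 1$), I would apply \eqref{rotate-1} once and \eqref{rotate-2} $(n-1)$ times to reach $\ROTATE\?\CUT\?\trans{a_1}{D}\cdots\trans{a_n}{D}\?\GUESS\?\END$, then \eqref{rotate-4} to turn $\GUESS\?\END$ into $\FINISH\?\END$, then \eqref{rotate-10} $n$ times to slide $\FINISH$ left while restoring the original letters, obtaining $\ROTATE\?\CUT\?\FINISH\?a_1\cdots a_n\?\END$; finally \eqref{rotate-11} and \eqref{rotate-12} produce $\REWRITE\?a_1\cdots a_n\?\END = \REWRITE\?v\?\END$ (no $\Sigma_E$-letters occur, so \eqref{rotate-10-2} is not needed).

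The main case is $u = a_1\cdots a_n$ with $w_1 = a_1\cdots a_k$, $w_2 = a_{k+1}\cdots a_n$ and $1 \le k < n$. I would begin with one step of \eqref{rotate-1} and $(k-1)$ steps of \eqref{rotate-2}, reaching $\ROTATE\?\CUT\?\trans{a_1}{D}\cdots\trans{a_k}{D}\?\GUESS\?a_{k+1}\cdots a_n\?\END$, followed by one step of \eqref{rotate-3}. The core of the proof is the inductive claim that, for every $m$ with $1 \le m \le n-k$, the string
$$C_m := \ROTATE\?\trans{a_{k+1}}{E}\cdots\trans{a_{k+m}}{E}\?\CUT\?\trans{a_1}{D}\cdots\trans{a_k}{D}\?\GORIGHT\?\WAIT\?a_{k+m+1}\cdots a_n\?\END$$
is reachable from $\BEGIN\?u\?\END$ by $\torot$. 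For $m = 1$ this follows from the steps just described together with $k$ applications of \eqref{rotate-5} (sliding $\MOVELEFT\?\trans{a_{k+1}}{C}$ left past the $\Sigma_D$-block, re-marking it as a $\Sigma_B$-block in the same left-to-right order), one step of \eqref{rotate-6} (jumping over $\CUT$, depositing $\trans{a_{k+1}}{E}$ and spawning $\GORIGHT$), and $k$ applications of \eqref{rotate-7} (walking $\GORIGHT$ back to $\WAIT$ while restoring the $\Sigma_D$-block). For $1 \le m < n-k$ one gets $C_m \torot^+ C_{m+1}$ in exactly the same way, preceded by one step of \eqref{rotate-8} that consumes $a_{k+m+1}$ and recreates the $\MOVELEFT\?c\?\WAIT$ pattern. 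Finally, from $C_{n-k}$ — which has $\GORIGHT\?\WAIT\?\END$ at its right end — one step of \eqref{rotate-9} gives $\FINISH\?\END$, then $k$ steps of \eqref{rotate-10} move $\FINISH$ left (restoring $a_1\cdots a_k$), \eqref{rotate-11} converts $\CUT\?\FINISH$ to $\FINISHTWO$, $(n-k)$ steps of \eqref{rotate-10-2} move $\FINISHTWO$ left (restoring $a_{k+1}\cdots a_n$), and \eqref{rotate-12} yields $\REWRITE\?a_{k+1}\cdots a_n\?a_1\cdots a_k\?\END = \REWRITE\?v\?\END$.

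I expect the only real difficulty to be the bookkeeping: at each stage one must keep track of which copy of the alphabet every letter currently carries, and in particular verify that the sliding steps \eqref{rotate-5} and \eqref{rotate-7} preserve the left-to-right order of the block of $k$ marked letters, so that $\trans{a_1}{D}\cdots\trans{a_k}{D}$ is reproduced intact after each single rotation. Once the shape $C_m$ and these order-preservation facts are pinned down, the induction on $m$ makes everything routine; there is no subtlety about termination or confluence here, since we only need the existence of one derivation.
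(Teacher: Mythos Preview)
Your proof is correct and follows essentially the same approach as the paper's: both exhibit an explicit $\torot$-derivation by case analysis on the rotation amount, with your version spelling out the inductive invariant $C_m$ where the paper merely sketches the intermediate configurations with $\torot^*$. One trivial omission in your case split is $w_1 = \varepsilon$ with $u \neq \varepsilon$, but since then $v = u$, your second degenerate case already supplies the required derivation.
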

\begin{proof}
If $u = \varepsilon$ then $v = \varepsilon$ and $\BEGIN\?\END \torot \REWRITE\?\END$.
If $u = a_1\?\ldots\?a_n$ for some $n \geq 1$ and $v = a_k \ldots a_n a_1 \ldots a_{k-1}$ then there are two cases:
If $k=1$ ({i.e.}~$u = v$) then the following rewrite sequence for $\BEGIN\?u\?\END$ exists:
$$
\begin{array}{l}
\BEGIN\?a_1 \ldots a_n\?\END 
\torot    \ROTATE\?\CUT\trans{a_1}{D}\?\GUESS\?a_2 \ldots a_n\?\END
\torot^* \ROTATE\?\CUT\trans{a_1}{D} \ldots \trans{a_n}{D} \GUESS\?\END\hspace*{3cm}\\
\multicolumn{1}{r}{
\torot   \ROTATE\?\CUT\trans{a_1}{D}\ldots \trans{a_n}{D}\?\FINISH\?\END
\torot^* \REWRITE\?a_1\ldots a_n\?\END}
\end{array}
$$
If $1 < k \leq n$, then the following rewrite sequence for $\BEGIN\?u\?\END$ exists:
$$
\begin{array}{@{}l@{}}
          \BEGIN\?a_1 {\ldots} a_n\?\END 
\torot    \ROTATE\?\CUT\trans{a_1}{D}\GUESS\?a_2 {\ldots} a_n\?\END
\torot^*\!\ROTATE\?\CUT\trans{a_1}{D}{\ldots} \trans{a_{k-1}}{D}\GUESS\?a_{k} {\ldots} a_n\?\END\\
\multicolumn{1}{@{}r@{}}{
\torot  \ROTATE\?\CUT\trans{a_1}{D}{\ldots} \trans{a_{k-1}}{D}\MOVELEFT\trans{a_{k}}{C}\WAIT\?a_{k+1}{\ldots} a_n\?\END
\torot^*\!\ROTATE\trans{a_{k}}{E} {\ldots} \trans{a_n}{E}\CUT\trans{a_1}{D}{\ldots} \trans{a_{k-1}}{D}\GORIGHT\?\WAIT\?\END
}
\\
\multicolumn{1}{@{}r@{}}{
\torot  \ROTATE\trans{a_{k}}{E} {\ldots} \trans{a_n}{E}\CUT\trans{a_1}{D}{\ldots} \trans{a_{k-1}}{D}\FINISH\?\END
\torot^*\!\REWRITE\?a_{k}{\ldots} a_n\?a_1{\ldots} a_{k-1}\?\END 
}
\end{array}
$$
\end{proof}

\begin{theorem}\label{theo:rotate-sound}
The transformation $\sym{rotate}$ is sound.
\end{theorem}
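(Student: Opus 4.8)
The plan is to prove soundness in its contrapositive form: assuming that $\cycto_R$ is non-terminating, I will exhibit an infinite $\torotate$-reduction. The starting point is Proposition~\ref{prop:N-shifts-are-sufficient-rotate}, which turns an infinite cycle rewrite sequence into an infinite reduction of $\sim . \prefixrewrite_R$. Joining consecutive equivalence steps, this takes the form $v_0 \sim u_1 \prefixrewrite_R v_1 \sim u_2 \prefixrewrite_R v_2 \sim \cdots$. Since every rule of $R$ has a non-empty left-hand side, each $u_i$ is non-empty (and in fact no string in this sequence is empty, since an empty string is $\sim$-equivalent only to itself).

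Next I would simulate this sequence step by step inside $\sym{rotate}(R)$. The $\sim$-steps are handled by Proposition~\ref{prop:rotate-rotates}: from $v_{i-1} \sim u_i$ we get $\BEGIN\?v_{i-1}\?\END \torot^* \REWRITE\?u_i\?\END$, and since $\rotatesrs$ is a subsystem of $\sym{rotate}(R)$ this is also a $\torotate^*$-reduction. A single prefix rewrite step $u_i \prefixrewrite_R v_i$, say with $u_i = \ell\?w$, $v_i = r\?w$ and $(\ell\to r)\in R$, is simulated by one application of rule \eqref{rotate-13}, namely $\REWRITE\?u_i\?\END = \REWRITE\?\ell\?w\?\END \torotate \BEGIN\?r\?w\?\END = \BEGIN\?v_i\?\END$, where the marker $\END$ simply rides along at the right end. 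Concatenating these pieces, starting from $\BEGIN\?v_0\?\END$, produces $\BEGIN\?v_0\?\END \torotate^* \REWRITE\?u_1\?\END \torotate \BEGIN\?v_1\?\END \torotate^* \REWRITE\?u_2\?\END \torotate \BEGIN\?v_2\?\END \torotate^* \cdots$.

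Finally I would note that this reduction is genuinely infinite: for every $i$ the segment $\REWRITE\?u_i\?\END \torotate \BEGIN\?v_i\?\END$ contributes at least the one step using rule \eqref{rotate-13}, so infinitely many $\torotate$-steps occur. Hence $\torotate$ is non-terminating, which proves that $\sym{rotate}$ is sound.

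I do not expect a real obstacle here, since the substantive combinatorial work --- showing that the rules of $\rotatesrs$ can realize an arbitrary rotation --- is already encapsulated in Proposition~\ref{prop:rotate-rotates}. The only points needing care are bookkeeping: feeding the $v_i \sim u_{i+1}$ links into Proposition~\ref{prop:rotate-rotates} (using transitivity of $\sim$), checking that rule \eqref{rotate-13} mirrors $\prefixrewrite_R$ exactly with $\END$ appended, and observing that non-emptiness of the left-hand sides of $R$ keeps rule \eqref{rotate-13} applicable at each round.
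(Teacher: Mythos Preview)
Your proposal is correct and follows essentially the same approach as the paper: use Proposition~\ref{prop:N-shifts-are-sufficient-rotate} to extract an infinite $\sim.\prefixrewrite_R$ sequence, simulate each $\sim$-step by Proposition~\ref{prop:rotate-rotates} and each $\prefixrewrite_R$-step by rule~\eqref{rotate-13}, and concatenate. You spell out more of the bookkeeping (non-emptiness, the fact that each round contributes at least one step) than the paper's two-line proof does, but the underlying argument is identical.
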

\begin{proof}
Proposition~\ref{prop:rotate-rotates} and the construction of $\sym{rotate}(R)$ show
that whenever
$u \sim v \prefixrewrite_R w$ then also $\BEGIN\?u\?\END \torotate^* \BEGIN\?w\?\END$.
Now Proposition~\ref{prop:N-shifts-are-sufficient-rotate} implies soundness.
\end{proof}
\subsubsection{Completeness of Rotate}\label{subsubsec:rotate-complete}
We write $\widehat{\Sigma_A}$ for the extension of alphabet $\Sigma_A$ by the 
fresh symbols
\BEGIN,\END,\REWRITE,\GORIGHT,\GUESS,\ROTATE,\CUT,\MOVELEFT,\WAIT,\FINISH,\FINISHTWO, and by four copies $\Sigma_B, \Sigma_C, \Sigma_D, \Sigma_E$ of the alphabet $\Sigma_A$.

For proving completeness of the transformation $\sym{rotate}$,
we use type introduction. Let $\Types := \{\typeConst,\typeA,\typeB,\typeC,\typeD,\typeE,\typeT\}$
and let the symbols in $\widehat{\Sigma_A}$ be typed as follows:
$$\begin{array}{c@{\qquad}c@{\qquad}c}
\begin{array}{l@{~}l@{~}l}
a                     & : \typeA \to \typeA   & \text{for all $a \in \Sigma_A$}\\
b                     & : \typeB \to \typeB   & \text{for all $b \in \Sigma_B$}\\
c                     & : \typeB \to \typeC   & \text{for all $c \in \Sigma_C$}\\
d                     & : \typeD \to \typeD   & \text{for all $d \in \Sigma_D$}\\
e                     & : \typeE \to \typeE  & \text{for all $e \in \Sigma_E$}\\
\end{array}
&
\begin{array}{l@{~}l@{~}l}
\END          & : \typeConst \to \typeA\\
\WAIT         & : \typeA \to \typeB\\
\MOVELEFT     & : \typeC \to \typeD\\
\GORIGHT      & : \typeB \to \typeD\\
\CUT          & : \typeD \to \typeE\\
\end{array}
&
\begin{array}{l@{~}l@{~}l}
\GUESS,\FINISH       & : \typeA \to \typeD\\
\BEGIN,\REWRITE      & : \typeA \to \typeT\\
\ROTATE       & : \typeE \to \typeT\\
\FINISHTWO      & : \typeA \to \typeE
\end{array}
\end{array}
$$
One can verify that with definition $R$ is indeed a typed SRS:
rule \eqref{rotate-0} rewrites strings of type $\typeConst \to \typeT$,
rules 
\eqref{rotate-1}, 
\eqref{rotate-12}, and
\eqref{rotate-13}   rewrite strings of type $\typeA \to \typeT$,
rules 
\eqref{rotate-2},
\eqref{rotate-3},
\eqref{rotate-8}, and 
\eqref{rotate-10}  rewrite strings of type $\typeA \to \typeD$,
rules
\eqref{rotate-4}  and 
\eqref{rotate-9} rewrite strings of type $\typeConst \to \typeD$,
rules 
\eqref{rotate-5} and \eqref{rotate-7} rewrite strings of type $\typeB \to \typeD$,
rule \eqref{rotate-6} rewrites strings of type $\typeB \to \typeE$, and 
rules \eqref{rotate-11} and
\eqref{rotate-10-2} rewrite strings of type $\typeA \to \typeE$.
Thus, by Corollary \ref{cor:typeintro-term} type introduction can be used, {i.e.}~$\sym{rotate}(R)$ is terminating
if, and only if the typed system terminates (and analogously for relative termination, see Theorem~\ref{theo:typeintro-rel}).

\begin{lemma}\label{lemm:rotate-t6-suff}
Let $w \in \widehat{\Sigma_A}^*$ be a well-typed string, {s.t.}~$w$ admits an infinite reduction {w.r.t.}~$\torotate$.
Then there exists a well-typed string $w' \in \widehat{\Sigma_A}^*$ of type $\typeConst \to \typeT$ which admits
an infinite reduction {w.r.t.}~$\torotate$.
\end{lemma}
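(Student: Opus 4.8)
The plan is to mimic the proofs of Lemma~\ref{lemma:sort-s2-sufficient} and Lemma~\ref{lem:shift-t3-suff}. Given a well-typed string $w \in \widehat{\Sigma_A}^*$ of type $\tau_1 \to \tau_2$ that admits an infinite $\torotate$-reduction, I would prepend a well-typed string $u$ with $u : \tau_2 \to \typeT$ and append a well-typed string $v$ with $v : \typeConst \to \tau_1$, so that $w' := u\?w\?v$ is well-typed of type $\typeConst \to \typeT$. Since $\torotate$ is closed under taking contexts $u\?(\cdot)\?v$, the infinite reduction $w \torotate w_1 \torotate w_2 \torotate \cdots$ immediately lifts to the infinite reduction $u\?w\?v \torotate u\?w_1\?v \torotate \cdots$, which gives the claim. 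Hence the whole argument reduces to exhibiting, for every relevant pair $(\tau_1,\tau_2)$, suitable connecting strings $u$ and $v$ built from the symbols in $\widehat{\Sigma_A}$.

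Next I would restrict the range of $\tau_1$ and $\tau_2$ using the typing table: no symbol of $\widehat{\Sigma_A}$ has target type $\typeConst$, so $\tau_2 \in \{\typeA,\typeB,\typeC,\typeD,\typeE,\typeT\}$, and no symbol has source type $\typeT$, so $\tau_1 \in \{\typeConst,\typeA,\typeB,\typeC,\typeD,\typeE\}$. For the prefix I would take $u = \varepsilon$ if $\tau_2 = \typeT$, and otherwise $u = \BEGIN$ if $\tau_2 = \typeA$, $u = \ROTATE$ if $\tau_2 = \typeE$, $u = \ROTATE\?\CUT$ if $\tau_2 = \typeD$, $u = \ROTATE\?\CUT\?\MOVELEFT$ if $\tau_2 = \typeC$, and $u = \ROTATE\?\CUT\?\GORIGHT$ if $\tau_2 = \typeB$; a glance at the table shows that each of these is well-typed with source type $\tau_2$ and target type $\typeT$. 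For the suffix I would take $v = \varepsilon$ if $\tau_1 = \typeConst$, and otherwise $v = \END$ if $\tau_1 = \typeA$, $v = \WAIT\?\END$ if $\tau_1 = \typeB$, $v = c\?\WAIT\?\END$ for an arbitrary $c \in \Sigma_C$ if $\tau_1 = \typeC$, $v = \FINISH\?\END$ if $\tau_1 = \typeD$, and $v = \FINISHTWO\?\END$ if $\tau_1 = \typeE$; again each is well-typed with source type $\typeConst$ and target type $\tau_1$. Composing, $w' = u\?w\?v \in \widehat{\Sigma_A}^*$ is well-typed of type $\typeConst \to \typeT$, as required.

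I expect no real obstacle here: unlike the completeness lemmas, this is a pure padding construction, identical in spirit to Lemma~\ref{lemma:sort-s2-sufficient} and Lemma~\ref{lem:shift-t3-suff}, and the only thing to check is the routine bookkeeping that each connecting string composes correctly with respect to $\src$ and $\tgt$. The one slightly degenerate point is the case $\tau_1 = \typeC$, where the suffix uses a symbol $c \in \Sigma_C$: if $\Sigma_A = \emptyset$ there is no such $c$, but then $R = \emptyset$ and $\sym{rotate}(R) = \rotatesrs$ is a small terminating system with no infinite reduction at all, so the hypothesis of the lemma is vacuous; alternatively, with $\Sigma_A = \emptyset$ the only well-typed string of source type $\typeC$ is the single symbol $\MOVELEFT$, which admits no $\torotate$-step. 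Thus the entire proof is the finite case analysis sketched above.
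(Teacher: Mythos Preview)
Your proof is correct and follows the same padding approach as the paper: extend $w$ on both sides by well-typed connecting strings to reach type $\typeConst \to \typeT$, then lift the infinite reduction through the context. Your specific choices differ harmlessly from the paper's (you use $\ROTATE\?\CUT\?\GORIGHT$ for the $\typeB$ prefix and $\FINISH\?\END$ for the $\typeD$ suffix, whereas the paper uses $\ROTATE\?\CUT\?\MOVELEFT\?c$ and $\MOVELEFT\?c\?\WAIT\?\END$ respectively), and your handling of the degenerate case $\Sigma_A = \emptyset$ via termination of $\rotatesrs$ is sound; the throwaway ``alternatively'' remark that $\MOVELEFT$ is the \emph{only} well-typed string of source type $\typeC$ is not literally true (e.g.\ $\ROTATE\?\CUT\?\MOVELEFT$ also qualifies), but this does not affect the argument.
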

\begin{proof}
W.l.o.g.\ we can assume that $w \not=\varepsilon$ (since $\varepsilon$ is irreducible w.r.t.\ $\torotate$).
If $w$ is not of type $\typeConst \to \typeT$, then we can prepend  and append symbols to $w$ constructing a string $u\?w\?v$ of type $\typeConst \to \typeT$,
since for any type $\tau  \in \{\typeA,\typeB,\typeC,\typeD,\typeE,\typeConst,\typeT\}$ there  are the following sequences of 
type $\tau \to \typeT$ where $\tau \not = \typeT$ (used for $u$)  and of type $\typeConst \to \tau$ where $\tau\not=\typeConst$ (used for $v$),
where $c\in\Sigma_C$:
$$
\begin{array}{@{}l@{\,}c@{\,}l@{\quad}l@{\,}c@{\,}l@{\quad}l@{\,}c@{\,}l@{\quad}l@{\,}c@{\,}l@{\quad}l@{\,}c@{\,}l@{\quad}l@{\,}c@{\,}l@{}}
\BEGIN &:& \typeA \to \typeT
&
\ROTATE\?\CUT\?\MOVELEFT\?c &:& \typeB \to \typeT
&
\ROTATE\?\CUT\?\MOVELEFT &:& \typeC \to \typeT
&
\ROTATE\?\CUT &:& \typeD \to \typeT
&
\ROTATE &:& \typeE \to \typeT
&
\BEGIN\?\END &:&\typeConst \to \typeT
\\
\END &:& \typeConst \to \typeA
&
\WAIT\?\END &:& \typeConst \to \typeB 
&
c\?\WAIT\?\END&:& \typeConst \to \typeC
&
\MOVELEFT\?c\?\WAIT\?\END &:& \typeConst \to \typeD
&
\sym{f}\?\END &:& \typeConst \to \typeE
&
\end{array}
$$
Clearly, the infinite reduction for $w$ can be used to construct an infinite reduction for $uwv$.
\end{proof}

By inspecting the typing the following characterization of terms of type $\typeT$ holds:

\begin{lemma}\label{lem:cases-terms-sort-T6}
Any well-typed string of type $\typeConst \to \typeT$ is of one of the following forms:
\begin{enumerate}
\item 
$\ROTATE\?w_E\?\CUT\?w_D\?\MOVELEFT\?c\?w_B\?\WAIT\?w_A\?\END$
where $w_E\in\Sigma_E^*$, $w_D \in \Sigma_D^*$, $c \in \Sigma_C$, $w_B \in \Sigma_B^*$, and $w_A \in \Sigma_A^*$.
\item 
$\ROTATE\?w_E\?\CUT\?w_D\?\GORIGHT\?w_B\?\WAIT\?w_A\?\END$
where $w_E\in\Sigma_E^*$, $w_D \in \Sigma_D^*$, $w_B \in \Sigma_B^*$, and $w_A \in \Sigma_A^*$.
\item 
$\ROTATE\?w_E\?\CUT\?w_D\?\FINISH\?w_A\?\END$
where $w_E\in\Sigma_E^*$, $w_D \in \Sigma_D^*$, and $w_A \in \Sigma_A^*$.
\item 
$\ROTATE\?w_E\?\CUT\?w_D\?\GUESS\?w_A\?\END$
where $w_E\in\Sigma_E^*$, $w_D \in \Sigma_D^*$, and $w_A \in \Sigma_A^*$.
\item 
$\ROTATE\?w_E\?\FINISHTWO\?w_A\?\END$
where $w_E\in\Sigma_E^*$ and $w_A \in \Sigma_A^*$.
\item 
$\BEGIN\?w_A\?\END$
where $w_A \in \Sigma_A^*$.
\item 
$\REWRITE\?w_A\?\END$
where $w_A \in \Sigma_A^*$.
\end{enumerate}
\end{lemma}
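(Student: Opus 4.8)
The plan is to derive the statement purely from the typing assignment, by a reachability analysis in the associated ``type graph''. Write a candidate well-typed string of type $\typeConst \to \typeT$ as $s_1 s_2 \cdots s_k$ with $s_i \in \widehat{\Sigma_A}$. By the definition of well-typedness we have $\tgt(s_{i+1}) = \src(s_i)$ for $1 \le i < k$, and the requirement on the string's type gives $\tgt(s_1) = \typeT$ and $\src(s_k) = \typeConst$. Hence the sequence of types $\typeT, \src(s_1), \src(s_2), \ldots, \src(s_{k-1}), \typeConst$ (where $\src(s_i) = \tgt(s_{i+1})$) traces a walk in the directed graph $G$ on node set $\Types$ that carries, for each symbol $s : \tau \to \tau'$, an edge from $\tau'$ to $\tau$ labelled $s$; conversely, the well-typed strings of type $\typeConst \to \typeT$ are exactly the label sequences of walks in $G$ from $\typeT$ to $\typeConst$.

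First I would read the shape of $G$ off the typing table. The relevant observations are: (i) no symbol has target $\typeConst$, so $\typeConst$ is a sink and the walk ends there via its unique incoming edge, which is $\END : \typeConst \to \typeA$ (hence $s_k = \END$); (ii) the only symbols with target $\typeT$ are $\BEGIN, \REWRITE : \typeA \to \typeT$ and $\ROTATE : \typeE \to \typeT$, so $s_1$ is one of these and the walk leaves $\typeT$ to $\typeA$ or to $\typeE$, and $\typeT$ is never revisited since nothing has source $\typeT$; (iii) each of $\typeA, \typeB, \typeD, \typeE$ has exactly one self-loop family, given by the alphabet copies $\Sigma_A, \Sigma_B, \Sigma_D, \Sigma_E$ respectively; (iv) $\typeC$ has \emph{no} self-loop, and its only outgoing edge family is $\Sigma_C : \typeC \to \typeB$, so a passage through $\typeC$ contributes exactly one letter of $\Sigma_C$.

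Then I would enumerate the walks. If $s_1 \in \{\BEGIN, \REWRITE\}$, from $\typeA$ the walk may loop through $\Sigma_A$ and then must step to $\typeConst$ via $\END$, yielding forms (6) and (7). If $s_1 = \ROTATE$, from $\typeE$ the walk loops through $\Sigma_E$ (producing $w_E$) and then either steps to $\typeA$ via $\FINISHTWO$, loops in $\Sigma_A$, and ends with $\END$ (form (5)), or steps to $\typeD$ via $\CUT$; from $\typeD$ it loops through $\Sigma_D$ (producing $w_D$) and takes one of the four edges out of $\typeD$: $\GUESS$ to $\typeA$ (form (4)), $\FINISH$ to $\typeA$ (form (3)), $\GORIGHT$ to $\typeB$, or $\MOVELEFT$ to $\typeC$. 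In the $\GORIGHT$ case the walk loops in $\Sigma_B$ ($w_B$), takes $\WAIT$ to $\typeA$, loops in $\Sigma_A$ ($w_A$), and ends with $\END$ (form (2)); in the $\MOVELEFT$ case the walk is at $\typeC$, which by (iv) forces exactly one letter $c \in \Sigma_C$ landing it at $\typeB$, after which it proceeds as in the $\GORIGHT$ case (form (1)). Since at each node all outgoing edges have been accounted for, this exhausts the walks.

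The only care needed -- the sole mildly error-prone point, not a genuine obstacle -- is the bookkeeping of which types carry self-loops (so that $w_E, w_D, w_B, w_A$ range over the respective $\Sigma^{*}$ while the $\Sigma_C$-block is a single letter) and checking that a type, once left by a non-loop edge, is never re-entered, which holds because the non-loop edges of $G$ form a DAG; both are immediate from the typing table.
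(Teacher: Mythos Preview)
Your proposal is correct and takes essentially the same approach as the paper, which merely states that the lemma holds ``by inspecting the typing'' and leaves the case analysis to the reader. Your type-graph formulation makes that inspection explicit and systematic; the key structural facts you identify---that the non-loop edges form a DAG with topological order $\typeT,\typeE,\typeD,\typeC,\typeB,\typeA,\typeConst$, and that $\typeC$ carries no self-loop---are exactly what the bare ``inspection'' amounts to.
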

Informally, the parts of strings in the lemma
above can be describes as follows:
for a string $w = w_1w_2$,
$w_E$ is a prefix of $w_2$ that has been 
rotated,    while $w_A$ is a suffix of $w_2$
that has to be rotated, and  $w_B$, $w_D$ 
are parts of $w_1$ and symbol $c$ is
the symbol that is currently processed.

\begin{definition}
For well-typed strings  $w : \typeConst \to \typeT$, the mapping $\mapRotate(w) \in {\mathbb{P}}(\Sigma_A^*)$ (where $\mathbb{P}$ denotes the power set) is defined according to the cases of 
Lemma~\ref{lem:cases-terms-sort-T6} as follows:
\begin{enumerate}
\item 
$\mapRotate(\ROTATE\?w_E\?\CUT\?w_D\?\MOVELEFT\?c\?w_B\?\WAIT\?w_A\?\END)
:=  \{\trans{w_D}{A}\trans{w_B}{A}\trans{w_E}{A}\trans{c}{A}\?w_A\} 
$
\item 
$\mapRotate(\ROTATE\?w_E\?\CUT\?w_D\?\GORIGHT\?w_B\?\WAIT\?w_A\?\END)
:= \{\transA{w_D}\transA{w_B}\transA{w_E}\?w_A\}$

\item 
$\mapRotate(\ROTATE\?w_E\?\CUT\?w_D\?\FINISH\?w_A\?\END)
:= \{\transA{w_D}\?w_A\transA{w_E}\}$

\item
$\mapRotate(\ROTATE\?w_E\?\CUT\?w_D\?\GUESS\?w_A\?\END)
:= \{\transA{w_D}\?w_A'\transA{w_E}\?w_A'' \mid \text{for all } w_A' w_A'' = w_A\}$
\item 
$\mapRotate(\ROTATE\?w_E\?\FINISHTWO\?w_A\?\END) :=\{\transA{w_E}\?w_A\}$
\item 
$\mapRotate(\BEGIN\?w_A\?\END) := \{w_A\}$
\item 
$\mapRotate(\REWRITE\?w_A\?\END) := \{w_A\}$
\end{enumerate}
\end{definition}

\begin{lemma}\label{lem:mapping-rotate}
Let $w,w':\typeConst \to \typeT$ with $w \torot w'$.
Then for any $u' \in \mapRotate(w')$ there exists $u \in \mapRotate(w)$ {s.t.}~$u \sim u'$. 
\end{lemma}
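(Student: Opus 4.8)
The plan is to prove Lemma~\ref{lem:mapping-rotate} by a direct case analysis on the rewrite step $w \torot w'$, organised by the shape of $w$ supplied by Lemma~\ref{lem:cases-terms-sort-T6}. First I would record that $\sym{rotate}(R)$ is a typed SRS all of whose rules are non-collapsing, so the step $w \torot w'$ preserves source and target type; hence $w'$ is again a well-typed string of type $\typeConst \to \typeT$ and $\mapRotate(w')$ is defined. Next I would observe that in each of the seven forms listed in Lemma~\ref{lem:cases-terms-sort-T6} the occurring marker symbols ($\BEGIN,\GUESS,\MOVELEFT,\GORIGHT,\FINISH,\FINISHTWO,\ROTATE,\CUT,\END$) pin down completely which rules of $\rotatesrs$ can fire and at which position: each rule of $\rotatesrs$ has a unique possible redex position in a well-typed string of type $\typeConst \to \typeT$. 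Thus the analysis reduces to one subcase per rule of $\rotatesrs$ (fourteen in all), together with the remark that no rule of $\rotatesrs$ applies to a string of the form $\REWRITE\?w_A\?\END$ (the only rule with $\REWRITE$ on its left-hand side is \eqref{rotate-13}, which is excluded), so that case is vacuous.

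In each of the fourteen subcases I would compute $\mapRotate(w')$ and compare it with $\mapRotate(w)$, exploiting that $\mapRotate$ reassembles a symbol into the same $\Sigma_A$-string whichever of the copies $\Sigma_B,\Sigma_C,\Sigma_D,\Sigma_E$ it currently lives in (the translations commute, e.g.\ $\transA{w_D\trans{d}{B}} = \transA{w_D}\,\transA{d}$). For the ``shuttling'' rules \eqref{rotate-5}, \eqref{rotate-6}, \eqref{rotate-7}, \eqref{rotate-8}, \eqref{rotate-10}, \eqref{rotate-10-2} a symbol is merely moved from one side of a marker to the adjacent one and re-coloured, while $\mapRotate$ concatenates the relevant blocks in a fixed order, so $\mapRotate(w) = \mapRotate(w')$; the bookkeeping rules \eqref{rotate-0}, \eqref{rotate-1}, \eqref{rotate-4}, \eqref{rotate-9}, \eqref{rotate-12} likewise leave the image unchanged. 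For the two ``guessing'' rules \eqref{rotate-2} and \eqref{rotate-3} the rule commits, partly resp.\ fully, to the guessed split point, so $\mapRotate(w')$ is exactly the set of those splittings of the word that are compatible with what has already been committed, hence $\mapRotate(w') \subseteq \mapRotate(w)$. In all of these thirteen subcases one simply takes $u := u'$.

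The one genuinely non-trivial subcase is rule \eqref{rotate-11}, $\CUT\?\FINISH \to \FINISHTWO$, applied to a string $w = \ROTATE\?w_E\?\CUT\?\FINISH\?w_A\?\END$ (case~(3) of Lemma~\ref{lem:cases-terms-sort-T6} with $w_D = \varepsilon$), producing $w' = \ROTATE\?w_E\?\FINISHTWO\?w_A\?\END$ (case~(5)). Here $\mapRotate(w) = \{\,w_A\?\transA{w_E}\,\}$ whereas $\mapRotate(w') = \{\,\transA{w_E}\?w_A\,\}$; these two strings differ in general, but $w_A\?\transA{w_E} \sim \transA{w_E}\?w_A$ holds directly by the definition of $\sim$ (take $w_1 = w_A$, $w_2 = \transA{w_E}$), so $u := w_A\?\transA{w_E} \in \mapRotate(w)$ satisfies $u \sim u'$. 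This is exactly the moment at which the rotation ``wraps around'': the prefix collected at the far left of the string is logically reconnected to its right end, which is invisible modulo $\sim$, and it is the reason the lemma can only be stated up to $\sim$ rather than with equality.

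The main obstacle is pure bookkeeping: for each of the fourteen rules one must get the side conditions right (which of $w_E$, $w_D$, $w_B$, $w_A$, or the processed symbol $c$ are empty), so that the two applications of the definition of $\mapRotate$ land in the correct --- and, as with \eqref{rotate-1}, \eqref{rotate-3}, \eqref{rotate-6}, \eqref{rotate-9}, \eqref{rotate-11}, possibly different --- items of Lemma~\ref{lem:cases-terms-sort-T6}, and one must keep track of the four alphabet copies and their translations throughout. There is no conceptual difficulty: the definitions of $\sym{rotate}(R)$ and of $\mapRotate$ are tailored precisely so that every step goes through, and apart from \eqref{rotate-11} every step is a plain equality of images.
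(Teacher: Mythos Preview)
Your proposal is correct and follows essentially the same approach as the paper's own proof: a case analysis on the shape of $w$ from Lemma~\ref{lem:cases-terms-sort-T6}, checking for each applicable rule of $\rotatesrs$ that $\mapRotate(w') \subseteq \mapRotate(w)$ (with equality in all but the guessing cases) except for rule~\eqref{rotate-11}, where the two singleton images differ only by a cyclic shift. Your grouping of the rules into ``shuttling'', ``bookkeeping'', and ``guessing'' is a presentational convenience; the paper instead organises the subcases by the seven shapes of $w$, but the per-rule verifications are identical.
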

\begin{proof}
We go through the cases for $w$ according to Lemma~\ref{lem:cases-terms-sort-T6}:
\begin{enumerate}
\item If $w = \ROTATE\?w_E\?\CUT\?w_D\?\MOVELEFT\?c\?w_B\?\WAIT\?w_A\?\END$,
then rules \eqref{rotate-5} and \eqref{rotate-6} may be applied.
If rule \eqref{rotate-5} is applied, then with $w_D = w_D' d$
we have $w' = \ROTATE\?w_E\?\CUT\?w_D'\?\MOVELEFT\?c\trans{d}{B}\?w_B\?\WAIT\?w_A\?\END$,
and since $\map(w) = \map(w')$, the claim holds.
If rule \eqref{rotate-6} is applied then $w_D = \varepsilon$,
$w' = \ROTATE\?w_E\?e\?\CUT\?\GORIGHT\?w_B\?\WAIT\?w_A\?\END$
and since $\map(w) = \map(w')$, the claim holds.

\item  If $w= \ROTATE\?w_E\?\CUT\?w_D\?\GORIGHT\?w_B\?\WAIT\?w_A\?\END$, then 
rules \eqref{rotate-7}, \eqref{rotate-8}, and \eqref{rotate-9} may be applied.
If rule \eqref{rotate-7} is applied then with $w_B = b w_B'$
we have $w' = \ROTATE\?w_E\?\CUT\?w_D\trans{b}{D}\GORIGHT\?w_B'\?\WAIT\?w_A\?\END$,
and since $\map(w) = \map(w')$, the claim holds.
If rule \eqref{rotate-8} is applied, then $w_B = \varepsilon$ and with $w_A = a\?w_A'$
we have $w' = \ROTATE\?w_E\?\CUT\?w_D\?\MOVELEFT\trans{a}{C}\?\WAIT\?w_A'\?\END$,
and since $\map(w) = \map(w')$, the claim holds.
If rule \eqref{rotate-9} is applied, then $w_A = w_B = \varepsilon$
and  $w' = \ROTATE\?w_E\?\CUT\?w_D\?\FINISH\END$ and since
$\map(w) = \map(w')$, the claim holds.

\item  If $w = \ROTATE\?w_E\?\CUT\?w_D\?\FINISH\?w_A\?\END$, then
rules \eqref{rotate-10} and \eqref{rotate-11}  may be applied.
If rule \eqref{rotate-10} is applied, then with $w_D = w_D' d$
we have
 $w' = \ROTATE\?w_E\?\CUT\?w_D'\?\FINISH\trans{d}{A}\?w_A\?\END$, and
since $\map(w) = \map(w')$, the claim holds.
If  rule \eqref{rotate-11} is applied, then $w_D = \varepsilon$,
 $w' = \ROTATE\?w_E\?\FINISHTWO\?w_A\?\END$, and
since $\map(w) = \{w_A \transA{w_E}\}$, $\map(w') = \{\transA{w_E} w_A\}$, 
and $w_A \transA{w_E} \sim \transA{w_E} w_A$, the claim holds.

\item 
If $w= \ROTATE\?w_E\?\CUT\?w_D\?\GUESS\?w_A\?\END$, then
rules \eqref{rotate-2}, \eqref{rotate-3}, and \eqref{rotate-4} may be applied.
If rule \eqref{rotate-2} is applied, then with $w_A = a v_A$
we have $w' = \ROTATE\?w_E\?\CUT\?w_D\trans{a}{D}\?\GUESS\?v_A\?\END$,
and since $\map(w) \supseteq \map(w') \not=\emptyset$, the claim holds.
If rule \eqref{rotate-3} is applied, then with $w_A = a\? v_A$
we have $w' =\ROTATE\?w_E\?\CUT\?w_D\?\MOVELEFT\trans{a}{C}\?\WAIT\?v_A\?\END$,
and since $\map(w) \supseteq \map(w') \not=\emptyset$, the claim holds.
If rule \eqref{rotate-4} is applied, then $w_A = \varepsilon$ and 
$w'= \ROTATE\?w_E\?\CUT\?w_D\?\FINISH\?\END$, and since $\map(w) = \map(w')$, the claim holds.
\item If $w=\ROTATE\?w_E\?\FINISHTWO\?w_A\?\END$, then
rules \eqref{rotate-10-2} and \eqref{rotate-12}  may be applied:
If rule \eqref{rotate-10-2} is applied, then with $w_E = w_E' e$
we have $w'=\ROTATE\?w_E'\?\FINISHTWO\trans{e}{A}\?w_A\?\END$,
and since $\map(w) = \map(w')$, the claim holds.
If rule \eqref{rotate-12} is applied then  $w_E = \varepsilon$,
$w' = \REWRITE\?w_A\?\END$, and since $\map(w) = \map(w')$, the claim holds.
\item If $w = \BEGIN\?w_A\?\END$, then
rules \eqref{rotate-0} or \eqref{rotate-1} may be applied:
If rule \eqref{rotate-0} is applied then $\map(w) = \{\varepsilon\} = \map(w')$ and thus the claim holds.
If rule \eqref{rotate-1} is applied, then with $w_A = a w_A'$ we have
$w' = \ROTATE\?\CUT\trans{a}{D}\?\GUESS\?w_A'$, and since $\map(w) = \map(w')$ the claim holds.
\item If $w= \REWRITE\?w_A\?\END$. then no rule is applicable.\qedhere
\end{enumerate}
\end{proof}

\begin{lemma}\label{lemm:rotate:terminating}
The SRS $\rotatesrs$ is terminating.
\end{lemma}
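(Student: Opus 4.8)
The plan is to prove termination of $\rotatesrs$ by \emph{iterated rule removal}: repeatedly exhibit a monotone measure that can be affected only finitely often by the rules that currently remain, delete the rule(s) that strictly decrease it, and continue with the smaller system. Equivalently, I assume an infinite $\torot$-derivation exists and successively pass to infinite tails that avoid more and more rules, until only rules that manifestly cannot sustain an infinite derivation are left. The work splits into a first round using counts of control symbols and a second round using positional measures.

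First round. The symbol $\BEGIN$ occurs on no right-hand side of $\rotatesrs$, so $\#\BEGIN$ is non-increasing and strictly decreases exactly at \eqref{rotate-0} and \eqref{rotate-1}; hence these rules occur only finitely often and can be deleted. In the remaining system $\GUESS$ appears on a right-hand side only in \eqref{rotate-2}, where it is preserved, so $\#\GUESS$ is non-increasing and strictly decreases exactly at \eqref{rotate-3} and \eqref{rotate-4}; delete these. The same pattern eliminates \eqref{rotate-12} via $\#\ROTATE$, then \eqref{rotate-11} via $\#\CUT$ (which among the surviving rules is preserved only by \eqref{rotate-6}), then \eqref{rotate-9} via $\#\WAIT$ (preserved only by \eqref{rotate-8}). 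What remains is the subsystem $\{\eqref{rotate-2},\eqref{rotate-5},\eqref{rotate-6},\eqref{rotate-7},\eqref{rotate-8},\eqref{rotate-10},\eqref{rotate-10-2}\}$, and crucially every one of these rules is length-preserving.

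Second round. Since the string length is now constant along the derivation and each rule rewrites inside a bounded window, for any symbol $\sigma$ the sum of the positions of all occurrences of $\sigma$ is a well-defined natural number, left unchanged by every rule not mentioning $\sigma$. Now $\GUESS$ occurs only in \eqref{rotate-2}, which moves it one place to the right, so this sum strictly increases at \eqref{rotate-2}, is bounded by $\#\GUESS$ times the length, and is unchanged by the other rules; hence \eqref{rotate-2} is used finitely often and can be deleted. In the same manner $\FINISH$ occurs only in \eqref{rotate-10} (moving it left), $\FINISHTWO$ only in \eqref{rotate-10-2} (moving it left), and $\WAIT$ only in \eqref{rotate-8} (moving it right), so \eqref{rotate-10}, \eqref{rotate-10-2}, and \eqref{rotate-8} are deleted in turn, leaving $\{\eqref{rotate-5},\eqref{rotate-6},\eqref{rotate-7}\}$. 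Here $\MOVELEFT$ is preserved by \eqref{rotate-5} and on no other right-hand side, so $\#\MOVELEFT$ strictly decreases at \eqref{rotate-6}; delete it. Finally $\{\eqref{rotate-5},\eqref{rotate-7}\}$ terminates, because \eqref{rotate-5} strictly decreases the (bounded-below) sum of positions of $\MOVELEFT$ while \eqref{rotate-7} strictly increases the (bounded-above) sum of positions of $\GORIGHT$, so an infinite derivation would use one of these two infinitely often, a contradiction. This establishes the lemma.

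The main obstacle is the interlocked behaviour of rules \eqref{rotate-5}--\eqref{rotate-8}: the cursor $\MOVELEFT$ walks left turning $\Sigma_D$-symbols into $\Sigma_B$-symbols while $\GORIGHT$ walks right turning them back, so no weight function or linear interpretation that treats the two alphabet copies asymmetrically can simultaneously orient \eqref{rotate-5} and \eqref{rotate-7}, which makes a single interpretation for all of $\rotatesrs$ awkward. The decomposition sidesteps this: once every rule that could regenerate a $\GUESS$, $\MOVELEFT$, $\GORIGHT$, $\WAIT$, $\FINISH$, or $\FINISHTWO$ marker has been stripped away, the surviving subsystem is length-preserving, its cursor symbols move strictly monotonically, and each deletion becomes a one-line monotone-algebra argument. (If one prefers, termination of this finite SRS can alternatively be delegated to an automated termination prover such as \aprove.)
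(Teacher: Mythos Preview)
Your proof is correct and takes a genuinely different route from the paper's. The paper dispatches the lemma with a \emph{single} polynomial interpretation $\sigma$ (all alphabet symbols $h\in\Sigma_A\cup\cdots\cup\Sigma_E$ get $\lambda x.4x{+}1$, the control symbols get tailored linear maps) and then verifies $\sigma(\ell)>\sigma(r)$ for each of the fourteen rules. Your argument instead peels off rules one by one: first by counting occurrences of control symbols that are never (re)created, reducing to the length-preserving core $\{\eqref{rotate-2},\eqref{rotate-5},\eqref{rotate-6},\eqref{rotate-7},\eqref{rotate-8},\eqref{rotate-10},\eqref{rotate-10-2}\}$, and then by tracking the monotone drift of the cursor symbols' positions inside a string of fixed length.

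Both approaches are sound; they trade off in the expected way. The paper's interpretation is short to state but opaque---one must check fourteen arithmetic inequalities, and it is not obvious how the coefficients were found. Your decomposition is longer but entirely elementary and exposes \emph{why} the system terminates: each control symbol is a one-shot resource, and once only the cursor rules remain, each cursor moves strictly in one direction. One minor remark: your closing comment that a single interpretation is ``awkward'' because \eqref{rotate-5} and \eqref{rotate-7} shuffle between $\Sigma_B$ and $\Sigma_D$ in opposite directions is slightly overstated---the paper sidesteps exactly this by interpreting all five alphabet copies identically, so the obstacle you name applies only to \emph{asymmetric} weightings, as you yourself qualify.
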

\begin{proof}
Let $\sigma$ be the following polynomial interpretation:
$$
\begin{array}{@{}l@{~}l@{~}l@{\qquad}l@{~}l@{~}l@{\qquad}l@{~}l@{~}l@{\qquad}l@{~}l@{~}l@{}}
\sigma(\REWRITE) &=& \lambda x.x
&\sigma(\GORIGHT) &=& \lambda x.2x
&\sigma(\FINISH) &=& \lambda x.2x +2
&\sigma(\FINISHTWO) &=& \lambda x.2x +2
\\
\sigma(\END) &=& \lambda x.x+1
&\sigma(\MOVELEFT) &=& \lambda x.2x+1
&\sigma(\WAIT) &=& \lambda x.3x
&\sigma(\ROTATE) &=& \lambda x.x+9\\
\sigma(\CUT) &=& \lambda x.2x
&\sigma(\GUESS) &=& \lambda x.6x+1
&\sigma(\BEGIN) &=&\multicolumn{4}{@{}l@{}}{\lambda x.12x+18}
\\
\sigma(h) &=& \multicolumn{9}{@{}l@{}}{\lambda x.4x +1 \text{ for all } h \in (\Sigma_A \cup \Sigma_B \cup \Sigma_C \cup \Sigma_D \cup \Sigma_E)}
\end{array}
$$
We verify that for every rule $(\ell \to r) \in \rotatesrs$ the inequation $\sigma(\ell) > \sigma(r)$ holds:
For rule \eqref{rotate-0}, we have  $\lambda x.12x+30 > \lambda x.x+1$,
for rule \eqref{rotate-1}, we have $\lambda x.48x+30 > \lambda x.48x+19$,
for rule \eqref{rotate-2}, we have $\lambda x.24x +7 > \lambda x.24x+5$,
for rule \eqref{rotate-3}, we have $\lambda x.24x+7 > \lambda x.24x+3$,
for rule \eqref{rotate-4}, we have $\lambda x.6x+7 > \lambda x.2x+4$,
for rule \eqref{rotate-5}, we have $\lambda x.32x+13 > \lambda x.32x+11$,
for rule \eqref{rotate-6}, we have $\lambda x.16x+6 > \lambda x.16x+1$,
for rule \eqref{rotate-7}, we have $\lambda x.8x +2 > \lambda x.8x+1$,
for rule \eqref{rotate-8}, we have $\lambda x.24x+6 > \lambda x.24x+3$,
for rule \eqref{rotate-9}, we have $\lambda x.6x+6 > \lambda x.2x+4$,
for rule \eqref{rotate-10}, we have $\lambda x.8x + 9 > \lambda  x. 8x +4$,
for rule \eqref{rotate-11}, we have $\lambda x.4x+4 > \lambda x.2x+2$, 
for rule \eqref{rotate-10-2}, we have $\lambda x.8x + 9 > \lambda  x. 8x +4$, and
for rule \eqref{rotate-12}, we have $\lambda x.2x+11 > \lambda x.x$.
\end{proof}

\begin{proposition}\label{prop:rotate-complete}
If $w : \typeConst \to \typeT$ admits an infinite $\torotate$-reduction, then there exists $u \in \mapRotate(w)$ {s.t.}~$[u]$ admits an infinite $\cycto_R$-reduction.
\end{proposition}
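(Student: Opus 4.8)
The plan is to unravel a given infinite $\torotate$-reduction $w = v_0 \torotate v_1 \torotate v_2 \torotate \cdots$ by transporting it through $\mapRotate$ into an infinite $\cycto_R$-reduction. First I would note that, as was verified above, $\sym{rotate}(R)$ is a typed SRS with no collapsing rules, so the type $\typeConst \to \typeT$ is preserved along the reduction; hence every $v_i$ is a well-typed string of type $\typeConst \to \typeT$, and both Lemma~\ref{lem:cases-terms-sort-T6} and the definition of $\mapRotate$ apply to each of them. Since every $\torotate$-step is either a $\torot$-step or an application of rule~\eqref{rotate-13}, and $\rotatesrs$ is terminating by Lemma~\ref{lemm:rotate:terminating}, the reduction must use rule~\eqref{rotate-13} at infinitely many positions $i_1 < i_2 < \cdots$.

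Next I would pull the reduction back one step at a time. For a $\rotatesrs$-step $v_i \torot v_{i+1}$, Lemma~\ref{lem:mapping-rotate} provides, for every $u' \in \mapRotate(v_{i+1})$, some $u \in \mapRotate(v_i)$ with $u \sim u'$. For a step $v_i \torotate v_{i+1}$ by rule~\eqref{rotate-13}, typing together with Lemma~\ref{lem:cases-terms-sort-T6} forces $v_i = \REWRITE\?w_A\?\END$ and $v_{i+1} = \BEGIN\?r\?w''\?\END$ with $w_A = \ell\?w''$ and $(\ell \to r) \in R$; then $\mapRotate(v_i) = \{\ell\?w''\}$, $\mapRotate(v_{i+1}) = \{r\?w''\}$, and $\ell\?w'' \prefixrewrite_R r\?w''$, hence $[\ell\?w''] \cycto_R [r\?w'']$ (Proposition~\ref{prop:properties-rta14}). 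So in both cases every element of $\mapRotate(v_{i+1})$ is matched with an element of $\mapRotate(v_i)$ related to it either by $\sim$ or by a single $\cycto_R$-step.

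Finally I would assemble these choices with König's lemma. Each $\mapRotate(v_i)$ is finite and nonempty, so the matching above defines a finitely branching forest whose depth-$i$ nodes are the elements of $\mapRotate(v_i)$ and in which every $u' \in \mapRotate(v_{i+1})$ points to a chosen predecessor in $\mapRotate(v_i)$; adjoining a virtual root above $\mapRotate(v_0)$ yields an infinite, finitely branching tree. An infinite branch $u_0, u_1, u_2, \ldots$ with $u_i \in \mapRotate(v_i)$ then gives a sequence $[u_0], [u_1], [u_2], \ldots$ whose consecutive classes are equal at $\rotatesrs$-steps and related by one $\cycto_R$-step at each position $i_1, i_2, \ldots$; collapsing the equalities produces an infinite $\cycto_R$-reduction starting from $[u_0]$, and $u_0 \in \mapRotate(v_0) = \mapRotate(w)$, which is exactly the claim.

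The step I expect to be the real obstacle is this last assembly: since $\mapRotate$ returns a \emph{set} of candidate strings rather than a single one, one cannot simply "run $\mapRotate$ backwards" along the reduction, and the backward choices made for different finite prefixes need not share a common starting element of $\mapRotate(w)$; König's lemma is what reconciles them. An equivalent route would be to first extract, for each $n$, a $\cycto_R$-reduction whose length equals the number of \eqref{rotate-13}-steps in $v_0 \torotate \cdots \torotate v_n$ and which starts from some element of the finite set $\mapRotate(w)$, then use the pigeonhole principle to fix one element used for infinitely many $n$, and apply König's lemma to the finitely branching relation $\cycto_R$. Everything else is routine bookkeeping: type preservation, the case analysis for rule~\eqref{rotate-13}, and the two already-established lemmas.
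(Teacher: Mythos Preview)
Your argument is correct, but it is heavier than what the paper does, and the extra weight comes precisely from the point you flagged as ``the real obstacle''. The paper avoids König's lemma entirely by observing that the set-valuedness of $\mapRotate$ never actually bites: the only case in Lemma~\ref{lem:cases-terms-sort-T6} where $\mapRotate$ is not a singleton is case~(4), the $\GUESS$-form, and this form can never be the source or target of a \eqref{rotate-13}-step. Concretely, the paper chops the infinite reduction at the \eqref{rotate-13}-steps, writing it as
\[
w = v_0 \;\torot^*\; v_1 \;\xrightarrow{\eqref{rotate-13}}\; v_1' \;\torot^*\; v_2 \;\xrightarrow{\eqref{rotate-13}}\; v_2' \;\torot^*\; \cdots,
\]
and then uses that each $v_i$ is of $\REWRITE$-form and each $v_i'$ of $\BEGIN$-form, so $\mapRotate(v_i) = \{u_i\}$ and $\mapRotate(v_i') = \{u_i'\}$ are singletons with $u_i \to_R u_i'$. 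Applying Lemma~\ref{lem:mapping-rotate} to the segment $v_i' \torot^* v_{i+1}$ now gives some predecessor of $u_{i+1}$ in $\mapRotate(v_i')$, but that set is $\{u_i'\}$, so the predecessor is forced to be $u_i'$ and one gets $u_i' \sim u_{i+1}$ with no choice involved. This directly yields the infinite chain $u_1 \to_R u_1' \sim u_2 \to_R u_2' \sim \cdots$, and one final application of Lemma~\ref{lem:mapping-rotate} to $v_0 \torot^* v_1$ produces the required $u \in \mapRotate(w)$ with $u \sim u_1$.

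So your König's-lemma assembly works, and would be the right move if $\mapRotate$ were genuinely multi-valued at the synchronization points; but the paper's route is more elementary because it recognizes that the branching collapses exactly where it matters.
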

\begin{proof}
Let $w \torotate w_1 \torotate w_2 \torotate \ldots$ be an infinite $\torotate$-reduction.
This sequence must have infinitely many steps using rules \eqref{rotate-13},
since the rewrite system $\rotatesrs$ is terminating 
(Lemma~\ref{lemm:rotate:terminating}),
i.e.
$$w=v_0 
\torot^* v_1 
\to_{\eqref{rotate-13}} v_1'  
   \torot^* v_2 \to_{\eqref{rotate-13}} v_2'  
        \torot^* \ldots$$ 
We consider all sub-sequences $v_i \to_{\eqref{rotate-13}} v_i' \torot^* v_{i+1}$ for $i = 1,2,\ldots$.
Typing of all strings, 
Lemma~\ref{lem:mapping-rotate} and the definition of the rules $\eqref{rotate-13}$ 
imply that the steps $v_i \to_{\eqref{rotate-13}} v_i'$  
must be of the form $\REWRITE\?\ell\?w_A\END \to \BEGIN\?r\?w_A\END$ 
where $(\ell \to r) \in R$. Since
$\mapRotate(v_i) = \{\ell\?w_A\}$ and $\mapRotate(v_i') = \{r\?w_A\}$,
there exist (unique) strings $u_i\in\mapRotate(v_i)$
and $u_i' \in \mapRotate(v_i')$ 
(namely $u_i = \ell\?w_A$ and $u_i'=r\?w_A$)
{s.t.}~$u_i \to_R u_i'$.
Lemma~\ref{lem:mapping-rotate} shows that for any sub-sequence $v_i' \torot^* v_{i+1}$
and any $u_{i+1} \in\mapRotate(v_{i+1})$ there exists $u_i'' \in \mapRotate(v_i')$ with
$u_i'' \sim u_{i+1}$. Since $\mapRotate(v_i') = \{r\?w_A\} = \{u_i'\}$, the equality $u_i'' = u_i'$ must hold.
Thus, we have  $u_i \to_R u_i' \sim u_{i+1}$ where $\mapRotate(v_i) = \{u_i\}$ and $\mapRotate(v_{i+1}) = \{u_{i+1}\}$.
Since this holds for all $i=1,2,\ldots$, and since $\mapRotate(v_j)$ is a singleton for all $v_j$,
we can construct the infinite sequence $u_1 \to_R.\sim u_2 \to_R.\sim u_3 \to_R.\sim \cdots$.
This implies $[u_1]\cycto_R [u_2] \cycto_R [u_3] \ldots$ and thus $\cycto_R$ is non-terminating.
\end{proof}

\begin{theorem}\label{theo:rotate-sound-and-complete}
The transformation $\sym{rotate}$ is sound and complete.
\end{theorem}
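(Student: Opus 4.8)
The plan is to split the statement into soundness and completeness and reuse the machinery already built up, rather than argue anything from scratch. Soundness is already isolated as Theorem~\ref{theo:rotate-sound} (itself resting on Proposition~\ref{prop:rotate-rotates} and Proposition~\ref{prop:N-shifts-are-sufficient-rotate}), so the substance lies entirely in completeness: I must show that if $\torotate$ is non-terminating then $\cycto_R$ is non-terminating.

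For completeness I would proceed in three reduction steps. First, since $\sym{rotate}(R)$ has been verified to be a well-typed SRS over $\Types=\{\typeConst,\typeA,\typeB,\typeC,\typeD,\typeE,\typeT\}$, Corollary~\ref{cor:typeintro-term} lets me assume that the given infinite $\torotate$-reduction consists only of well-typed strings. Second, Lemma~\ref{lemm:rotate-t6-suff} lets me replace the starting well-typed string by one of type $\typeConst \to \typeT$ (padding with a suitable prefix and suffix while preserving the infinite reduction). Third, for such a string $w : \typeConst \to \typeT$, Proposition~\ref{prop:rotate-complete} yields a $u \in \mapRotate(w)$ such that $[u]$ admits an infinite $\cycto_R$-reduction, which is exactly what is needed. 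So the theorem itself collapses to a one-line assembly: soundness by Theorem~\ref{theo:rotate-sound}, completeness by Corollary~\ref{cor:typeintro-term}, Lemma~\ref{lemm:rotate-t6-suff}, and Proposition~\ref{prop:rotate-complete}.

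The real work, and hence the main obstacle, sits one level down, inside Proposition~\ref{prop:rotate-complete} and the lemmas it depends on. One needs that $\rotatesrs$ (all rules except \eqref{rotate-13}) is terminating, via the polynomial interpretation of Lemma~\ref{lemm:rotate:terminating}, so that an infinite $\torotate$-reduction must use \eqref{rotate-13} infinitely often and therefore decomposes into blocks $v_i \to_{\eqref{rotate-13}} v_i' \torot^* v_{i+1}$. Then Lemma~\ref{lem:mapping-rotate} must be checked case by case over the seven shapes of Lemma~\ref{lem:cases-terms-sort-T6}: for each applicable rule one verifies that $\mapRotate$ is preserved up to $\sim$, except that in the $\GUESS$-phase ($w$ of the shape $\ROTATE\,w_E\,\CUT\,w_D\,\GUESS\,w_A\,\END$) $\mapRotate$ is genuinely set-valued and only shrinks to a nonempty subset. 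The delicate point is that $\mapRotate$ is a \emph{singleton} exactly at the $\BEGIN\dots\END$ and $\REWRITE\dots\END$ configurations bracketing an \eqref{rotate-13}-step, and that the block $v_i \to_{\eqref{rotate-13}} v_i'$ corresponds to a genuine $\to_R$-step $\ell\,w_A \to r\,w_A$; threading these singleton anchors through the non-deterministic intermediate phases gives a single coherent infinite chain $u_1 \to_R.\sim u_2 \to_R.\sim \cdots$, hence $[u_1] \cycto_R [u_2] \cycto_R \cdots$, establishing cycle non-termination.
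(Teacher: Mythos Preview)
Your proposal is correct and matches the paper's own proof essentially verbatim: soundness by Theorem~\ref{theo:rotate-sound}, completeness by type introduction (Corollary~\ref{cor:typeintro-term}), Lemma~\ref{lemm:rotate-t6-suff}, and Proposition~\ref{prop:rotate-complete}. Your additional paragraph unpacking the internals of Proposition~\ref{prop:rotate-complete} is accurate and shows you understand where the real work lies, but that argument belongs to the proof of the proposition, not the theorem itself.
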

\begin{proof}
Soundness is shown in Theorem~\ref{theo:rotate-sound}, completeness follows by 
type introduction (Corollary~\ref{cor:typeintro-term}), Lemma~\ref{lemm:rotate-t6-suff}, and Proposition~\ref{prop:rotate-complete}.
\end{proof}

\subsubsection{Relative Termination}
We provide a variant of the transformation $\sym{rotate}$ for relative termination:
\begin{definition}
Let $S \subseteq R$ be SRSs over an alphabet $\Sigma_A$.
The transformation $\sym{rotate}_{\mathit{rel}}$ is defined as:
$$
\begin{array}{lcl}
 \sym{rotate}_{\mathit{rel}}(S,R) &:=& 
 (\{\REWRITE \ell \to \BEGIN r \mid (\ell \to r) \in S\},\sym{rotate}(R))
\end{array}
$$
\end{definition}

We show soundness and completeness:
\begin{theorem}\label{theo:rotate-sound-and-complete-relative}
The transformation $\sym{rotate}_{\mathit{rel}}$ is sound and complete for relative termination.
\end{theorem}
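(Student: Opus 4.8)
The plan is to mirror the proofs of the two previous ``relative'' theorems (Theorems~\ref{thm:split-sound-and-complete-relative} and \ref{theo:shift-sound-and-complete-relative}), reusing as much of the non-relative machinery for $\sym{rotate}$ as possible. Write $\sym{rotate}_{\mathit{rel}}(S,R) = (S',R')$, so that $R' = \sym{rotate}(R)$ and $S' = \{\REWRITE\?\ell \to \BEGIN\?r \mid (\ell \to r) \in S\}$; note that $S'$ consists of exactly those instances of rule \eqref{rotate-13} that come from rules of $S$, so $S' \subseteq R'$. The key structural observation, already implicit in the construction and in Proposition~\ref{prop:rotate-rotates}, is that a single cycle step $[u] \cycto_S [v]$ (resp.\ $[u] \cycto_R [v]$) is simulated by a sequence $\BEGIN\?u'\?\END \torotate^* \BEGIN\?v'\?\END$ (with $u \sim u'$, $v \sim v'$) that uses exactly one application of rule \eqref{rotate-13}, and that application is an $S'$-step precisely when the rule used is from $S$.

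For soundness, I would assume $S$ is not cycle terminating relative to $R$, so by Proposition~\ref{prop:rel-prop} there is an infinite sequence $[w_i] \cycto_R^* [u_i] \cycto_S [w_{i+1}]$ for $i = 1,2,\ldots$. Applying Proposition~\ref{prop:rotate-rotates} and the construction of $\sym{rotate}(R)$ rule-by-rule, each $\cycto_R$-step lifts to a $\torotate^*$-sequence through $R'$, and each $\cycto_S$-step lifts to a $\torotate^*$-sequence in which exactly one step uses a rule $\REWRITE\?\ell \to \BEGIN\?r$ with $(\ell \to r) \in S$, i.e.\ exactly one $S'$-step and the rest $R'$-steps. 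Hence $\BEGIN\?w_i'\?\END \to_{R'}^* \BEGIN\?u_i'\?\END \to_{R'}^*.\to_{S'}.\to_{R'}^* \BEGIN\?w_{i+1}'\?\END$ for all $i$, which by Proposition~\ref{prop:rel-prop} witnesses that $S'$ is not string terminating relative to $R'$.

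For completeness, I would assume $S'$ is not string terminating relative to $R'$ and use type introduction for relative termination (Theorem~\ref{theo:typeintro-rel}, applicable since all of $\sym{rotate}(R)$ and hence $S'$ is well-typed as verified before Lemma~\ref{lemm:rotate-t6-suff}) together with Lemma~\ref{lemm:rotate-t6-suff} to obtain a string $w_0 : \typeConst \to \typeT$ admitting an infinite reduction $w_i \to_{S'} w_i' \to_{R'}^* w_{i+1}$ for all $i$. By the typing and Lemma~\ref{lem:cases-terms-sort-T6}, applicability of an $S'$-step forces $w_i = \REWRITE\?\ell\?w_A\?\END$ and $w_i' = \BEGIN\?r\?w_A\?\END$ with $(\ell \to r) \in S$; then $\mapRotate(w_i) = \{\ell\?w_A\}$ and $\mapRotate(w_i') = \{r\?w_A\}$ are singletons with $\ell\?w_A \to_S r\?w_A$, hence $[\ell\?w_A] \cycto_S [r\?w_A]$. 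For the $R'$-segments $w_i' \to_{R'}^* w_{i+1}$, Lemma~\ref{lem:mapping-rotate} together with the termination of $\rotatesrs$ (Lemma~\ref{lemm:rotate:terminating}) — exactly as in the proof of Proposition~\ref{prop:rotate-complete} — shows that the unique element of $\mapRotate(w_{i+1})$ is $\cycto_R^*$-reachable from the unique element of $\mapRotate(w_i')$, and that each intervening \eqref{rotate-13}-step yields a $\cycto_R$-step. Chaining gives $[\mapRotate(w_i)] \cycto_S.\cycto_R^* [\mapRotate(w_{i+1})]$ for all $i$, so $S$ is not cycle terminating relative to $R$; invoking Proposition~\ref{prop:rel-prop} finishes the argument.

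The main obstacle I anticipate is the bookkeeping in the completeness direction, specifically re-establishing the ``singleton $\mapRotate$ along the reduction'' argument from Proposition~\ref{prop:rotate-complete} in the presence of the interleaved $S'$-steps: one must check that after each $S'$-step the resulting string is of the form $\BEGIN\?r\?w_A\?\END$ with singleton image, and that the subsequent $R'$-sequence (which now includes $R'$-instances of \eqref{rotate-13} coming from $R \setminus S$ as well) keeps the images singletons except possibly through the \eqref{rotate-4}/\eqref{rotate-2}/\eqref{rotate-3} guessing rules where $\mapRotate$ genuinely becomes a set — but those are reached only from $\BEGIN\?w_A\?\END$ via \eqref{rotate-1}, and the uniqueness of the reachable target is already handled in Proposition~\ref{prop:rotate-complete}. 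Essentially no new ideas are needed beyond carefully splicing the relative-termination wrapper of Theorem~\ref{theo:shift-sound-and-complete-relative} onto the $\sym{rotate}$ correspondence lemmas.
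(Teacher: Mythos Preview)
Your proposal is correct and follows essentially the same approach as the paper: soundness via Proposition~\ref{prop:rotate-rotates} with the observation that the single \eqref{rotate-13}-step in each simulated cycle step lies in $S'$ exactly when the rule is from $S$, and completeness via type introduction (Theorem~\ref{theo:typeintro-rel}), Lemma~\ref{lemm:rotate-t6-suff}, and the $\mapRotate$ machinery of Lemma~\ref{lem:mapping-rotate}. The paper makes the block decomposition of each segment $w_i \to_{S'}.\to_{R'}^* w_{i+1}$ into alternating $\rotatesrs^*$- and \eqref{rotate-13}-phases slightly more explicit than you do, but your sketch identifies the same structure; note that termination of $\rotatesrs$ is not actually needed in the relative argument (each $R'$-segment is already finite), so your reference to Lemma~\ref{lemm:rotate:terminating} is harmless but superfluous.
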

\begin{proof}
Let $S \subseteq R$ and $\sym{rotate}_{\mathit{rel}}(S,R) = (S',R')$. 
Soundness of the transformation $\sym{rotate}$ implies (see proof of Theorem~\ref{theo:rotate-sound-and-complete}) that 
\begin{equation}
[u] \cycto_{R} [v] \text{ implies } \BEGIN u \END \to_{R'}^* \BEGIN w \END \label{rel-rotate-eq1}
\end{equation}
Since $\sym{rotate}(S) \subseteq R'$, this also shows:
\begin{equation}
[u] \cycto_{S} [v] \text{ implies } \BEGIN u \END \to_{R'}^*.\to_{S'} \BEGIN w \END. \label{rel-rotate-eq2}
\end{equation}
For proving soundness of $\sym{rotate}_{\mathit{rel}}$, let us assume that 
$S$ is not cycle terminating relative to $R$. 
Then there exists an infinite derivation 
$[w_1] \cycto_{R}^*.\cycto_S [w_2] \cycto_{R}^*.\cycto_S \cdots$.
Applying the implications \eqref{rel-rotate-eq1} and \eqref{rel-rotate-eq2} shows that
$\BEGIN w_i \END \to_{R'}^* . \to_{R'}^* . \to_{S'} \BEGIN w_{i+1} \END$ for $i=1,2,\ldots$.
This shows that $\BEGIN w_i \END \to_{R'}^*. \to_{S'} \BEGIN w_{i+1} \END$ for $i=1,2,\ldots$ and 
thus $S'$ is not string terminating relative to $R'$. 

For proving completeness of $\sym{rotate}_{\mathit{rel}}$, we again use the types introduced in Section~\ref{subsubsec:rotate-complete}.
Let us assume that $S'$ is not string terminating relative to $R'$.
By Theorem~\ref{theo:typeintro-rel}  and Lemma~\ref{lemm:rotate-t6-suff} there exists a well-typed string $w_0 : \typeConst \to \typeT$
{s.t.}~for all $i = 0,1,2\ldots$  the derivation $w_i \to_{S'}.\to_{R'}^* w_{i+1}$  exists.
Each such derivation can be written as 
$$\begin{array}{@{}l@{~}l@{}}
   w_i  &\to_{\rotatesrs}^* w_{0,0,i}
        \to_{\sym{\tiny\ref{rotate-13}}(S)} w_{0,1,i}\\
        &\to_{\rotatesrs}^* w_{1,0,i}
        \to_{\sym{\tiny\ref{rotate-13}}(R)} w_{1,1,i}\\
        &\ldots\\
        &\to_{\rotatesrs}^* w_{m_i,0,i}
        \to_{\sym{\tiny\ref{rotate-13}}(R)} w_{m_i,1,i}
        = w_{i+1}
  \end{array}$$
where $m_i \geq 0$. Starting with $w_{0,0,i}$ and appending the sequence 
$w_{m_i,1,i} \to_{\rotatesrs}^* w_{0,0,i+1}$
this can be formulated as follows:
There exists a derivation
$$\begin{array}{ll}
 w_{0,0,i}
        &\to_{\sym{\tiny\ref{rotate-13}}(S)} w_{0,1,i} \\
        &\to_{\rotatesrs}^* w_{1,0,i}
        \to_{\sym{\tiny\ref{rotate-13}}(R)} w_{1,1,i}
        \\
        &\ldots
        \\
        &\to_{\rotatesrs}^* w_{m_i,0,i} \to_{\sym{\tiny\ref{rotate-13}}(R)} w_{m_i,1,i}
        \to_{\rotatesrs}^* w_{0,0,i+1}
\end{array}
$$
for all $i=0,1,\ldots$ where $m_i \geq 0$. 
Typing of the strings and the cases in Lemma \ref{lem:cases-terms-sort-T6}  show
that $\mapRotate(w_{k,l,i})$ is a singleton $\{u_{k,l,i}\}$ for all $i$ and $k=0,\ldots,m_i$, $l=0,1$.
Due to applicability of rule $\eqref{rotate-13}$, we also have
$u_{0,0,i} \to_S u_{0,1,i}$ and $u_{k,0,i} \to_R u_{k,1,i}$ for $k = 1,\ldots,m_i$.
Lemma~\ref{lem:mapping-rotate} implies that  $u_{k,0,i} \sim u_{k-1,1,i}$ for $k=1,\ldots,m_i+1$.
Thus, we have
$$u_{0,0,i} \to_S u_{0,1,i} \sim u_{1,0,i} \to_R u_{1,1,i} \sim . \to_R \ldots \sim.\to_R u_{m_i,1,i} \sim u_{0,0,i+1}$$
which shows $[u_{0,0,i}] \cycto_S [u_{1,0,i}] \cycto_R^* [u_{0,0,i+1}]$.
Since this holds for all $i=0,1,\ldots$, we have shown that $S$ is not  cycle terminating relative to $R$.
\end{proof}

\section{Trace Decreasing Matrix Interpretations}\label{sec:matrix-interpretations}
\label{secmatr}
In this section we present a variant of matrix interpretations suitable for proving cycle termination.
The basics of matrix interpretations for string and term rewriting were presented in  \cite{HW06,HW06-rta,EWZ08,KW08}.
The special case of tropical and arctic matrix interpretations for cycle rewriting was presented in 
\cite{ZKB14}, in the setting of {\em type graphs}.  
Natural matrix interpretations for cycle rewriting
were presented in \cite{sabel-zantema:15}, inspired by an approach proposed by Johannes Waldmann. Here we 
present a self-contained uniform framework covering all these cases
as we show by subsequently instantiating the framework for three semi-rings.
At the end of the section we discuss and illustrate the approach by examples and
also discuss some limitations.

\subsection{A Uniform Framework for Matrix Interpretations}
Fix a commutative {\em semi-ring}, that is, 
a set $X$, a zero element $0 \in X$, a unit element $1 \in X$,
and two operations
$+,\times$ that are commutative and associative, and satisfying 
\[ x+0 = x, \; x\times 1 = x, \; x \times 0 = 0, \; x \times (y + z) = (x \times y) + (x \times z) \]
for all $x,y,z \in X$. Further we assume a well-founded order $>$ on $X$.

Next we fix a dimension $d > 0$ and choose $M$ to be a set of $d \times d$ matrices over $X$.
Multiplication of matrices is defined as usual, now using $+$ and $\times$ from the
semi-ring as basic operations on elements:

\[
(AB)_{ij} = \sum_{k=1}^d A_{ik} \times B_{kj} 
\]
for all $i,j = 1,\ldots,d$, where $\sum$ extends the operation $+$. 
Note that in the following we only require matrix multiplication and no addition.
Thus we 
assume that $M$ is closed under
multiplication and contains the identity matrix.

For a $d \times d$ matrix $A$ over $X$, its {\em trace} $\tr(A)$ is defined to be
$\sum_{i=1}^d A_{ii}$: the sum of its diagonal. 

On $M$ we assume relations $>$, $\geq$ satisfying 
\begin{align}
&A > B    \implies (AC > BC \wedge CA > CB)\label{assume1}\\
&A \geq B \implies (AC \geq BC \wedge CA \geq CB)\label{assume2}\\
&A > B    \implies \tr(A) > \tr(B)\label{assume3}\\
&A \geq B \implies \tr(A) \geq \tr(B)\label{assume4}
\end{align}
for all $A,B,C \in M$. Note the overloading of $>$ and $\geq$: when applied on matrices it refers to the
relations $>$, $\geq$ we assume on $M$, when applied to elements of $X$ it refers to the well-founded
order on $X$, where $x \geq y \desda x > y \vee x=y$. 

Note that on the one hand these requirements imply that
$>$ is irreflexive by $A > B \implies \tr(A) > \tr(B)$, and on the other hand the 
requirements imply that $M$ does not consist of all $d \times d$ matrices over $X$: for $A > B$ and $C$ being the matrix $0$ having
$0$ on all positions, we have $AC = 0 = BC$, violating $A > B \implies AC > BC$.

A {\em matrix interpretation} $\li \cdot \ri $  for a signature $\Sigma$ is defined to be a mapping from 
$\Sigma$ to $M$.
It is extended to $\li \cdot \ri  : \Sigma^* \to M$ by defining inductively
$ \li \varepsilon \ri  = I$ and  $\li u\?a \ri  = \li u \ri \times\li a \ri$
for all $u \in \Sigma^*$, $a \in \Sigma$, where $I$ is the identity matrix.

\begin{theorem}
\label{thmmatr}
Let $S \subseteq R$ be SRSs over $\Sigma$ and 
let $\li \cdot \ri  : \Sigma \to M$ satisfy the above properties and 
\begin{itemize}
\item $\li \ell \ri  > \li r \ri $ for all $(\ell \to r) \in S$, and
\item $\li \ell \ri  \geq \li r \ri $ for all $(\ell \to r) \in R \setminus S$
\end{itemize}
Then $\rto{S}$ is terminating relative to $\rto{R}$.
\end{theorem}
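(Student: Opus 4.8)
The plan is to reduce relative cycle termination of $(S,R)$ to relative termination of an abstract relation on the semi-ring, using the trace to collapse the cyclic permutation freedom. First I would observe that the matrix interpretation is compatible with the equivalence $\sim$ up to conjugation: if $u \sim v$, say $u = w_1 w_2$ and $v = w_2 w_1$, then $\li u \ri = \li w_1 \ri \li w_2 \ri$ and $\li v \ri = \li w_2 \ri \li w_1 \ri$, so $\li u \ri$ and $\li v \ri$ are ``cyclic conjugates'' of each other. The key algebraic fact is that the trace is invariant under such swaps: $\tr(AB) = \tr(BA)$, which holds in any commutative semi-ring by the same index manipulation $\sum_i \sum_k A_{ik}\times B_{ki} = \sum_k \sum_i B_{ki} \times A_{ik}$. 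Hence $\tr(\li \cdot \ri)$ is a well-defined map from $\Sigma^*/{\sim}$ to $X$.

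Next I would show that this induced trace map is (weakly/strictly) decreasing along cycle rewrite steps. Suppose $[u] \cycto_R [v]$; then by definition $u \sim \ell w$ and $v \sim r w$ for some rule $(\ell \to r) \in R$ and some $w \in \Sigma^*$. Using trace-invariance under $\sim$ we get $\tr(\li u \ri) = \tr(\li \ell w \ri) = \tr(\li \ell \ri \li w \ri)$ and similarly $\tr(\li v \ri) = \tr(\li r \ri \li w \ri)$. Now if $(\ell \to r) \in R \setminus S$ then $\li \ell \ri \geq \li r \ri$, so by \eqref{assume2} $\li \ell \ri \li w \ri \geq \li r \ri \li w \ri$, and then by \eqref{assume4} $\tr(\li \ell \ri \li w \ri) \geq \tr(\li r \ri \li w \ri)$; hence $\tr(\li u \ri) \geq \tr(\li v \ri)$. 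If instead $(\ell \to r) \in S$, the same chain with \eqref{assume1} and \eqref{assume3} gives the strict inequality $\tr(\li u \ri) > \tr(\li v \ri)$. So every $\rto{R}$-step weakly decreases the trace value in $X$ and every $\rto{S}$-step strictly decreases it.

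Finally I would conclude by well-foundedness. Assume for contradiction that $\rto{S}$ is not terminating relative to $\rto{R}$; then there is an infinite sequence $[w_1] \cycto_R [w_2] \cycto_R \cdots$ containing infinitely many $\cycto_S$-steps. Applying the trace map yields an infinite sequence $\tr(\li w_1 \ri), \tr(\li w_2 \ri), \ldots$ in $X$ that is weakly decreasing with respect to $>$ at every step and strictly decreasing at infinitely many steps; this contradicts the well-foundedness of $>$ on $X$ (one gets an infinite strictly descending chain by restricting to the indices of the $S$-steps, noting that between consecutive such indices the value only stays the same or goes down). Therefore $\rto{S}$ is terminating relative to $\rto{R}$.

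The routine parts are the index computation for $\tr(AB)=\tr(BA)$ and the chaining of the four assumptions \eqref{assume1}--\eqref{assume4}; the only point requiring slight care is making sure that a weakly-decreasing chain in $X$ with infinitely many strict descents is genuinely impossible, which is immediate from well-foundedness of $>$ but should be stated explicitly. I do not anticipate a serious obstacle here, since the framework was set up precisely so that these implications go through; the single conceptual ingredient that makes it all work is trace-invariance under cyclic conjugation, and that holds in any commutative semi-ring.
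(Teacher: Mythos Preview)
Your proposal is correct and follows essentially the same approach as the paper: trace invariance under $\sim$ via $\tr(AB)=\tr(BA)$, monotonicity of the interpretation along rewrite steps using assumptions \eqref{assume1}--\eqref{assume4}, and contradiction with well-foundedness of $>$ on $X$. The only cosmetic difference is that you work directly from the cycle-rewrite definition $u \sim \ell w$, $v \sim r w$ (needing only one-sided multiplication compatibility), whereas the paper first treats the underlying string step $x\ell y \to x r y$ with two-sided context and then composes with $\sim$; both routes arrive at the same trace inequality.
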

\begin{proof}
If $u \sim v$ then $u = u_1 u_2$ and $v = u_2 u_1$ for some $u_1,u_2$. Write $A = \li u_1 \ri$ and
$B = \li u_2 \ri$. Then 
\[ \tr( \li u \ri) = \tr(AB) =  \sum_{i=1}^d \sum_{k=1}^d A_{i,k} \times B_{k,i} = \tr(BA) = \tr( \li v \ri). \]
If $u \to_{R  \setminus S} v$ then there is a rule $(\ell \to r) \in R \setminus S$ and $x,y \in \Sigma^*$ such that 
$u = x \ell y$ and $v = x r y$, so 
\[\li u \ri = \li x \ri \li \ell \ri \li y \ri \geq \li x \ri \li r \ri \li y \ri = \li v \ri\]
using $A \geq B \implies (AC \geq BC \wedge CA \geq CB)$, hence $\tr(\li u \ri) \geq \tr(\li v \ri)$.

If $u \to_{S} v$ then there is a rule $(\ell \to r) \in S$ and $x,y \in \Sigma^*$ 
such that $u = x \ell y$ and $v = x r y$, so 
\[\li u \ri = \li x \ri \li \ell \ri \li y \ri > \li x \ri \li r \ri \li y \ri = \li v \ri\]
using $A > B \implies (AC > BC \wedge CA > CB)$, hence $\tr(\li u \ri) > \tr(\li v \ri)$.

Assume $\rto{S}$ is not terminating relative to $\rto{R}$,
then there are $u_i,v_i \in \Sigma^*$ for $i \in \nat$ such that
\[ u_1 \to_{R} v_1 \sim  u_2 \to_{R} v_2 \sim  u_3 \to_{R} v_3 \sim  \cdots ,\]
containing infinitely many steps $u_i \to_S v_i$. 
By the above observations we have $\tr(v_i) = \tr(u_{i+1})$ for all $i$,
$\tr(\li u_i \ri) > \tr(\li v_i\ri)$ for infinitely many $i$, and 
$\tr(\li u_i \ri) \geq \tr(\li v_i \ri)$ for all other $i$, yielding an infinite strictly decreasing sequence in $X$,
contradicting well-foundedness of the order $>$ on $X$.
\end{proof}

A typical way to apply Theorem \ref{thmmatr} to prove cycle termination of any SRS $R$ is as
follows: choose an instance of a semi-ring and interpretations $\li a\ri \in M$ for every $a \in
\Sigma$, and for all rules $\ell \to r$ we either have $\li \ell \ri  > \li r \ri$
or $\li \ell \ri  \geq \li r \ri $. If $\li \ell \ri  > \li r \ri$ holds for all rules we are done, 
otherwise the remaining proof obligation is to prove cycle termination of the rules for
which $\li \ell \ri  \geq \li r \ri $.

Although the proof of Theorem \ref{thmmatr} is not very hard, it is quite subtle: while in the 
final argument $\tr(\li \cdot \ri)$ is applied on the strings, it is essential to require  
$\li \ell \ri  > \li r \ri $ and not the
weaker requirement  $\tr(\li \ell) \ri  > \tr(\li r \ri)$. Surprisingly, the proof does not
need further requirements on the relation $>$ and $\geq$ like transitivity or ${>} \subseteq
{\geq}$.

In order to prove relative termination, we give the following extension of Theorem \ref{thmmatr}.

\begin{theorem}
\label{thmmatrrel}
Let $S' \subseteq S$, $R' \subseteq R$, $S' \subseteq R'$, and $S \subseteq R$ be SRSs over $\Sigma$ and 
let $\li \cdot \ri  : \Sigma \to M$ satisfy the above properties and 
\begin{itemize}
\item $\rto{S'}$ is terminating relative to $\rto{R'}$, and
\item $\li \ell \ri  \geq \li r \ri $ for all $(\ell \to r) \in ((R' \setminus S) \cup S')$
\item $\li \ell \ri  > \li r \ri $ for all $(\ell \to r) \in (R \setminus ((R' \setminus S) \cup S'))$
\end{itemize}
Then $\rto{S}$ is terminating relative to $\rto{R}$.
\end{theorem}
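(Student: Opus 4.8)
The plan is to mimic the proof of Theorem~\ref{thmmatr}, but this time splitting the rules of $R$ into three groups according to how they behave under $\li\cdot\ri$: those in $(R'\setminus S)\cup S'$, for which we only know $\li\ell\ri\geq\li r\ri$, and those in $R\setminus((R'\setminus S)\cup S')$, for which $\li\ell\ri>\li r\ri$. As in Theorem~\ref{thmmatr}, the relation $\sim$ preserves the trace (the argument $\tr(AB)=\tr(BA)$ is unchanged), the monotonicity assumptions \eqref{assume1} and \eqref{assume2} give $\li u\ri>\li v\ri$ (resp.\ $\li u\ri\geq\li v\ri$) whenever $u$ rewrites to $v$ by a strict (resp.\ weak) rule, and \eqref{assume3}, \eqref{assume4} then transfer this to the traces.

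**The main argument.** Suppose, for contradiction, that $\rto{S}$ is not terminating relative to $\rto{R}$. By Proposition~\ref{prop:rel-prop} (instantiated with $\to_1:=\rto{S}$, $\to_2:=\rto{R}$) there is an infinite sequence
$[w_1]\rto{R}[w_2]\rto{R}\cdots$
containing infinitely many $\rto{S}$-steps, which we may realise at the string level as $u_i\to_R v_i\sim u_{i+1}$. Each $\rto{R}$-step uses a rule either from $(R'\setminus S)\cup S'$ or from the complementary set; call the latter steps \emph{heavy}. Two cases arise. If the sequence contains infinitely many heavy steps, then since each $\rto{R}$-step gives $\tr(\li u_i\ri)\geq\tr(\li v_i\ri)$, each heavy step gives strict inequality, and $\sim$ gives $\tr(\li v_i\ri)=\tr(\li u_{i+1}\ri)$, we obtain an infinite $>$-decreasing sequence in $X$, contradicting well-foundedness. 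Otherwise the sequence has only finitely many heavy steps, so from some index $N$ on every step uses a rule from $(R'\setminus S)\cup S'\subseteq R'$; moreover among these the strict steps are exactly those using rules in $S'$ (since $S'\subseteq S$ and $R'\setminus S$ is weak), and there are infinitely many of them because the original sequence had infinitely many $S$-steps, all of which from index $N$ on must lie in $S'$. This yields an infinite $\rto{R'}$-reduction with infinitely many $\rto{S'}$-steps, contradicting the hypothesis that $\rto{S'}$ is terminating relative to $\rto{R'}$.

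**The delicate point.** The step that needs care is the bookkeeping in the second case: one must check that every $\rto{S}$-step occurring after index $N$ is in fact an $\rto{S'}$-step. This is where the inclusion pattern $S'\subseteq S$, $S'\subseteq R'$, $R'\subseteq R$, $S\subseteq R$ is used: a step after $N$ uses a rule $\ell\to r$ with $\li\ell\ri\geq\li r\ri$, i.e.\ a rule in $(R'\setminus S)\cup S'$; if additionally it is an $S$-step, then it is not in $R'\setminus S$, hence it is in $S'$. Conversely any infinite suffix with infinitely many $S$-steps therefore has infinitely many $S'$-steps, and since $(R'\setminus S)\cup S'\subseteq R'$ the whole suffix is an $\rto{R'}$-sequence, so Proposition~\ref{prop:rel-prop} applies in reverse to contradict relative termination of $\rto{S'}$ over $\rto{R'}$. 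No properties of $>$, $\geq$ beyond \eqref{assume1}--\eqref{assume4} and well-foundedness of $>$ on $X$ are needed; in particular transitivity is again not required, exactly as in Theorem~\ref{thmmatr}.
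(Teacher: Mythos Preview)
Your proof is correct and follows essentially the same approach as the paper's: both arguments use the trace to show that only finitely many ``heavy'' steps (those from $R\setminus((R'\setminus S)\cup S')$) can occur, and then observe that in the remaining tail every $S$-step must actually be an $S'$-step, contradicting the assumed relative termination of $\rto{S'}$ over $\rto{R'}$. The paper phrases this directly (take an infinite $\rto{R}$-reduction and show it has only finitely many $\rto{S}$-steps) rather than by contradiction with a case split, but the content is identical; your ``delicate point'' paragraph spells out precisely the bookkeeping $((R'\setminus S)\cup S')\cap S = S'$ that the paper handles in one line.
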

\begin{proof}
Assume $[u_1] \rto{R} [u_2] \rto{R} [u_3] \rto{R} \cdots$; we have to prove that it contains 
finitely many $\rto{S}$-steps. As in the proof of Theorem \ref{thmmatr} we have $\tr(\li u_i \ri) >
\tr(\li u_{i+1} \ri)$ for applications of rules $(\ell \to r) \in (R \setminus ((R' \setminus S) \cup S'))$
and $\tr(\li u_i \ri) \geq \tr(\li u_{i+1} \ri)$ for  applications of remaining rules.
Since $>$ is well-founded on $X$, there are only finitely many steps of the former type, so after a finite
initial part the infinite reduction only consists of $((R' \setminus S) \cup S')$-steps. But then applying that 
$\rto{S'}$ is terminating relative to $\rto{R'}$, these remaining $((R' \setminus S) \cup S')$-steps only contain finitely
many $S'$-steps. All other steps are $(R' \setminus S)$-steps and thus do not contain $S$-steps.
Hence in total there are only finitely many $\rto{S}$-steps.
\end{proof}

Indeed Theorem \ref{thmmatr} is an instance of Theorem \ref{thmmatrrel} if $S' = \emptyset$ and $R' = R \setminus S$.
A typical way to apply Theorem \ref{thmmatrrel} to prove cycle termination of any SRS $S$ relative to $R$
is very similar to the typical way to apply Theorem \ref{thmmatr}: remove the rules for which '$>$' is
obtained, both from $S$ and $R$.

In order to apply Theorem \ref{thmmatr} or Theorem \ref{thmmatrrel} we need an instance of a semi-ring $X$, a set $M$ of matrices
over $X$ and relations $>$, $\geq$ on $M$ such that all assumed properties holds. We give three such
instances: the tropical, arctic and natural matrix interpretations, all depending on dimension
$d$. 
For all these three instances the search for applying Theorem \ref{thmmatr} has been implemented in our 
tool {\tt torpacyc} by transforming the requirements to SMT format and calling the external SMT 
solver {\tt Yices} \cite{D14,yices}.  The same has been done in our tool {\tt tdmi}, also covering
Theorem \ref{thmmatrrel}.

A nice point is that we do not need to try separately for
which rules we require $\li \ell \ri  \geq \li r \ri $ and for which rules $\li \ell \ri  >
\li r \ri $, but we just specify in the SMT formula that the latter occurs at least once and the
former for all other rules. 

\subsection{Natural  Matrix Interpretations}

In natural  matrix interpretations we have $X = \nat$, and $0,1,+,\times, >$ have their usual meaning.
We define $M$ to consist of all $d\times d$ matrices $A$ satisfying $A_{11} > 0$. 
On $M$ we define the relations $>$ and $\geq$ by
\[ A > B \desda  A_{11} > B_{11} \wedge \forall i,j : A_{ij} \geq B_{ij},
\quad\qquad A \geq B \desda   \forall i,j : A_{ij} \geq B_{ij}.\]
For $M$ with these relations the required properties 
\eqref{assume1}, \eqref{assume2}, \eqref{assume3}, and
\eqref{assume4}
are all easily checked. Hence this yields a way to prove (relative) cycle termination by Theorem
\ref{thmmatr}.
As an example we consider the single rule $a\?a \to a\?b\?a$ and choose
\[ 
\arraycolsep=1.4pt 
\li a \ri = \left( \begin{array}{cc} 1 & 1 \\ 1 & 0 \end{array} \right), \;
 \li b \ri = \left( \begin{array}{cc} 1 & 0 \\ 0 & 0 \end{array} \right), \]
yielding
\[
\arraycolsep=1.4pt 
\li a a \ri = \left( \begin{array}{cc} 1 & 1 \\ 1 & 0 \end{array} \right)
\left( \begin{array}{cc} 1 & 1 \\ 1 & 0 \end{array} \right) =
\left( \begin{array}{cc} 2 & 1 \\ 1 & 1 \end{array} \right) >
\left( \begin{array}{cc} 1 & 1 \\ 1 & 1 \end{array} \right) =
\left( \begin{array}{cc} 1 & 1 \\ 1 & 0 \end{array} \right) 
\left( \begin{array}{cc} 1 & 0 \\ 0 & 0 \end{array} \right) 
\left( \begin{array}{cc} 1 & 1 \\ 1 & 0 \end{array} \right) =
\li a b a \ri,\]
proving cycle termination by Theorem \ref{thmmatr}.

The original versions of matrix interpretations in \cite{HW06,EWZ08} 
are not suitable
for proving cycle 
termination since they succeed in proving termination of $a\?b \to b\?a$ for which cycle
termination does not hold. Even more, the same holds for the original killer example for the method 
of matrix interpretations was $a\?a \to b\?c,\; b\?b \to a\?c, \; c\?c \to a\?b$, for which cycle
termination does not hold due to  
$[c\?c\?a\?a] \cycto [a\?b\?a\?a] \cycto [a\?b\?b\?c] \cycto [a\?a\?c\?c]$. 

Main differences are our conditions \eqref{assume3} and \eqref{assume4} on the trace of the matrices
and that in our setting the interpretation of symbols is multiplication by a matrix, 
while in \cite{HW06,EWZ08} it combines such a matrix multiplication by adding a vector.

\subsection{Tropical Matrix Interpretations}

In tropical matrix interpretations we choose the semi-ring $X = \nat \cup \{\infty\}$, with
$\min$ being the semi-ring addition and
the normal addition being the semi-ring multiplication, both extended to $X$ by defining
\[ \min(\infty,x) = \min(x,\infty) = x \; \mbox{ and } \; \infty + x = x + \infty = \infty
\]for all $x \in \nat \cup \{ \infty\}$. Now $\infty$ acts as the semi-ring zero and 0 acts as
the semi-ring unit; it is easily checked that all semi-ring requirements hold. This semi-ring 
is called the {\em tropical semi-ring} after its study by the Brazilian mathematician Imre 
Simon \cite{S88}. Over this semi-ring multiplication of matrices 
becomes
\[
(AB)_{i,j} = \min ( \{ A_{i,k} + B_{k,j} \mid k =
1,\ldots,d \}),\]
so being quite different from the usual matrix operations.
On this semi-ring we define the well-founded order $>$ to be the extension on $>$ on
$\nat$ defined by 
\[ x > y \desda (x,y \in \nat \wedge x > y) \vee (x = \infty \wedge y \in \nat).\] 
So in this semi-ring the zero element is not the smallest but the largest element.

We define $M$ to consist of all $d\times d$ matrices $A$ satisfying $A_{11} \neq \infty$. 
On $M$ we define the relation $>$ by
\[ A > B \desda  \forall i,j : (A_{ij} > B_{ij} \vee A_{ij} = B_{ij} = \infty),\]
and the relation $\geq$ by
\[ A \geq B \desda  \forall i,j : (A_{ij} > B_{ij} \vee A_{ij} = B_{ij}).\]
For $M$ with these relations the required properties 
\eqref{assume1}, \eqref{assume2}, \eqref{assume3}, and
\eqref{assume4}
are all easily checked; note that for every $A \in M$ we have $\tr(A) = \min_i A_{ii} \neq
\infty$ since $A_{11} \neq \infty$. For these requirements it is essential that we defined 
$A > B$ by $A_{ij} > B_{ij}$ on all positions and not only at 1,1, since from $a > b$
we cannot conclude $\min(a,c) > \min(b,c)$, but from $a > b \wedge c > d$  
we can conclude $\min(a,c) > \min(b,d)$.

As an example we again consider the single rule $a\?a \to a\?b\?a$ and choose
\[ 
\arraycolsep=1.4pt 
\li a \ri = \left( \begin{array}{cc} 1 & \infty \\ 0 & 1 \end{array} \right), \;
 \li b \ri = \left( \begin{array}{cc} 0 & 0 \\ 1 & 1 \end{array} \right), \]
by using the tropical matrix multiplication
yielding
\[
\arraycolsep=1.4pt 
\li a a \ri = 
\left( \begin{array}{cc} 1 & \infty \\ 0 & 1 \end{array} \right) 
\left( \begin{array}{cc} 1 & \infty \\ 0 & 1 \end{array} \right) =
\left( \begin{array}{cc} 2 & \infty \\ 1 & 2 \end{array} \right) >
\left( \begin{array}{cc} 1 & 2 \\ 0 & 1 \end{array} \right) =
\left( \begin{array}{cc} 1 & \infty \\ 0 & 1 \end{array} \right) 
\left( \begin{array}{cc} 0 & 0 \\ 1 & 1 \end{array} \right)
\left( \begin{array}{cc} 1 & \infty \\ 0 & 1 \end{array} \right) =
\li a b a \ri,\]
proving cycle termination by Theorem \ref{thmmatr}.

\subsection{Arctic Matrix Interpretations}

In arctic matrix interpretations we choose the semi-ring $X = \nat \cup \{-\infty\}$, with
$\max$ being the semi-ring addition and
the normal addition being the semi-ring multiplication, both extended to $X$ by defining
\[ \max(-\infty,x) = \max(x,-\infty) = x \; \mbox{ and } \; -\infty + x = x + -\infty = -\infty
\]
for all $x \in \nat \cup \{- \infty\}$. Now $-\infty$ acts as the semi-ring zero and 0 acts as
the semi-ring unit; it is easily checked that all semi-ring requirements hold. This semi-ring 
is called the {\em arctic semi-ring} as it is a kind of opposite to the tropical semi-ring.
Over this semi-ring 
multiplication of matrices becomes
\[
(AB)_{i,j} = \max ( \{ A_{i,k} + B_{k,j} \mid k =
1,\ldots,d \}),\]
On this semi-ring we define the well-founded order $>$ to be the extension on $>$ on
$\nat$:
\[ x > y \desda (x,y \in \nat \wedge x > y) \vee (x \in \nat \wedge y = -\infty).\] 
So in this semi-ring the zero element is the smallest element, just like in the natural semi-ring.

We define $M$ to consist of all $d\times d$ matrices $A$ satisfying $A_{11} \neq -\infty$. 
On $M$ we define the relation $>$ by
\[ A > B \desda  \forall i,j : (A_{ij} > B_{ij} \vee A_{ij} = B_{ij} = -\infty),\]
and the relation $\geq$ by
\[ A \geq B \desda  \forall i,j : (A_{ij} > B_{ij} \vee A_{ij} = B_{ij}).\]
For $M$ with these relations all required properties hold, again providing a method for proving
(relative) cycle termination by Theorem \ref{thmmatr}.
Cycle termination of our example $a\?a \to a\?b\?a$ can also be proved by arctic matrix
interpretation, now by choosing
\[ 
\arraycolsep=1.4pt 
\li a \ri = \left( \begin{array}{cc} 0 & 1 \\ 0 & 1 \end{array} \right), \;
 \li b \ri = \left( \begin{array}{cc} 0 & -\infty \\ -\infty & -\infty \end{array} \right). \]

\subsection{Examples and Bounds on Reduction Lengths}
As an example of combining various versions of matrix interpretations we consider the system 
from the introduction: 
\[0\?P \to 1\?P,\; 1\?P \to c\?P,\; 0\?c \to 1\?0, \; 1\?c \to c\?0, \; P\?0 \to P\?1\?0\?0\]
for which the following proof is found fully automatically by {\tt torpacyc}.

First the following tropical matrix interpretation is found:
\[ 
\arraycolsep=1.4pt 
\li P \ri = \left( \begin{array}{cc} 0 & \infty \\ 0 & \infty \end{array} \right), \;
 \li 0 \ri = \left( \begin{array}{cc} 2 & 2 \\ 0 & 0 \end{array} \right), \;
 \li 1 \ri = \left( \begin{array}{cc} 2 & 1 \\ \infty & 0 \end{array} \right), \;
 \li c \ri = \left( \begin{array}{cc} 1 & 0 \\ \infty & 0 \end{array}
 \right).\]

\noindent In this interpretation for the first four rules we have $\li \ell \ri \geq \li r \ri$ and 
for the last rule we have $\li \ell \ri > \li r \ri$. So by Theorem \ref{thmmatr}
the last rule may be removed, and {\tt torpacyc} continues with the remaining four rules.

Next  the following natural matrix interpretation is found: 
\[ 
\arraycolsep=1.4pt 
\li P \ri = \left( \begin{array}{cc} 1 & 0 \\ 2 & 0 \end{array} \right), \;
 \li 0 \ri = \left( \begin{array}{cc} 1 & 2 \\ 0 & 2 \end{array} \right), \;
 \li 1 \ri = \left( \begin{array}{cc} 1 & 1 \\ 0 & 2 \end{array} \right), \;
 \li c \ri = \left( \begin{array}{cc} 1 & 0 \\ 0 & 2 \end{array} \right).\]
For these interpretations we obtain $\li 0\?P \ri > \li 1\?P \ri$,
$\li 1\?P \ri > \li c\?P \ri$, $\li 0\?c \ri \geq \li 1\?0 \ri$ and $\li 1\?c \ri \geq \li c\?0 \ri$, hence
by Theorem \ref{thmmatr} it suffices to prove cycle termination of $0\?c \to 1\?0, \; 1\?c \to c\?0$, for which
{\tt torpacyc} finds a simple counting argument. These simple counting arguments can be seen as
the instance of tropical matrix interpretations of dimension $d=1$, not using $\infty$: 
interpret every $a \in
\Sigma$ by a natural number, and then $\li u \ri$ is the sum of the interpretations of
all symbols in $u$.

The method for proving cycle termination induced by Theorem~\ref{thmmatr} has similar limitations as 
the method of matrix interpretations in \cite{HW06-rta} for string termination: Since the entries 
of a product of $n$ matrices are bounded by an exponential function in $n$, the method cannot prove
cycle termination of systems which allow reduction sequences where every rewrite rule is applied
more often than exponentially often.

\begin{example}\label{expl:tower}
The rewrite system $R_1 := \{a\?b \to b\?c\?a, c\?b \to b\?b\?c\}$  allows for string derivations of a length 
which is a tower of exponentials (see \cite{HW06-rta}),
{i.e.}~the string $a^k\?b^k$ has such a long derivation, since 
the derivation $a\?b^n \to_{R_1}^*  b^{2^n-1}\?c^n\?a$ exists and this can be iterated for every $a$ in $a^k$.
Moreover, the number of applications of the first and of the second rule of $R_1$ is a tower of exponentials.
This shows that the matrix interpretations in \cite{HW06-rta} are unable to prove string termination of $R_1$.
The system $\phi(R_1) := \{
R\?E \to L\?E,
 a\?L \to L\?a', 
 b\?L \to L\?b',
 c\?L \to L\?c',
 R\?a' \to a\?R,
 R\?b' \to b\?R,
 R\?c' \to c\?R,
 a\?b\?L \to b\?c\?a\?R,
 c\?b\?L \to b\?b\?c\?R\}$ uses the transformation $\phi$ from \cite{ZKB14} and transforms the string rewrite system $R_1$ into a cycle rewrite system {s.t.}~$R_1$ is string terminating iff $\phi(R_1)$ is cycle terminating.
One can verify that $[a\?b^n\?L\?E] \cycto_{\phi(R_1)}^* [b^{2^n-1}\?c^n\?a\?L\?E]$ which can also be iterated
{s.t.}~$[a^k\?b^k\?L\?E]$ has a cycle rewriting sequence whose length is a tower of $k$ exponentials. 
Inspecting all nine rules of $\phi(R_1)$, the number of applications of any of the rules in this rewrite sequence
is also a tower  of $k$ exponentials and thus it is impossible to prove cycle termination of $\phi(R_1)$ using
Theorem~\ref{thmmatr}. Consequently, our tool \texttt{torpacyc} does not find a termination proof for $\phi(R_1)$.
\end{example}

\begin{remark}
As expected our tool \texttt{torpacyc} does not find a termination proof for $\phi(R_1)$ from Example~\ref{expl:tower}.
On the other hand, with our transformational approach cycle termination can be proved:
\aprove~proves string termination of $\sym{split}(\phi(R_1))$. 
\end{remark}

A further question is whether matrix interpretations are limited to 
cycle rewrite systems with exponential derivation lengths only.
The following example shows that this is not true:

\begin{example}\label{expl:doubly-exp}
The SRS $R_2 := \{a\?b \to b\?a\?a, c\? b \to b\?b\?c\}$ (see \cite{HW06-rta}) has derivations of doubly exponential
length (since $a\?c^k\?b \to_{R_2}^* b^{2^k}\?a^{2^{2^k}}\?c^k$ and any rewrite step adds one symbol),
but its string termination can be proved by relative termination and matrix interpretations by first
removing the rule $c\? b \to b\?b\?c$ and then removing the other rule. This is possible, since the second
rule is  applied only exponentially often. For cycle rewriting the encoding $\phi$ from \cite{ZKB14}
is $\phi(R_2) = \{
R\?E\to L\?E, 
a\?L \to  L\?a',
b\?L \to L\? b', 
c\?L  \to L\?c',
R\?a'  \to a\?R,
R\?b' \to b\?R,
R\?c'  \to c\?R,
a\?b\?L  \to b\?a\?a\?R,
c\?b\?L  \to b\?b\?c\?R\}$ and $\phi(R_2)$ is cycle terminating iff $R_2$ is string terminating.
The system $\phi(R_2)$ also has doubly exponential cycle derivations, {e.g.}~$[a\?c^k\?b\?L\?E] \cycto_{\phi(R_2)}^* [b^{2^k}\?a^{2^{2^k}}\?c^k\?L\?E]$. However, \texttt{torpacyc} proves
cycle termination of $\phi(R_2)$ by first removing the last rule using the matrix interpretation
\[ 
\arraycolsep=1.4pt 
\begin{array}{@{}c@{}}
\li R \ri = \left( \begin{array}{cc}  1 & 2 \\  1 & 0  \end{array}\right),
\li E \ri = \left( \begin{array}{cc}  2 & 0 \\  0 & 0  \end{array}\right),
\li L \ri = \left( \begin{array}{cc}  1 & 2 \\  1 & 0  \end{array}\right),
\li a \ri = \left( \begin{array}{cc}  1 & 0  \\ 0 & 1  \end{array}\right),
\li a' \ri = \left( \begin{array}{cc}  1& 0 \\ 0& 1  \end{array}\right),
\\
\li b \ri = \left( \begin{array}{cc}  1 &2 \\ 0 &1  \end{array}\right),
\li b' \ri = \left( \begin{array}{cc}  1 &0 \\ 1 &1  \end{array}\right),
\li c \ri = \left( \begin{array}{cc}  3 &0 \\ 0 &1  \end{array}\right),
\li c' \ri = \left( \begin{array}{cc}  1 &0 \\ 1 &3  \end{array}\right).
\end{array} \]
Thereafter the remaining rules (which now only have derivations of exponential length)
are eliminated by matrix interpretations and counting arguments.

Note that the methods in \cite{ZKB14} are not able to prove cycle termination of
$\phi(R_2)$, since they can only remove rules which are applied polynomially often in any
derivation. 

Also the transformational approach successfully proves cycle termination of  $\phi(R_2)$:
\TTTT{} proves string termination of $\sym{split}(\phi(R_2))$. 
Interestingly, we did not find a termination proof of $\sym{split}(\phi(R_2))$ using \aprove.
\end{example}

\section{Tools and Experimental Results}
\label{secexp}
In this section we first explain which tools and which benchmark sets 
were used to evaluate the techniques presented in this paper. Thereafter we summarize and analyze
the obtained results.
\subsection{Tools for Proving Cycle Termination}
We implemented several tools for proving cycle termination and cycle non-termination, 
which will be explained in this section. 

The command line tool {\tt cycsrs} is mainly a wrapper which allows to call different
other tools and termination provers. It also allows to call the different tools
in succession on a given termination problem and to distribute the time limit among the tools.
The tool participated in the category on cycle rewriting in the Termination Competition 2015 \cite{termcomp}
and in the Termination and Complexity Competition 2016 \cite{termcomp16}.

For the transformational approaches, {\tt cycsrs} is able to perform the transformations
$\sym{split}$, $\sym{rotate}$, or $\sym{shift}$ for a given input problem and then it calls
the termination provers $\aprove$ or \TTTT~to prove string termination of the transformed problem.
Analogously, {\tt cycsrs} can apply  transformations $\sym{split}_{\mathit{rel}}$, 
$\sym{rotate}_{\mathit{rel}}$, and $\sym{shift}_{\mathit{rel}}$, to enable the proof of relative cycle termination
by showing relative string termination.

For the search for matrix interpretations described in Section~\ref{sec:matrix-interpretations},
 we implemented two tools:
The prover \texttt{torpacyc} searches for matrix interpretations by 
using the SMT-solver {\tt yices} targeting the logic of 
unquantified linear integer arithmetic with uninterpreted sort and function symbols. 

The tool \texttt{tdmi} uses a similar approach, applying  {\tt yices} to find matrix interpretations, but targets the logic of quantifier-free formulas over
the theory of fixed-size bit vectors (similarly as proposed in \cite{CFFGW:12}).  
Moreover, \texttt{tdmi} is able to prove relative termination.
Another difference is that {\tt torpacyc} checks for match bound proofs, see \cite{ZKB14}. Match bound proofs can be seen as proofs by tropical matrix interpretation, but with dimensions of often more than 100, being far beyond the dimension that is feasible for direct search for tropical matrix interpretations, typically being 3.

For proving non-termination, one technique is to prove string non-termination 
of the cycle rewrite problem by using a termination prover like $\aprove$ or 
\TTTT~(which is correct due to Proposition~\ref{prop:properties-rta14}).
Another technique is to apply one of the transformations and then proving string non-termination 
of the transformed problem (which is correct since the transformations are complete).
Additionally, we implemented a tool {\tt cycnt} which performs a brute-force search for 
cycle non-termination.

The automatic transformation and the prover can also be used online via a web interface
available via \url{http://www.ki.informatik.uni-frankfurt.de/research/cycsrs/} where also
the tools and experimental data can be found.

\subsection{Benchmark Sets and Test Environment}
A problem for doing experiments is that no real appropriate test set for cycle rewriting is available. 
We played around with several examples like the ones in this paper, but we also 
wanted to test larger scale experiments. For proving cycle termination, we use 
the  SRS\_Standard-branch of the Termination Problem Data Base \cite{tpdb} (TPDB), which is a benchmark set 
for proving termination of string rewrite systems, but it was also used as problem set in the
cycle rewriting category of the Termination Competition 2015 and slightly extended by some new problems in 2016.
This set contains 1315 termination problems.

For relative cycle termination, we use the SRS\_Relative-branch of the Termination Problem Data Base,
which is a benchmark set containing 205 relative string termination problems.

In \cite{sabel-zantema:15} we also tested some of the techniques on 50000 randomly generated string rewrite systems.
We excluded them in the new tests, since most of the problems where either trivially cycle terminating
or could easily be proved to be cycle non-terminating.
A further difference to the benchmarks presented in \cite{sabel-zantema:15} is the test environment: 
the current benchmarks where all performed on StarExec \cite{starexec} --
a cross community logic solving service --
which made it possible to rapidly run the tests
and thus also to perform tests with higher time-outs.

\subsection{Experimental Results on Cycle Termination}

\begin{table}
\begin{center}
\begin{tabular}{ll||c|c|c||c|c|c}
\multirow{2}{1in}{SRS\_Standard}            &&\multicolumn{3}{c||}{Time limit 60 secs.} &\multicolumn{3}{c}{Time limit 300 secs.}
\\
                               &       &YES     &NO    &MAYBE  &YES    &NO     &MAYBE
\\
\multicolumn{8}{l}{\cellcolor{black!20!white}\rule{0ex}{3ex}\bf Transformational Approach}
\\[.5ex]
\multirow{3}{1in}{$\sym{split}$}&\aprove&26     &316    &973    &41     &324    &950
\\
                                &\TTTT  &27     &164    &1124   &46     &180    &1089 
\\
                                &any    &35     &316    &964    &53     &324    &938
\\
 \hline
\multirow{3}{1in}{$\sym{rotate}$}&\aprove&10    &48     &1257   &10     &53     &1252
\\
                                 &\TTTT  &3     &0      &1312   &9      &0      &1306
\\
                                 &any    &10    &48     &1257   &11     &53     &1251
\\
 \hline
\multirow{3}{1in}{$\sym{shift}$} &\aprove&10    &79     &1226   &10     &89     &1216
\\
                                 &\TTTT  &8     &0      &1307   &8      &1      &1306
\\
                                 &any    &10    &79     &1226   &10     &89     &1216
\\
\multicolumn{8}{l}{\cellcolor{black!20!white}\rule{0ex}{3ex}\bf Trace-Decreasing Matrix Interpretations}
\\[.5ex]
\multicolumn{2}{l||}{\texttt{torpacyc}}          &44     &0      &1271   &49     &0      &  1266 
\\
\multicolumn{2}{l||}{\texttt{tdmi}}              &31     &0      &1284   &46     &0      &  1269
\\
\multicolumn{2}{l||}{any}                        &48     &0      &1267   &63     &0      &  1252
\\

\multicolumn{8}{l}{\cellcolor{black!20!white}\rule{0ex}{3ex}\bf Non-Termination Check}
\\[.5ex]
\multicolumn{2}{l||}{\texttt{cycnt}}             &0      &580    &735    &0      &588    &727
\\            
\multicolumn{2}{l||}{\aprove~(SRS)}              & 0      & 99      & 1216    &0      &109     &1206
\\            
\multicolumn{2}{l||}{\TTTT~(SRS)}                & 0      & 33      & 1282    &0      &62     &1253
\\
\multicolumn{2}{l||}{any}                        &0     &583      &732   &0     &591      &  724
\\[.5ex]
\multicolumn{8}{l}{\cellcolor{black!20!white}\rule{0ex}{3ex}\bf Combination of Techniques}
\\[.5ex]
\multicolumn{2}{l||}{back-end: \aprove} &51     &511    &753    &62     &540    &713
\\ 
\multicolumn{2}{l||}{back-end: \TTTT}   &50     &511    &754    &68     &539    &708
\\
\rowcolor{black!90!white}\multicolumn{2}{l||}{\color{white}\rule{0ex}{3ex}\bf any}               
                                        &\color{white}\rule{0ex}{3ex}\bf 60
                                                &\color{white}\rule{0ex}{3ex}\bf 583
                                                      &\color{white}\rule{0ex}{3ex}\bf 682    
                                                                &\color{white}\rule{0ex}{3ex}\bf 83     
                                                                        &\color{white}\rule{0ex}{3ex}\bf 591
                                                                                &\color{white}\rule{0ex}{3ex}\bf 651
\\[.5ex]
\end{tabular}
\caption{Experimental results for proving cycle (non)-termination on the 1315 problems of the SRS\_Standard branch of the TPDB\label{table:results-cycletermination}}
\end{center}
\end{table}
Table~\ref{table:results-cycletermination} summarizes the obtained results for running several
techniques to prove cycle termination on the SRS\_Standard-branch of the TPDB.
The first column lists the different techniques, thereafter there are two main columns: the first one
lists the results where a time limit of 60 secs.\ was used, and the second main column lists the
result for a time limit of 300 secs. 
There are three kinds of results: YES means that cycle termination has been proved,
NO means that cycle non-termination has been proved (and thus cycle termination was disproved),
and MAYBE means that no result was obtained.

The rows of Table~\ref{table:results-cycletermination} consist of five main parts.
\begin{itemize}
\item The first part consists of the results for the transformational approach, where each of
the three transformations $\sym{split}$, $\sym{rotate}$, and $\sym{shift}$ was applied to
the input problem and thereafter either $\aprove$ or \TTTT~was applied to the transformed
system. We also list the numbers for combining the results of the two solvers 
(in the rows named ``any''), which sometimes shows that the termination techniques perform
differently on the same problems. 

\item The second part contains the summarized results of applying the two tools \texttt{torpacyc} and \texttt{tdmi}
to the problems. Both tools try to prove cycle termination by searching for matrix interpretations.
Note that both tools are unable to show cycle non-termination. Again the row named with ``any'' combines
the results of both techniques.

\item The third part shows the result of applying the brute-force search for cycle non-termination, using our tool \texttt{cycnt},
and the results of applying $\aprove$ and \TTTT~to prove string non-termination of the input problem (which implies cycle termination).
We also list the numbers for combining the three techniques to prove cycle non-termination (listed in the row named ``any'').

\item The fourth part consists of the results for a combination of the termination techniques. 
Here we use $\aprove$ or \TTTT, respectively, as back-end prover to show string termination and non-termination.
In detail, the command line tool {\tt cycsrs} is used to first run {\tt torpacyc} (with 25\%  of the time limit), 
then {\tt tdmi} (14 \%), 
then the back-end prover (9 \%) and {\tt cycnt} (10 \%) to prove cycle non-termination, and finally 
to apply the transformation $\sym{split}$ together with a subsequent call
of the back-end prover to show (non-)termination of the transformed system (42 \%).

\item The last row of the table combines all results of the previous rows (where the YES- and NO-results
are summed up per problem).
\end{itemize}

\noindent An overall observation is that our techniques were able to obtain a 
result for about the half of the problems, while the other half of the problems
seem to be too hard to be proved by the techniques.
This is not really surprising since the test set contains already `hard' instances
for proving {\em string termination}: In the Termination Competition 2015 the winning 
prover $\aprove$ solved 832 out of the 1325 string termination problems (with a time limit of 300 secs.).

Considering cycle termination, we were now able to solve 643 problems
with a time limit of 60 secs.\ (by combining all of our techniques), while in \cite{sabel-zantema:15}
only 399 problems were solved in the same time limit and on the same problem set 
(but on a different environment, not on StarExec).
This increase in the number of solved problems mainly comes from adding the tool {\tt cycnt} to search for
cycle non-termination.
The results for checking non-termination also show that in the benchmark set many problems seem to cycle non-terminating,
while they are string terminating.
We expected this, since a substantial part of the problems may contain a renaming of the rule $ab \to ba$. 

A further observation is that increasing the time limit allows to solve more problems,
where the increase on proving cycle non-termination is rather small, while for cycle termination
the increase is noticeable.

Comparing the three transformations, the transformation $\sym{split}$ leads to much better
results than the other two transformations, which holds for termination and for non-termination proofs.

For proving cycle termination, problem specific methods (i.e. \texttt{torpacyc} and \texttt{tdmi}) seem to 
perform a little bit better than the transformational method using $\sym{split}$. 

Comparing the back-end prover, there is surprisingly no 
clear winner: $\aprove$ seems to perform better for short time limits and for proving non-termination, while \TTTT~seems to perform better
for longer time limits.

\subsection{Relative Cycle Termination}

\begin{table}

\begin{center}
\begin{tabular}{ll||c|c|c||c|c|c}
\multirow{2}{1in}{SRS\_Relative}            &&\multicolumn{3}{c||}{Time limit 60 secs.} &\multicolumn{3}{c}{Time limit 300 secs.}
\\
                                            &       &YES     &NO    &MAYBE  &YES    &NO     &MAYBE
\\
\multicolumn{8}{l}{\cellcolor{black!20!white}\rule{0ex}{3ex}\bf Transformational Approach}
\\[.5ex]
\multirow{3}{1in}{$\sym{split}$}&\aprove& 8      & 7      & 190    &14     &7      &184
\\
                                &\TTTT  & 1      & 8      & 196    &0      &8      &197
\\                                
                                &any    & 8      & 8      & 189    &14     &8      &183
                                
\\
 \hline

\multirow{3}{1in}{$\sym{rotate}$}&\aprove& 1     & 0      & 204    &1      &0      &204
\\
                                 &\TTTT  & 0     & 0      & 205    &0      &0      &205
\\
                                 &any    & 1     & 0      & 204    &1      &0      &204
\\
 \hline
\multirow{3}{1in}{$\sym{shift}$} &\aprove& 1     & 0      & 204    &2      &0      &203
\\
                                 &\TTTT  & 0     & 0      & 205    &0      &1      &204
\\
                                 &any    & 1     & 0      & 204    &2      &1      &202
\\
\multicolumn{8}{l}{\cellcolor{black!20!white}\rule{0ex}{3ex}\bf Trace-Decreasing Matrix Interpretations}
\\[.5ex]
\multicolumn{2}{l||}{\texttt{tdmi}}              & 11         & 0      & 194    &21     &0      &184
\\
\multicolumn{8}{l}{\cellcolor{black!20!white}\rule{0ex}{3ex}\bf Non-Termination Check}
\\[.5ex]
\multicolumn{2}{l||}{\texttt{cycnt}}             & 0      & 13     & 192    &0      &13     &192
\\            
\multicolumn{2}{l||}{\aprove~(SRS)}              & 0      & 1      & 204    &0      &1     &204
\\            
\multicolumn{2}{l||}{\TTTT~(SRS)}                & 0      & 1      & 204    &0      &1     &204
\\            
\multicolumn{2}{l||}{any}                        & 0     & 13      & 192   &0     &13     &  192
\\
\multicolumn{8}{l}{\cellcolor{black!20!white}\rule{0ex}{3ex}\bf Combination of Techniques}
\\[.5ex]
\multicolumn{2}{l||}{back-end: \aprove}  & 10     & 13     & 182    &17     &13     &175
\\ 
\multicolumn{2}{l||}{back-end: \TTTT}   & 8      & 13     & 184    &16     &13     &176
\\
\rowcolor{black!90!white}\multicolumn{2}{l||}{\color{white}\rule{0ex}{3ex}\bf any}               
                                        &\color{white}\rule{0ex}{3ex}\bf  12
                                                &\color{white}\rule{0ex}{3ex}\bf  13
                                                      &\color{white}\rule{0ex}{3ex}\bf  180
                                                                &\color{white}\rule{0ex}{3ex}\bf  21
                                                                        &\color{white}\rule{0ex}{3ex}\bf  13
                                                                                &\color{white}\rule{0ex}{3ex}\bf  171
\\[.5ex]
\end{tabular}
\caption{Experimental results for proving relative cycle (non)-termination on the 205 problems of the SRS\_Relative branch of the TPDB\label{table:results-relative}}
\end{center}
\end{table}
Table~\ref{table:results-relative} summarizes the results on applying different techniques
to the 205 problems in the SRS\_Relative-branch of the TPDB.
Again, the tests were run with a time limit of 60 secs.\ and with a time limit of 300 secs.
Since {\tt torpacyc} is not able to prove relative termination, there
are no tests using {\tt torpacyc}. 

One observation of the results is that again the transformation $\mathit{split}_{\mathit{rel}}$ 
leads to better results than the other transformations. A further observation is that
the problem specific methods (the matrix interpretations to prove relative cycle termination
and {\tt cycnt} to disprove cycle termination) perform slightly better than the transformational approach.

Finally, the number of solved problems is quite small compared to the number of problems.
One reason may be that the benchmark set contains only few small rewrite systems, and 
many large problems. Large problems are disadvantageous for the transformational approach, since
the transformations (especially the transformation $\sym{split}$) increases the size of the problem.

\subsection{Proving Cycle Termination by Relative String Termination}
To prove cycle termination of a string rewrite system $S$, 
we can use all the techniques for proving relative cycle termination 
by using all rules of $S$ as strict rules (and thus there are no weak rules).
This usually does not lead to an improvement by applying automated termination tools, 
since the problem of showing $S$ being terminating relative to $S$ is equal to showing 
that $S$ is terminating.
However, for the transformational approaches the setting is different. Let $\psi$ be one of the sound and complete transformations 
and $\psi_{\mathit{rel}}$ its variant for relative cycle termination.
Since for an SRS $S$, the transformation $\psi_{\mathit{rel}}(S,S)$ results in  $(S',R')$
where this inclusion is strict, {i.e.} $S' \subset R'$, it makes sense to analyse whether transforming 
$S$ into $(S',R')$ 
and subsequently proving that $S'$ is string terminating relative to $R'$
enables further (non-)termination proofs compared to trying to prove string termination of $\psi(S)$.

For each problem in SRS\_Standard-branch of the TPDB we applied 
each combination of a transformation and the two termination provers
$\aprove$ and \TTTT.
Table~\ref{table-rel-nonrel} summarizes the results of this analysis,
where we used a time limit of 300 secs.
The numbers of proved, disproved, and open problems are listed: 
first for the usual approach to prove string termination of $\psi(S)$, 
secondly for proving relative string termination of $\psi_{\mathit{rel}}(S,S)$, 
and as a third row the combined results  are shown (in the rows labeled with ``any'').

\begin{table}
\begin{center}
\begin{tabular}{l||c|c|c||c|c|c}
SRS\_Standard &\multicolumn{3}{c||}{$\aprove$}  &\multicolumn{3}{c}{\TTTT}
\\
Time limit 300 secs. &YES&   NO& MAYBE  &  YES&   NO& MAYBE  
\\
\hline\hline
{$\sym{split}$}                     &41&324&950&46&180&1089
\\
{$\sym{split}_{\mathit{rel}}$}      &26&179&1110&8&213&1094
\\
any                                 &41&324&950&46&216&1053
\\
\hline
{$\sym{rotate}$}                     &10&53&1252&9&0&1306
\\
{$\sym{rotate}_{\mathit{rel}}$}      &10&1&1304&2&1&1312
\\
any                                  &10&53&1252&9&1&1305
\\
\hline
{$\sym{shift}$}                      &10&89&1216&8&1&1306
\\
{$\sym{shift}_{\mathit{rel}}$}      &10&6&1299&4&13&1298
\\
any                                 &10&89&1216&8&13&1294
\end{tabular}

\caption{Results for proving cycle termination of the 1315 problems in the SRS\_Standard branch of the TPDB by transformation into string termination problems and relative string termination problems \label{table-rel-nonrel}}

\end{center}
\end{table}

The results show that indeed also the technique using relative string termination as target 
of the transformation works for several problems. 
However, they lead to new cycle (non-)termination proofs in very rare cases.

\section{Conclusions}
\label{secconcl}
We presented techniques to prove termination and relative termination for cycle rewriting. The main approach is
to apply a sound and complete transformation from cycle into string rewriting.
We presented and analyzed three such transformations, both for termination and relative termination. 
Apart from that we provided a framework covering several variants of matrix interpretations serving for
proving (relative) cycle termination.
Our implementations and the corresponding experimental results show that both techniques are useful in the sense that 
they apply for several examples for which the earlier techniques failed.

Together with the sound and complete transformation $\phi$ in the reverse direction from \cite{ZKB14}, the existence of a sound and complete transformation like 
$\sym{split}$ implies that the problems of cycle termination and string termination of SRSs are equivalent in a strong sense. For instance, 
it implies that they are in the same level of the arithmetic hierarchy, which is $\Pi^0_2$-complete along the lines of \cite{EGSZ11}. 
Alternatively, $\Pi^0_2$-completeness of cycle termination can be concluded from the sound and complete transformation $\phi$ combined with the observation that cycle termination is in $\Pi^0_2$.

For future research, there is a need for more sophisticated techniques to prove cycle non-termination,
since we conjecture that {e.g.} several open problems in the benchmark sets are cycle non-terminating,
but our tools are not able to find a corresponding proof. A promising non-termination technique may follow
the ideas of \cite{endrullis-zantema:15}, some first steps in this direction have been worked out in \cite{zantema-fedotov:16}.

For a fair comparison of tools for (non-)termination of cycle rewriting, a set of benchmarks is needed with more focus on cycle rewriting, 
rather than the current set in which nearly all systems are copied from benchmarks for string rewriting.

\section*{Acknowledgments}
We thank Johannes Waldmann for fruitful remarks, in particular for his suggestions leading to Section \ref{secmatr} on trace decreasing matrix interpretations.
We also thank the anonymous reviewers of RTA 2015 for their valuable comments on the topic,
and the anonymous reviewers of this journal version for their very careful reading and valuable suggestions and comments.

\bibliographystyle{plain}

\end{document}